\numberwithin{equation}{section}
\numberwithin{figure}{section}
\tikzstyle{normalvertex}=[circle,fill=white,draw=black]
\tikzstyle{tinyvertex}=[circle,fill=white,draw=black,minimum size=5pt,inner sep=0pt]
\definecolor{darkpastelgreen}{rgb}{0.01, 0.75, 0.24}
\newtheorem{assumption}[theorem]{Assumption}
\newcommand{\notleftright}{\mathrel{\ooalign{$\Leftrightarrow$\cr\hidewidth$/$\hidewidth}}}
\DeclareMathOperator{\dist}{dist}
\DeclareMathOperator{\ISOTYPE}{ISOTYPE}
\DeclareMathOperator{\BP}{BP}
\DeclareMathOperator{\width}{width}
\DeclareMathOperator{\tw}{tw}
\DeclareMathOperator{\CFI}{CFI}
\DeclareMathOperator{\twCFI}{{CFI^{\textsf{x}}}}
\DeclareMathOperator{\dimWL}{dim_{\sf WL}}
\title{The Power of the Weisfeiler-Leman Algorithm to Decompose Graphs}
\titlerunning{Weisfeiler-Leman and Graph Separators} 
\author{Sandra Kiefer}{RWTH Aachen University, Aachen, Germany}{kiefer@cs.rwth-aachen.de}{https://orcid.org/0000-0003-4614-9444}{}
\author{Daniel Neuen}{CISPA Helmholtz Center for Information Security, Saarbrücken, Germany}{daniel.neuen@cispa.saarland}{https://orcid.org/0000-0002-4940-0318}{}
\authorrunning{S. Kiefer and D. Neuen} 
\keywords{Weisfeiler-Leman algorithm, separators, first-order logic, counting quantifiers} 
\begin{document}

\maketitle

\begin{abstract}
 The Weisfeiler-Leman procedure is a widely-used technique for graph isomorphism testing that works by iteratively computing an isomorphism-invariant coloring of vertex tuples.
 Meanwhile, a fundamental tool in structural graph theory, which is often exploited in approaches to tackle the graph isomorphism problem, is the decomposition into $2$- and $3$-connected components.
 
 We prove that the $2$-dimensional Weisfeiler-Leman algorithm implicitly computes the decomposition of a graph into its $3$-connected components.
 This implies that the dimension of the algorithm needed to distinguish two given non-isomorphic graphs is at most the dimension required to distinguish non-isomorphic $3$-connected components of the graphs (assuming dimension at least~$2$). 
 
 To obtain our decomposition result, we show that, for $k \geq 2$, the $k$-dimensional algorithm distinguishes $k$-separators, i.e., $k$-tuples of vertices that separate the graph, from other vertex $k$-tuples.
 As a byproduct, we also obtain insights about the connectivity of constituent graphs of association schemes.
 
 In an application of the results, we show the new upper bound of $k$ on the Weisfeiler-Leman dimension of the class of graphs of treewidth at most $k$.
 Using a construction by Cai, F\"urer, and Immerman, we also provide a new lower bound that is asymptotically tight up to a factor of $2$.
\end{abstract}

\section{Introduction}

Originally introduced in \cite{WeisfeilerL68}, the Weisfeiler-Leman (WL) algorithm has become a -- if not the -- fundamental combinatorial subroutine in the context of isomorphism testing for graphs.
It is used both in theoretical and in practical approaches to tackle the graph isomorphism problem (see, e.g., \cite{BabaiCSTW13,GroheN19,McKay81,McKayP14,Neuen16}), 
among them is also Babai's recent quasipolynomial-time isomorphism test \cite{Babai16}.
For every $k \geq 1$, there is a $k$-dimensional version of the algorithm ($k$-WL) which colors the vertex $k$-tuples of the input graph and iteratively refines the coloring in an isomorphism-invariant manner.

There are various characterizations of the algorithm, which link it to other areas in theoretical computer science (see also \hyperref[ref:frw]{\textit{Further Related Work}}).
For example, recent results in the context of machine learning show that $1$-WL is as expressive as graph neural networks with respect to distinguishing graphs \cite{MorrisRFHLRG18}.
Following Grohe \cite{Grohe17}, an indicator to investigate the expressive power of the algorithm is the so-called \emph{WL dimension} of a graph, defined as the minimal dimension of the WL algorithm required in order to distinguish the graph from every other non-isomorphic graph.

There is no fixed dimension of the algorithm that decides graph isomorphism in general, as was proved by Cai, F\"urer, and Immerman \cite{CaiFI92}.
Still, when focusing on particular graph classes, often a bounded dimension of the algorithm suffices to identify every graph in the class.
This proves that for the considered class, graph isomorphism is solvable in polynomial time, since $k$-WL can be implemented to run in time $\mathcal{O}(n^{k+1} \log n)$ \cite{IL90}.
For example, it suffices to apply $3$-WL to identify every planar graph \cite{kieponschwei19}.
Also, the WL dimension of graphs of treewidth at most $k$ is bounded by $k+2$ \cite{GroheM99}.
More generally, by a celebrated result due to Grohe, for all graph classes with an excluded minor, the WL dimension is bounded \cite{Grohe12}.
Recent work provides explicit upper bounds on the WL dimension, which are linear in the rank width \cite{GroheN19} and in the Euler genus \cite{GroheK19}, respectively, of the graph.

Regarding combinatorial techniques, to handle graphs with complex structures, decompositions into connected, biconnected, and triconnected components provide an important tool from structural graph theory.
They can be computed in linear time (see, e.g., \cite{HopcroftT73,Tutte84}).
Hopcroft and Tarjan used the decomposition of a graph into its triconnected components to obtain an algorithm that decides isomorphism for planar graphs in quasi-linear time \cite{HopcroftT71,HopcroftT72,HopcroftT73-a}, which was improved to linear time by Hopcroft and Wong \cite{HopcroftW74}. 

Also, in \cite{kieponschwei19}, to prove the bound on the WL dimension for the class of planar graphs, the challenge of distinguishing two arbitrary planar graphs is reduced to the case of two arc-colored triconnected planar graphs, by exploiting the fact that $3$-WL is able to implicitly compute the decomposition of a graph into its triconnected components.
Similarly, the bound on the WL dimension for graphs parameterized by their Euler genus from \cite{GroheK19} relies on an isomorphism-invariant decomposition of the graphs into their triconnected components.

\subparagraph{Our Contribution}

For every $k \geq 2$, we show that $k$-WL implicitly computes the decomposition into the triconnected components of a given graph.
More specifically, we prove that already $2$-WL distinguishes separating pairs, i.e., pairs of vertices that separate the given graph, from other vertex pairs.
This improves on a result from \cite{kieponschwei19}, where an analogous statement was proved for $3$-WL.
Using the decomposition techniques discussed there, we conclude that, for $k$-WL with $k \geq 2$, to identify a graph, it suffices to determine vertex orbits on all arc-colored $3$-connected components of it.
Since it is easy to see that $k = 1$ does not suffice to distinguish vertices contained in $2$-separators from others (see, e.g., \cite[Chapter~7]{kiefer:phd}), our upper bound of~$2$ is tight.

The expressive power of $k$-WL corresponds to definability in the logic ${\sf C}^{k+1}$, the extension of the $(k+1)$-variable fragment of first-order logic by counting quantifiers \cite{CaiFI92,IL90}.
Exploiting this correspondence, our results imply that, for every $n \in \mathbb{N}$, there is a formula $\varphi_n(x_1,x_2) \in {\sf C}^{3}$ such that for every $n$-vertex graph $G$, it holds that $G \models \varphi_n(v,w)$ if and only if $\{v,w\}$ is a $2$-separator in $G$ (where 
$G\models\varphi(v,w)$ means that $G$ satisfies $\varphi$ if
$x_1$ is mapped to $v$ and $x_2$ is mapped to $w$). To obtain such a formula with only three variables at our disposal, it is not possible to take the route of \cite{kieponschwei19} by comparing certain numbers of walks between different pairs of vertices.
Instead, the formulas obtained from our proof are essentially a disjunction over all $n$-vertex graphs and subformulas for two distinct graphs may look completely different, exploiting specific structural properties of the graphs.
While this makes the proof rather involved, it also stresses the power of $2$-WL and equivalently, the expressive power of the logic ${\sf C}^{3}$. We actually show that for all $n,s \in \mathbb{N}$, there is a formula $\varphi_{n,s}(x_1,x_2,x_3) \in {\sf C}^{3}$ such that, for every $n$-vertex graph $G$, it holds that $G \models \varphi_{n,s}(u,v,w)$ if and only if $s = |C|$, where~$C$ is the vertex set of the connected component containing $u$ after removing $v$ and $w$ from the graph $G$.

Our result can also be viewed in a combinatorial setting.
In 1985, Brouwer and Mesner~\cite{BrouwerM85} proved that the vertex connectivity of a strongly regular graph equals its degree and that, in fact, the only minimal disconnecting vertex sets are vertex neighborhoods.
Later, Brouwer conjectured this to be true for every constituent graph of an association scheme (i.e., every graph consisting in a single color class of the association scheme) \cite{Brouwer96}.
While some progress has been made on certain special cases \cite{EvdokimovP06}, most prominently distance-regular graphs \cite{BrouwerK09}, the general question is still open.
Our results imply that every connected constituent graph of an association scheme is either a cycle or $3$-connected.
Such a statement was previously only known for symmetric association schemes \cite{KodalenM17}, which are far more restricted than the general ones.

A natural use case of these results is to determine or to improve upper bounds on the WL dimension of certain graph classes.
As a first application in this direction, we obtain a new upper bound of $k$ on the WL dimension for graphs of treewidth at most $k$.
Based on~\cite{DawarR07}, we also provide a new lower bound for this graph class, thus delimiting the value of the WL dimension of the class of graphs of treewidth bounded by $k$ to the interval $\left[\lceil\frac{k}{2}\rceil-2, k\right]$.

\subparagraph{Further Related Work}\label{ref:frw}

Apart from its correspondence to counting logics, the WL algorithm has surprising links to other areas.
For example, the algorithm has a close connection to Sherali-Adams relaxations of particular linear programs \cite{AtseriasM13,GroheO15} and captures the same information as certain homomorphism counts \cite{Dvorak10}.
It can also be characterized via winning strategies in so-called pebble games~\cite{Hella96}, which are a particular family of Ehrenfeucht-Fraissë games.
The survey column \cite{Kiefer20} discusses the results of this article in the context of a general treatment of the WL algorithm and its dimension.

As mentioned above, $1$-WL essentially corresponds to graph neural networks.
In order to make them more powerful, the authors of \cite{MorrisRFHLRG18} propose an extension of graph neural networks based on $k$-WL (see also \cite{MorrisRM20}).

Towards understanding the expressive power of the algorithm, in a related direction of research, it has been studied which graph properties the WL algorithm can detect, which may become particularly relevant in the graph-learning framework.
In this context, F\"urer~\cite{Furer17} as well as Arvind et al.\ \cite{ArvindFKV20} have obtained results concerning the ability of the algorithm to detect and count certain subgraphs. 

\section{Preliminaries}
\label{sec:prelim}

\subsection{Graphs}

A \emph{graph} is a pair $G = \big(V(G),E(G)\big)$ of a \emph{vertex set} $V(G)$ and an \emph{edge set} $E(G) \subseteq \big\{\{u,v\} \, \big\vert \, u,v \in V(G)\big\}$.
Unless explicitly stated otherwise, every graph considered in this paper is finite, undirected, and simple (i.e., it contains no loops or multiple edges).
For $v,w \in V(G)$, we also write $vw$ as a shorthand for $\{v,w\}$.
The \emph{neighborhood} of~$v$ is denoted by~$N(v)$, and the \emph{closed neighborhood} of $v$ is $N[v] \coloneqq N(v) \cup \{v\}$.
The \emph{degree} of~$v$, denoted by $\deg(v)$, is the number of edges incident with $v$.
For $A \subseteq V(G)$ we define $N(A) \coloneqq \left(\bigcup_{v \in A}N(v)\right) \setminus A$.

A \emph{walk of length $k$} from $v$ to $w$ is a sequence of vertices $v = u_0,u_1,\dots,u_k = w$ such that $u_{i-1}u_i \in E(G)$ for all $i \in \{1,\dots,k\}$.
A \emph{path of length $k$} from $v$ to $w$ is a walk of length $k$ from $v$ to $w$ for which all occurring vertices are pairwise distinct.
We refer to the \emph{distance} between two vertices $v,w \in V(G)$ by $\dist(v,w)$.
For a set $A \subseteq V(G)$, we denote by $G[A]$ the \emph{induced subgraph} of $G$ on vertex set $A$.
Also, we denote by $G - A$ the subgraph induced by the complement of $A$, that is, the graph $G - A \coloneqq G[V(G) \setminus A]$.
A set $S \subseteq V(G)$ is a \emph{separator} of $G$ if $G - S$ has more connected components than $G$.
A \emph{$k$-separator} of $G$ is a separator of $G$ of size $k$.
A vertex $v \in V(G)$ is a \emph{cut vertex} if $\{v\}$ is a separator of $G$.
The graph $G$ is \emph{$k$-connected} if it is connected and has no separator of size at most $k-1$.

An \emph{isomorphism} from $G$ to another graph $H$ is a bijection $\varphi\colon V(G) \rightarrow V(H)$ that respects the edge relation, that is, for all~$v,w \in V(G)$, it holds that~$vw \in E(G)$ if and only if $\varphi(v)\varphi(w) \in E(H)$.
Two graphs $G$ and $H$ are \emph{isomorphic} ($G \cong H$) if there is an isomorphism from~$G$ to~$H$.
We write $\varphi\colon G\cong H$ to denote that $\varphi$ is an isomorphism from $G$ to $H$.

A \emph{vertex-colored graph} is a tuple $(G,\lambda)$, where $G$ is a graph and $\lambda\colon V(G) \rightarrow \mathcal{C}$ is a mapping into some set $\mathcal{C}$ of colors.
Similarly, an \emph{arc-colored graph} is a tuple $(G,\lambda)$, where $G$ is a graph and $\lambda\colon \{(v,v) \mid v \in V(G)\} \cup \{(u,v) \mid \{u,v\} \in E(G)\} \rightarrow \mathcal{C}$ is a mapping into some color set $\mathcal{C}$.
Typically, $\mathcal{C}$ is chosen to be an initial segment $[n] \coloneqq \{1,\dots,n\}$ of the natural numbers.
Isomorphisms between vertex- and arc-colored graphs have to respect the colors of the vertices and arcs.
Usually, we will not distinguish between a colored graph $(G,\lambda)$ and its uncolored version $G$, and also use $G$ to denote the colored graph $(G,\lambda)$.

We recall the definition of the treewidth of a graph.
For more background on tree decompositions and treewidth, we refer the reader to \cite{Kloks94}.
Let $G$ be a graph. A \emph{tree decomposition} of $G$ is a pair $(T,\beta)$ where $T$ is a tree and $\beta\colon V(T) \rightarrow 2^{V(G)}$ is a function (where $2^{V(G)}$ denotes the power set of $V(G)$) such that
\begin{enumerate}
 \item for every $v \in V(G)$, the set $\{t \in V(T) \mid v \in \beta(t)\}$ is non-empty and induces a connected subgraph in $T$, and
 \item for every $e \in E(G)$, there is a $t \in V(T)$ such that $e \subseteq \beta(t)$.
\end{enumerate}
The sets $\beta(t)$ for $t \in V(T)$ are the \emph{bags} of the tree decomposition. The \emph{width} of a tree decomposition $(T,\beta)$ is
\[\width(T,\beta) \coloneqq \max_{t \in V(T)} |\beta(t)| - 1.\]
The \emph{treewidth} of $G$, denoted by $\tw(G)$, is the minimum width of a tree decomposition of $G$.

\subsection{The Weisfeiler-Leman Algorithm}

Let~$\chi,\chi'\colon V^k \rightarrow \mathcal{C}$ be colorings of the~$k$-tuples of vertices of~$G$, where~$\mathcal{C}$ is some finite set of colors. 
We say $\chi'$ \emph{refines} $\chi$ if for all $\bar{v},\bar{w} \in V^k$, we have that $\chi'(\bar{v}) = \chi'(\bar{w})$ implies $\chi(\bar{v}) = \chi(\bar{w})$.
The $k$-dimensional WL algorithm is a procedure that, given a graph~$G$ and a coloring $\chi$ of its~$k$-tuples of vertices, computes an isomorphism-invariant coloring that refines~$\chi$.
(Here, \emph{isomorphism invariance} means that every isomorphism from the graph to a second graph preserves the colors of vertex $k$-tuples in the computed colorings on the graphs.)

We describe the mechanisms of the algorithm in the following. For an integer~$k > 1$ and a vertex-colored graph~$(G,\lambda)$, we let $\chi_{G,k}^0\colon V^{k} \rightarrow \mathcal{C}$ be the coloring where each $k$-tuple is colored with the isomorphism type of its underlying ordered colored subgraph. 
More formally, $\chi_{G,k}^0(v_1,\dots,v_k) = \chi_{G,k}^0(w_1,\dots,w_k)$ if and only if for all $i\in [k]$, it holds that $\lambda(v_i)= \lambda(w_i)$ and for all $i,j\in [k]$, it holds that $v_i = v_j \Leftrightarrow w_i =w_j$ and $v_iv_j \in E(G) \Leftrightarrow w_iw_j \in E(G)$.
If~$G$ is arc-colored, the arc colors must be respected accordingly.

We then recursively define the coloring~$\chi_{G,k}^{i}$ obtained after $i$ rounds of the algorithm. Let $\chi_{G,k}^{i+1}(v_1, \dots, v_k) \coloneqq (\chi_{G,k}^{i} (v_1, \dots, v_k); \mathcal{M})$,
where~$\mathcal{M}$ is a multiset defined as
\[\big\{\!\!\big\{\big(\chi_{G,k}^{i}(\bar v[w/1]),\chi_{G,k}^{i}(\bar v[w/2]), \dots, \chi_{G,k}^{i}(\bar v[w/k])\big) \, \big\vert \, w\in V \big\}\!\!\big\}\]
where $\bar v[w/i] \coloneqq (v_1,\dots,v_{i-1},w,v_{i+1},\dots,v_k)$.

For $1$-WL (i.e., $k=1$), the definition is similar, but we iterate only over the neighbors of~$v_1$, that is, the multiset $\mathcal{M}$ equals $\{\!\! \{ \chi_{G,1}^i(w) \mid w\in N(v_1) \}\!\!\}$.

By definition, every coloring~$\chi_{G,k}^{i+1}$ induces a refinement of the partition of the~$k$-tuples of vertices of the graph~$G$ into the color classes of~$\chi_{G,k}^{i}$.
Thus, there is a minimal~$i$ such that the partition of the vertex $k$-tuples induced by~$\chi_{G,k}^{i+1}$ is not strictly finer than the one induced by~$\chi_{G,k}^i$.
For this value of~$i$, we call the coloring~$\chi_{G,k}^i$ the \emph{final} coloring of~$G$ and denote it by $\chi_{G,k}$. If $G$ is clear from the context or irrelevant, we also use the notation $\chi_k$ for $\chi_{G,k}$.

The original WL algorithm is its $2$-dimensional variant \cite{WeisfeilerL68}.
Since that version is the central algorithm of this paper, we omit the index $2$ and write $\chi_{G}$ instead of $\chi_{G,2}$.

For each~$k \in \mathbb{N}$, the \emph{$k$-dimensional WL algorithm} ($k$-WL) takes as input a (vertex- or arc-)colored graph $(G, \lambda)$ and returns the final coloring~$\chi_{G,k}$.

For two graphs~$G$ and~$H$, we say that~$k$-WL \emph{distinguishes}~$G$ and~$H$ if there is a color~$c$ such that the sets
$\{\bar{v} \mid \bar{v} \in \big(V(G)\big)^{k}, \chi_{G,k}(\bar{v}) = c\}$ and $\{\bar{w} \mid \bar{w} \in \big(V(H)\big)^{k}, \chi_{H,k}(\bar{w}) = c\}$
have different cardinalities. We write~$G \simeq_k H$ if $k$-WL does not distinguish~$G$ and~$H$.
The algorithm \emph{identifies} $G$ if it distinguishes $G$ from every non-isomorphic graph $H$.

We also regularly exploit the fact that $k$-WL can detect certain graph properties.
Let $f_G\colon (V(G))^\ell \rightarrow S$ be a function defined on $\ell$-tuples of vertices for a graph $G$ (for example, $f_G$ might be the distance function for pairs of vertices).
For $k \geq \ell$, we say that $k$-WL \emph{knows $f_G$} if the implication
\begin{align*}
                 &f_G(v_1,\dots,v_\ell) \neq f_H(w_1,\dots,w_\ell) \\
 \Rightarrow\;\; &\chi_{G,k}(v_1,\dots,v_\ell,\dots,v_\ell) \neq \chi_{H,k}(w_1,\dots,w_\ell,\dots,w_\ell)
\end{align*}
holds for all graphs $G,H$ and vertices $v_1,\dots,v_\ell \in V(G)$ and $w_1,\dots,w_\ell \in V(H)$.
For a property $P_G \subseteq (V(G))^\ell$, we say that $k$-WL \emph{knows $P_G$ (in $G$)} if $k$-WL knows the characteristic function of $P_G$.
For example, the statement that $k$-WL knows whether two vertices are at distance $d$ from each other and that $k$-WL knows the set of pairs of vertices of distance $d$ from each other can be used equivalently.

Finally, if $V(H) \subseteq V(G)$, we say that $k$-WL \emph{knows $H$ (in $G$)} if $k$-WL knows $V(H)$ and~$E(H)$.

\subparagraph{Pebble Games}

For further analysis, it is often cumbersome to work with the WL algorithm directly and more convenient to use the following characterization via pebble games, which is known to capture the same information.
Let $k \in \mathbb{N}$.
For graphs $G$ and $H$ on the same number of vertices and with vertex colorings~$\lambda_G$ and~$\lambda_H$, respectively,
we define the \emph{bijective $k$-pebble game} $\BP_{k}(G,H)$ as follows:
\begin{itemize}
 \item The game has two players called \emph{Spoiler} and \emph{Duplicator}.
 \item The game proceeds in rounds, each of which is associated with a pair of positions
 $(\bar v,\bar w)$ with~$\bar v \in \big(V(G)\big)^\ell$ and~$\bar w \in \big(V(H)\big)^\ell$, where $0 \leq \ell \leq k$.
 \item The initial position of the game is a pair of vertex tuples of equal length $\ell$ with $0 \leq \ell \leq k$. If not specified otherwise, the initial position is the pair $\big((),()\big)$ of empty tuples.
 \item Each round consists of the following steps. Suppose the current position of the game is $(\bar v,\bar w) = ((v_1,\ldots,v_\ell),(w_1,\ldots,w_\ell))$.
  First, Spoiler chooses whether to remove a pair of pebbles or to play a new pair of pebbles.
  The first option is only possible if $\ell > 0$, and the latter option is only possible if $\ell < k$.
  
  If Spoiler wishes to remove a pair of pebbles, he picks some $i \in [\ell]$ and the game moves to position
  $(\bar v\setminus i,\bar w\setminus i)$ where $\bar v \setminus i \coloneqq (v_1,\dots,v_{i-1},v_{i+1},\dots,v_\ell)$, and the tuple ($\bar w \setminus i)$ is defined in the analogous way.
  Otherwise, the following steps are performed.
  \begin{itemize}
   \item[(D)] Duplicator picks a bijection $f\colon V(G) \rightarrow V(H)$.
   \item[(S)] Spoiler chooses $v \in V(G)$ and sets $w \coloneqq f(v)$.
    Then the game moves to position $\big((v_1,\dots,v_\ell,v),(w_1,\dots,w_\ell,w)\big)$.
  \end{itemize}

  Relabel vertices so that~$\big((v_1,\dots,v_\ell),(w_1,\dots,w_\ell)\big)$ is the current position. If mapping each $v_i$ to $w_i$ does not define an isomorphism of the induced subgraphs of $G$ and $H$, Spoiler wins the play.
  More precisely, Spoiler wins if there is an~$i\in [\ell]$ such that $\lambda_G(v_i)\neq \lambda_H(w_i)$, or there are~$i,j\in [\ell]$ such that~$v_i = v_j\notleftright w_i =w_j$ or~$v_iv_j \in E(G)\notleftright w_iw_j \in E(H)$.
  If there is no position of the play such that Spoiler wins, then Duplicator wins.
\end{itemize}

We say that Spoiler (and Duplicator, respectively) \emph{wins $\BP_k(G,H)$} if Spoiler (and Duplicator, respectively) has a winning strategy for the game.
Also, for a position $(\bar v,\bar w)$ with $\bar v \in \big(V(G)\big)^\ell$ and $\bar w \in \big(V(H)\big)^\ell$, we say that Spoiler (and Duplicator, respectively) \emph{wins $\BP_k(G,H)$ from position $(\bar v,\bar w)$} if Spoiler (and Duplicator, respectively) has a winning strategy for the game started at position $(\bar v,\bar w)$.

The next theorem describes the correspondence between the WL algorithm and the introduced pebble games.\footnote{The pebble games in \cite{CaiFI92,Grohe17} are defined slightly differently. Still, a player has a winning strategy in the game described there if and only if they have one in our game and thus, Theorem \ref{thm:eq-wl-pebble-tuples} holds for both versions of the game.}

\begin{theorem}[see, e.g., \cite{CaiFI92, Grohe17}]
 \label{thm:eq-wl-pebble-tuples}
 Let $k \geq 1$.
 Let $G$ and $H$ be two graphs and let $\bar v \in (V(G))^k$ and $\bar w \in (V(H))^k$.
 Then $\chi_{G,k}(\bar v) = \chi_{H,k}(\bar w)$ if and only if Duplicator wins the game $\BP_{k+1}(G,H)$ from the initial position $(\bar v,\bar w)$.
\end{theorem}

\begin{corollary}
 \label{cor:eq-wl-pebble}
 Let $G$ and $H$ be two graphs.
 Then $G \simeq_{k} H$ if and only if Duplicator wins the game $\BP_{k+1}(G,H)$.
\end{corollary}

\subparagraph{Association Schemes}

Let $V$ be a set. An \emph{association scheme} on $V$ is an ordered partition $(R_0,\dots,R_d)$ of $V^{2}$ such that
\begin{enumerate}
 \item $R_0 = \{(v,v) \mid v \in V\}$, and
 \item for every $i \in [d]$, there is a $j \in [d]$ such that the set $R_i^{\intercal} \coloneqq \{(w,v) \mid (v,w) \in R_i\}$ equals $R_j$, and
 \item for all $i,j,k \in [d]$, there are numbers $p_{i,j}^{k}$ such that for all $(v,w) \in R_k$,
 \[\big|\big\{x \in V \, \big\vert \, (v,x) \in R_i\ \text{ and } (w,x) \in R_j\big\}\big| = p_{i,j}^{k} \, .\]
\end{enumerate}
An association scheme is \emph{symmetric} if $R_i^{\intercal} = R_i$ for all $i \in [d]$.
With each $R_i$, we associate a \emph{constituent graph $G(R_i)$ of the association scheme}, defined as $G(R_i) \coloneqq (V,R_i \cup R_i^{\intercal})$. Note that, formally, $G(R_i)$ is directed because its edges are ordered pairs. However, since for every pair of vertices $v,w \in V$, it holds that $(v,w) \in E(G(R_i))$ if and only if $(w,v) \in E(G(R_i))$, we may also consider $G(R_i)$ as undirected, i.e., $E(G(R_i))$ as a set of unordered pairs of vertices.

Every association scheme induces a coloring on $V^{2}$, in which every $(v,w)$ is colored with the relation it is contained in.
This coloring is \emph{stable with respect to $2$-WL}, 
 which means that it is not refined by $2$-WL (when $V$ is interpreted as the vertex set of a complete directed graph).
In particular, for all $i \in [d]$ and all $v,w \in V$, it holds that $\chi_{G(R_i)}(v,v) = \chi_{G(R_i)}(w,w)$.
Conversely, every colored graph $G$ with $\chi_{G}(v,v) = \chi_{G}(w,w)$ for all $v,w \in V(G)$ induces an association scheme in which the relations $R_i$ are the color classes of the coloring $\chi_{G} = \chi_{G,2}$.

\section{One Color}
\label{sec:onecolor}

Our first goal is to prove that $2$-WL distinguishes vertex pairs that are separators in a graph from other pairs of vertices.
This result forms the central technical contribution of this paper.
We first sketch the proof strategy.
Consider two graphs $G$ and $H$ and vertices $v_1,v_2 \in V(G)$ and $w_1,w_2 \in V(H)$ such that $\chi_G(v_1,v_2) = \chi_H(w_1,w_2)$.
We want to argue that $\{v_1,v_2\}$ forms a $2$-separator in $G$ if and only if $\{w_1,w_2\}$ forms a $2$-separator in~$H$.
Since $\chi_G(v_1,v_2) = \chi_H(w_1,w_2)$, there are (possibly equal) colors $c_1, c_2$ with $\{c_1,c_2\} = \{\chi_G(v_1,v_1),\chi_G(v_2,v_2)\} = \{\chi_H(w_1,w_1),\chi_H(w_2,w_2)\}$.

As it turns out (and as we show in detail in Section \ref{sec:separators}), the structural analysis of a graph with an arbitrary number of $\chi_2$-colors reduces to the case that there are at most two $\chi_2$-colors present in the graph. The crucial observation for the reduction is that one can bypass all vertices whose $\chi_2$-color is not contained in $\{c_1,c_2\}$. Here, bypassing a vertex $u$ means removing $u$ from the graph and connecting every pair of its neighbors via an edge.  

Hence, we first focus on input graphs $G$ for which $|\{\chi_G(v,v) \mid v \in V(G)\}| \leq 2$. In this section, we start with an analysis of the graphs in which all vertices are assigned the same color by $2$-WL.
In particular, this includes all constituent graphs of association schemes.
In the next section, we will extend the techniques to graphs $G$ for which $|\{\chi_G(w,w) \mid w \in V(G)\}| = 2$ holds.
Finally, the reduction to two vertex colors is presented in Section \ref{sec:separators}. 

So, for now, suppose that all vertices of the input graph are assigned the same color by $2$-WL.
A main tool for the analysis are distance patterns of vertices.
For a graph $G$ and a vertex $v \in V(G)$, let $D(v) \coloneqq \{\!\{\dist(v,w) \mid w \in V(G)\}\!\}$.
Note that, for vertices $u, v \in V(G)$, it holds that $\chi_G(u,u) \neq \chi_G(v,v)$ whenever $D(u) \neq D(v)$, since $2$-WL knows the distance function for pairs of vertices (see, e.g., \cite[Theorem 2.6.7]{ChenP19}).

\begin{lemma}
 \label{la:distance-patterns-small}
 Let $G$ be a graph and $uv \in E(G)$. Suppose that $D(u) = D(v)$. Then
 \begin{align*}
      &\{\!\{\dist(u,w) \mid w \in V(G) \colon \dist(u,w) < \dist(v,w)\}\!\}\\
  =\; &\{\!\{\dist(v,w) \mid w \in V(G) \colon \dist(v,w) < \dist(u,w)\}\!\}.
 \end{align*}
\end{lemma}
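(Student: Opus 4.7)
The plan is to exploit the fact that $uv$ is an edge, which forces the distances from $u$ and from $v$ to any third vertex to differ by at most one. So I would begin by partitioning $V(G)$ into three sets according to how $\dist(u,w)$ compares with $\dist(v,w)$:
\[
A \coloneqq \{w \mid \dist(u,w) < \dist(v,w)\},\quad B \coloneqq \{w \mid \dist(u,w) = \dist(v,w)\},\quad C \coloneqq \{w \mid \dist(u,w) > \dist(v,w)\}.
\]
Because $uv \in E(G)$ and distances are unit-valued along an edge, on $A$ we have $\dist(v,w) = \dist(u,w) + 1$, and on $C$ we have $\dist(u,w) = \dist(v,w) + 1$; on $B$ the two distances coincide.

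Next I would split the multisets $D(u)$ and $D(v)$ along the partition $A \cup B \cup C$. The $B$-parts are identical, so the assumption $D(u) = D(v)$ becomes an equality restricted to $A \cup C$:
\[
\{\!\{\dist(u,w) \mid w \in A\}\!\} \uplus \{\!\{\dist(u,w) \mid w \in C\}\!\}
= \{\!\{\dist(v,w) \mid w \in A\}\!\} \uplus \{\!\{\dist(v,w) \mid w \in C\}\!\}.
\]
Setting $a_i \coloneqq |\{w \in A \mid \dist(u,w) = i\}|$ and $c_i \coloneqq |\{w \in C \mid \dist(u,w) = i\}|$, the shifts above turn this into the numerical identity $a_i + c_i = a_{i-1} + c_{i+1}$ for every $i \geq 0$ (with $a_{-1} = 0$). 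The goal is equivalent to showing $a_i = c_{i+1}$ for all~$i$, since the left-hand side of the desired equality is the multiset with $a_i$ copies of $i$ and the right-hand side is the multiset with $c_{i+1}$ copies of~$i$.

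The identity $a_i + c_i = a_{i-1} + c_{i+1}$ rearranges to $a_i - c_{i+1} = a_{i-1} - c_i$, so the sequence $(a_i - c_{i+1})_{i \geq 0}$ is constant. Since $a_i$ and $c_{i+1}$ both vanish for $i$ exceeding the diameter of $G$, that constant is $0$, yielding $a_i = c_{i+1}$ for every $i \geq 0$ and thereby the claim. I do not foresee a genuine obstacle: once the three-way partition is set up and the edge constraint $|\dist(u,w) - \dist(v,w)| \le 1$ is used, the only real step is the telescoping observation on the coefficient sequences, which can equivalently be phrased as canceling the factor $(1 - t)$ in the generating-function identity $(1-t)\bigl(tX(t) - Y(t)\bigr) = 0$ encoding $D(u) = D(v)$.
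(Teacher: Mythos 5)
Your proof is correct and follows essentially the same route as the paper's: both rest on the edge constraint $|\dist(u,w)-\dist(v,w)|\le 1$ and a level-by-level comparison of multiplicities in $D(u)$ and $D(v)$. The paper merely packages the telescoping as a contradiction at the maximal distance value where the two multisets in question disagree, whereas you run the recurrence $a_i - c_{i+1} = a_{i-1} - c_i$ directly to its vanishing tail; the counting content is identical.
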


\begin{proof}
 We have $|\dist(v,w) - \dist(u,w)| \leq 1$ for all $w \in V(G)$ since $uv \in E(G)$.
 Suppose the statement is false and let $d \in \mathbb{N}$ be the maximal number such that $d$ has distinct multiplicities in the two multisets. Let $m_1$ be the multiplicity of $d$ in the first multiset and $m_2$ be the multiplicity of $d$ in the second.
 Without loss of generality, assume that $m_1 > m_2$.
 Then
 \[m_1 = |\{w \in V(G) \mid \dist(u,w) = d \wedge \dist(v,w) = d+1\}|\]
 and
 \[m_2 = |\{w \in V(G) \mid \dist(v,w) = d \wedge \dist(u,w) = d+1\}|.\]
 But
 \begin{align*}
      &|\{w \in V(G) \mid \dist(u,w) = d+1 \wedge \dist(v,w) = d+1\}|\\
  =\; &|\{w \in V(G) \mid \dist(v,w) = d+1 \wedge \dist(u,w) = d+1\}|
 \end{align*}
 and
 \begin{align*}
      &|\{w \in V(G) \mid \dist(u,w) = d+2 \wedge \dist(v,w) = d+1\}|\\
  =\; &|\{w \in V(G) \mid \dist(v,w) = d+2 \wedge \dist(u,w) = d+1\}|,
 \end{align*}
 where the first equality is trivial and the second equality follows from the maximality of~$d$.
 However, then $D(u)$ and $D(v)$ contain the number $d+1$ in distinct multiplicities, a contradiction.
\end{proof}

Throughout the remainder of this section, if not explicitly stated otherwise, we make the following assumption.

\begin{assumption}
 \label{ass:wl-2sep}
 $G$ is a connected graph on $n$ vertices with the following properties:
 \begin{enumerate}
  \item\label{item:ass-color} $\chi_G(u,u) = \chi_G(v,v)$ for all $u,v \in V(G)$, and
  \item\label{item:ass-separator} $G$ has a $2$-separator $\{w_1,w_2\}$.
 \end{enumerate}
\end{assumption}

In the rest of this section, we analyze the structure of $G$ and ultimately prove that $G$ must be a cycle.
In particular, this completely characterizes constituent graphs of association schemes that are connected, but not $3$-connected.

Before diving into the technical details, we again give some intuition.
Let $d \coloneqq \dist(w_1,w_2)$ and let $C$ be a connected component of $G - \{w_1,w_2\}$ such that $|C| \leq \frac{n-2}{2}$.
We first argue that $\dist(v,w_i) \leq d$ for all $v \in C$ and $i \in \{1,2\}$ by analyzing distance patterns of vertices.
Intuitively speaking, if $v$ is far away from both $w_1$ and $w_2$, then more than half of the vertices of $G$ (specifically, all vertices in $V(G)\setminus C$) are closer to $w_1$ and $w_2$ than to $v$, which eventually contradicts the fact that all distance patterns are identical.

Then, we split the vertices into subclasses $C_{i,j} \coloneqq \{v \in C \mid \dist(v,w_1) = i, \ \dist(v,w_2) = j\}$ for $i,j \leq d$ and analyze the connections between those sets.
To this end, we may assume that the edge set of the input graph is maximal (among all graphs satisfying Assumption~\ref{ass:wl-2sep}).
To see how we can exploit this property, consider the simple example of four vertices $v_1,v_2,v_3,v_4 \in C$ such that $v_1v_2,v_2v_3,v_3v_4,v_4v_1 \in E(G)$ (i.e., the vertices form a $4$-cycle) and $d \geq 3$.
Also consider a pair $u_1,u_3 \in V(G)$ such that $\chi_G(u_1,u_3) = \chi_G(v_1,v_3)$.
By the properties of $2$-WL, there are $u_2,u_4 \in V(G)$ such that $u_1u_2,u_2u_3,u_3u_4,u_4u_1 \in E(G)$.
In particular, since $d \geq 3$, we know that $u_1$ and $u_3$ are either both contained in $C \cup \{w_1,w_2\}$ or they are both not contained in $C$.
Hence, by inserting all edges $u_1u_3$ for which $\chi_G(v_1,v_3) = \chi_G(u_1,u_3)$ holds into the graph, we preserve Condition \ref{item:ass-separator} from Assumption \ref{ass:wl-2sep}.
Also, $\chi_G$ remains a stable coloring of the updated graph, which implies that Condition \ref{item:ass-color} from Assumption \ref{ass:wl-2sep} is preserved.
As a result, we may assume that all cycles on four vertices induce a clique which, in particular, implies that every pair of vertices is connected by at most one shortest path of length~$2$.

By extending this argument to cycles of length strictly less than $2d$, we obtain that shortest paths of length strictly less than $d$ are unique.
This provides us with strong restrictions on the connections between the sets $C_{i,j}$.
Ultimately, this allows us to prove that $G$ is a cycle by showing the existence of a vertex of degree $2$. 
Note that Condition \ref{item:ass-color} of Assumption \ref{ass:wl-2sep} implies that $G$ is regular, i.e., $\deg(u) = \deg(v)$ for all $u,v \in V(G)$.

\begin{lemma}
 \label{la:as-no-cut-vertex}
 $G$ is $2$-connected, i.e., $G$ does not contain any cut vertex.
\end{lemma}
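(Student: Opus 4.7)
The plan is to suppose for contradiction that $G$ has a cut vertex $v$ and use Lemma~\ref{la:distance-patterns-small} to force every component of $G-v$ to contain at least half of the remaining vertices, which is impossible once $G-v$ has two components.

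Specifically, suppose $v \in V(G)$ is a cut vertex, let $C_1$ be one connected component of $G-v$, and set $\overline{C_1} \coloneqq V(G) \setminus (\{v\} \cup C_1)$; since $v$ is a cut vertex, $\overline{C_1}$ is non-empty. Because $G$ is connected, $v$ has a neighbor $u \in C_1$. The key geometric observation is that every path from $u$ to any $w \in \overline{C_1}$ must pass through $v$, hence $\dist(u,w) = 1 + \dist(v,w)$, so $v$ is strictly closer to $w$ than $u$ is. Assumption~\ref{ass:wl-2sep}\ref{item:ass-color} gives $\chi_G(u,u) = \chi_G(v,v)$, so $D(u) = D(v)$ by the remark preceding Lemma~\ref{la:distance-patterns-small}, and the lemma applies to the edge $uv$. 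Next I would classify each $w \in V(G)$ according to which of $u,v$ is strictly closer: $u$ itself contributes $0$ to the left multiset of Lemma~\ref{la:distance-patterns-small}; $v$ itself contributes $0$ to the right multiset; every $w \in \overline{C_1}$ contributes $\dist(v,w)$ to the right multiset by the geometric observation; and each $w \in C_1 \setminus \{u\}$ contributes to at most one of the two multisets. Letting $A, B \subseteq C_1 \setminus \{u\}$ denote the disjoint sets of vertices in $C_1 \setminus \{u\}$ feeding into the left and right multisets respectively, equality of cardinalities in Lemma~\ref{la:distance-patterns-small} becomes $1 + |A| = 1 + |B| + |\overline{C_1}|$, and combining this with the disjointness bound $|A| + |B| \leq |C_1| - 1$ yields
\[|\overline{C_1}| \;\leq\; |A| \;\leq\; |C_1| - 1.\]

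Since the choice of $C_1$ was arbitrary, this inequality holds for every component $C$ of $G-v$: if $\overline{C} \coloneqq V(G) \setminus (\{v\} \cup C)$, then $|\overline{C}| \leq |C| - 1$, i.e.\ $|C| \geq n/2$ where $n = |V(G)|$. As $G-v$ has at least two components, summing this bound over two distinct components gives $n-1 = |V(G) \setminus \{v\}| \geq 2 \cdot n/2 = n$, a contradiction. I expect the only delicate step to be the bookkeeping in the multiset comparison: one must be careful to record the vertices of $\overline{C_1}$ on the $v$-side with their distances from $v$ (which is where the strict inequality $\dist(u,w) > \dist(v,w)$ is essential) while only using the trivial size bound for the contributions coming from $C_1 \setminus \{u\}$, whose individual distances we cannot control. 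Once the inequality $|\overline{C_1}| \leq |C_1| - 1$ is in hand, the counting that finishes the proof is immediate.
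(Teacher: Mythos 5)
Your proof is correct, but it takes a genuinely different route from the paper. The paper dispatches this lemma in one sentence by citing an external result (\cite[Corollary 7]{KieferPS17}, stating that the $2$-dimensional WL algorithm distinguishes cut vertices from non-cut vertices): under Condition~\ref{item:ass-color} of Assumption~\ref{ass:wl-2sep} either every vertex is a cut vertex or none is, and no connected graph consists entirely of cut vertices. You instead give a self-contained distance-counting argument: since every $u$--$w$ path with $u \in C_1$, $w \in \overline{C_1}$ passes through the cut vertex $v$, you get $\dist(v,w) < \dist(u,w)$ for all $w \in \overline{C_1}$, and comparing the cardinalities of the two multisets in Lemma~\ref{la:distance-patterns-small} (equivalently, using $\sum_w (\dist(u,w)-\dist(v,w)) = 0$, which follows from $D(u)=D(v)$) forces $|\overline{C_1}| \leq |C_1| - 1$, hence every component of $G-v$ has at least $n/2$ vertices --- impossible with two or more components. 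Your bookkeeping is right: the contributions of $u$ and $v$ cancel, $\overline{C_1}$ lands entirely on the $v$-side, and the vertices of $C_1\setminus\{u\}$ split disjointly between the two sides. What your approach buys is independence from the cited corollary and a proof entirely within the distance-multiset machinery the paper has already set up; it is in fact the $1$-separator analogue of the argument the paper itself uses for Lemma~\ref{la:as-distance-patterns}. What it costs is length, and it uses Condition~\ref{item:ass-color} only through the weaker invariant $D(u)=D(v)$, whereas the paper's citation exploits the full strength of the stable coloring.
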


This is a consequence of Condition \ref{item:ass-color} in Assumption \ref{ass:wl-2sep}, since $2$-WL distinguishes cut vertices from other vertices (see \cite[Corollary 7]{kieponschwei19}) and there is no graph in which every vertex is a cut vertex. 
To see this, note that in the unique block-cut tree decomposition of the graph into its $2$-connected components (see, e.g., \cite{Tutte84}), each leaf component contains a vertex that is not a cut vertex.

The lemma implies that each of $w_1$ and $w_2$ has at least one neighbor in each of the connected components of $G - \{w_1,w_2\}$.

\begin{lemma}
 \label{la:as-distance-patterns}
 Let $C$ be the vertex set of a connected component of $G - \{w_1,w_2\}$ such that $|C| < \frac{n}{2}$ and let $v \in C$.
 Then there is no vertex $u \in N(v)$ such that $\dist(u,w_1) < \dist(v,w_1)$ and $\dist(u,w_2) < \dist(v,w_2)$.
\end{lemma}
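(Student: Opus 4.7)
The plan is to assume for contradiction that such a neighbor $u$ of $v$ exists, then apply Lemma~\ref{la:distance-patterns-small} to the edge $uv$, and derive a contradiction from the cardinalities of the two multisets it equates.

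Setting $d_i := \dist(v, w_i)$ for $i = 1, 2$, the triangle inequality together with $uv \in E(G)$ immediately yields $\dist(u, w_i) \in \{d_i-1, d_i, d_i+1\}$, and the hypothesis $\dist(u, w_i) < \dist(v, w_i)$ forces $\dist(u, w_i) = d_i - 1$ for both $i$.

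The key structural claim to establish is that every $w \in V(G) \setminus C$ satisfies $\dist(u, w) < \dist(v, w)$. I would prove this uniformly (regardless of whether $u$ lies inside $C$ or is equal to one of $w_1, w_2$) as follows. Since $v \in C$ and $w \notin C$, every $v$--$w$ path must cross the separator $\{w_1, w_2\}$, hence
\[
 \dist(v, w) = \min\bigl(d_1 + \dist(w_1, w),\ d_2 + \dist(w_2, w)\bigr).
\]
For $u$ it suffices to invoke the triangle inequality $\dist(u, w) \leq \dist(u, w_i) + \dist(w_i, w) = (d_i - 1) + \dist(w_i, w)$ for both $i \in \{1,2\}$, which minimised over $i$ gives $\dist(u, w) \leq \dist(v, w) - 1$. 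The attractive feature of this step is that one does not need to separately analyse the case where $u$ itself is a separator vertex: the triangle inequality alone handles both scenarios.

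With the structural claim in hand, the proof concludes quickly. Assumption~\ref{ass:wl-2sep}(\ref{item:ass-color}) implies $D(u) = D(v)$, so Lemma~\ref{la:distance-patterns-small} applies to the edge $uv$ and, comparing cardinalities of the two equated multisets, yields
\[
 \bigl|\{w \in V(G) : \dist(u, w) < \dist(v, w)\}\bigr|
 = \bigl|\{w \in V(G) : \dist(v, w) < \dist(u, w)\}\bigr|.
\]
By the structural claim, the left-hand set contains all of $V(G) \setminus C$, whereas the right-hand set is contained in $C$. Hence $n - |C| \leq |C|$, i.e.\ $|C| \geq n/2$, contradicting $|C| < n/2$. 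The main work is concentrated in the structural claim; the obstacle is recognising that Lemma~\ref{la:distance-patterns-small} applied along the edge $uv$ is the correct tool, after which the required strict inequality follows purely from the triangle inequality and the fact that paths from $C$ to its complement must pass through $\{w_1, w_2\}$.
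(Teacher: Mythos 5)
Your proof is correct and follows essentially the same route as the paper: both arguments show that every $w \in V(G) \setminus C$ is strictly closer to $u$ than to $v$ and then use $D(u) = D(v)$ to force a balance that $|V(G)\setminus C| > \frac{n}{2}$ violates. The only cosmetic difference is that the paper extracts the cardinality balance directly from $\sum_{w}(\dist(v,w)-\dist(u,w)) = 0$ rather than by invoking Lemma~\ref{la:distance-patterns-small}, but the two formalizations are interchangeable.
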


\begin{proof}
 Suppose towards a contradiction that such a vertex $u \in N(v)$ exists.
 For all $w \in V(G)$, we have $|\dist(v,w) - \dist(u,w)| \leq 1$, since $uv \in E(G)$.
 Furthermore, we have $\sum_{w \in V(G)} \big(\dist(v,w) - \dist(u,w)\big) = 0$ because $D(v) = D(u)$ due to Condition \ref{item:ass-color} in Assumption \ref{ass:wl-2sep}.
 But $\dist(v,w) > \dist(u,w)$ for all $w \in V(G) \setminus C$, and $|V(G) \setminus C| > \frac{n}{2}$.
 This is a contradiction.
\end{proof}

\begin{lemma}
 \label{la:as-small-side-distances}
 Let $d \coloneqq \dist(w_1,w_2)$ and let $C$ be the vertex set of a connected component of $G - \{w_1,w_2\}$ such that $|C| \leq \frac{n-2}{2}$.
 Then for all $v \in C \cup \{w_1,w_2\}$ and all $i \in \{1,2\}$, it holds that $\dist(v,w_i) \leq d$.
\end{lemma}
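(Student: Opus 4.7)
The cases $v \in \{w_1, w_2\}$ are immediate from $\dist(w_1, w_2) = d$. So I fix $v \in C$ and set $\ell_i \coloneqq \dist(v, w_i)$ for $i \in \{1,2\}$; note that $d \geq 1$ since $w_1 \neq w_2$. The plan is to show $\ell_2 \leq d$, whence $\ell_1 \leq d$ follows by running the same argument with the roles of $w_1$ and $w_2$ swapped.

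The main step is a chain argument along a shortest $v$-$w_1$ path. Suppose first that some shortest path $v = u_0, u_1, \dots, u_{\ell_1} = w_1$ does not pass through $w_2$. Since $C$ is a connected component of $G - \{w_1, w_2\}$ and $u_0 = v \in C$, every vertex $u_j$ with $0 \leq j \leq \ell_1 - 1$ lies in $C$. Using $|C| \leq (n-2)/2 < n/2$, Lemma~\ref{la:as-distance-patterns} applies to each such $u_j$. For each $j$, the successor $u_{j+1}$ is a neighbor of $u_j$ with $\dist(u_{j+1}, w_1) = \ell_1 - j - 1 < \ell_1 - j = \dist(u_j, w_1)$, so the Lemma forbids $\dist(u_{j+1}, w_2) < \dist(u_j, w_2)$. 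Iterating along the whole path, including the final step $u_{\ell_1 - 1} \to w_1$ (which yields $\dist(u_{\ell_1 - 1}, w_2) \leq \dist(w_1, w_2) = d$), gives $\ell_2 = \dist(u_0, w_2) \leq \dist(u_{\ell_1 - 1}, w_2) \leq d$.

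The main obstacle is the remaining case, where every shortest $v$-$w_1$ path is forced to visit $w_2$; this forces $\ell_1 = \ell_2 + d$. I would dispatch this case by running the symmetric chain argument on a shortest $v$-$w_2$ path. If such a path avoids $w_1$, that argument yields $\ell_1 \leq d$, contradicting $\ell_1 = \ell_2 + d \geq d + 1$. Otherwise every shortest $v$-$w_2$ path visits $w_1$, giving $\ell_2 = \ell_1 + d$; combined with $\ell_1 = \ell_2 + d$ this forces $d = 0$, contradicting $w_1 \neq w_2$. Hence a shortest $v$-$w_1$ path avoiding $w_2$ always exists, the chain argument applies, and $\ell_2 \leq d$ holds unconditionally.
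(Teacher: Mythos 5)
Your proof is correct and follows essentially the same route as the paper: the core step in both is to walk along a shortest path from $v$ toward $w_1$ and apply Lemma~\ref{la:as-distance-patterns} at each step to conclude that the distance to $w_2$ cannot decrease, so it is bounded by $\dist(w_1,w_2)=d$. The paper packages this as an induction on $\dist(v,w_1)$ (which absorbs your second case automatically, since the induction hypothesis trivially covers $w_2$ itself), whereas you unroll the path explicitly and dispose of the degenerate case where every shortest $v$--$w_1$ path passes through $w_2$ by a short symmetric argument; both versions are sound.
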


\begin{proof}
 By symmetry, it suffices to prove $\dist(v,w_2) \leq d$.
 The statement is proved by induction on $\ell \coloneqq \dist(v,w_1)$.
 For $\ell = 0$, it holds that $v = w_1$ and $\dist(w_1,w_2) = d$.
 So suppose the statement holds for all $u \in C \cup \{w_1,w_2\}$ with $\dist(u,w_1) \leq \ell$. Obviously, the statement is true if $v = w_1$ or $v = w_2$. So pick $v \in C$ with $\dist(v,w_1) = \ell+1$.
 Let $u \in N(v)$ such that $\dist(u,w_1) \leq \ell$. Then $\dist(v,w_2) \leq \dist(u,w_2) \leq d$ by Lemma \ref{la:as-distance-patterns} and the induction hypothesis.
\end{proof}

\begin{lemma}
 \label{la:as-dist-1}
 $w_1w_2 \notin E(G)$.
\end{lemma}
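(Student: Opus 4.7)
My plan is a short argument by contradiction, exploiting only regularity of $G$ and a single application of Lemma \ref{la:as-small-side-distances}. Suppose $w_1 w_2 \in E(G)$, so that $d \coloneqq \dist(w_1,w_2) = 1$. Because $\{w_1,w_2\}$ is a $2$-separator, $G - \{w_1,w_2\}$ decomposes into at least two components whose sizes sum to $n-2$, so at least one component $C$ satisfies $|C| \leq (n-2)/2$. This is precisely the hypothesis needed to invoke Lemma \ref{la:as-small-side-distances} on the small side.

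Applying that lemma with $d = 1$ would give $\dist(v,w_1) \leq 1$ and $\dist(v,w_2) \leq 1$ for every $v \in C$. Since $v \notin \{w_1,w_2\}$, this forces every vertex of $C$ to be adjacent to both $w_1$ and $w_2$. In particular, $w_1$ is already joined to all $|C|$ vertices of $C$ and, by the standing assumption of this proof, also to $w_2$.

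To extract the contradiction I would then invoke Lemma \ref{la:as-no-cut-vertex}: since $G$ is $2$-connected and $\{w_1,w_2\}$ splits $G$ into at least two components, $w_1$ must have a neighbor in every component of $G - \{w_1,w_2\}$, hence at least one neighbor outside $C \cup \{w_2\}$. Thus $\deg(w_1) \geq |C| + 2$. On the other hand, any $v \in C$ has its entire neighborhood contained in $C \cup \{w_1,w_2\}$ (because $C$ is a connected component of $G - \{w_1,w_2\}$), so $\deg(v) \leq 2 + (|C| - 1) = |C| + 1$. As the first condition of Assumption \ref{ass:wl-2sep} implies that $G$ is regular, we would have $\deg(w_1) = \deg(v)$, contradicting $|C|+2 > |C|+1$.

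I do not anticipate a real obstacle here; the only point that needs a single line of justification is the existence of a component $C$ with $|C| \leq (n-2)/2$, which is automatic from the fact that the components of $G - \{w_1,w_2\}$ partition a set of $n-2$ vertices into at least two blocks. Everything else is a direct combination of Lemmas \ref{la:as-no-cut-vertex} and \ref{la:as-small-side-distances} with the regularity of $G$.
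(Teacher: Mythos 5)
Your proof is correct and follows essentially the same route as the paper's: apply Lemma \ref{la:as-small-side-distances} with $d=1$ to conclude $C \subseteq N(w_1) \cap N(w_2)$, use $2$-connectedness to give $w_1$ an extra neighbor outside $C \cup \{w_2\}$, and derive a degree contradiction with regularity. The only difference is that you spell out the bounds $\deg(w_1) \geq |C|+2$ and $\deg(v) \leq |C|+1$ slightly more explicitly, which changes nothing of substance.
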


\begin{proof}
 Suppose towards a contradiction that $w_1w_2 \in E(G)$.
 Let $C$ be the vertex set of a connected component of $G - \{w_1,w_2\}$ such that $|C| \leq \frac{n-2}{2}$.
 By Lemma \ref{la:as-small-side-distances}, we conclude that $C \subseteq N(w_1) \cap N(w_2)$.
 Let $v \in C$. Since $G$ is $2$-connected, the vertex $w_1$ must have at least one neighbor in $V(G) \setminus C$, in addition to being adjacent to all vertices in $C$ and to~$w_2$.
 Thus, $\deg(w_1) \geq |C|+2 > |C|-1 + |\{w_1,w_2\}| \geq \deg(v)$, which contradicts $G$ being a regular graph.
\end{proof}

\begin{lemma}
 \label{la:as-dist-2}
 Suppose that $N(w_1) \cap N(w_2) \neq \emptyset$. Then $G$ is a cycle.
\end{lemma}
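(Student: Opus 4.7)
The plan is to prove that the smaller connected component $C$ of $G - \{w_1,w_2\}$ (chosen so that $|C| \leq \frac{n-2}{2}$) consists of a single vertex; then Lemma~\ref{la:as-no-cut-vertex} forces this unique $v \in C$ to be adjacent to both $w_1$ and $w_2$, so $\deg(v) = 2$, and regularity (from Condition~\ref{item:ass-color} of Assumption~\ref{ass:wl-2sep}) propagates degree $2$ throughout, making the connected graph $G$ a cycle.

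First I would combine Lemma~\ref{la:as-dist-1} with the hypothesis $N(w_1) \cap N(w_2) \neq \emptyset$ to obtain $d \coloneqq \dist(w_1,w_2) = 2$, and then invoke Lemma~\ref{la:as-small-side-distances} to conclude $\dist(v,w_i) \leq 2$ for every $v \in C$ and $i \in \{1,2\}$. Setting $A_i \coloneqq N(w_i) \cap C$, $B \coloneqq C \setminus (A_1 \cup A_2)$, and $C' \coloneqq V(G) \setminus (C \cup \{w_1,w_2\})$, the set $C$ partitions into $A_1 \cap A_2$, $A_1 \setminus A_2$, $A_2 \setminus A_1$, and $B$ (whose vertices are at distance exactly $2$ from both $w_i$). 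Moreover, any shortest path from $v \in C$ to $x \in C'$ must leave $C$ through $\{w_1,w_2\}$, so $\dist(v,x) = \min_{i \in \{1,2\}}\bigl(\dist(v,w_i) + \dist(w_i,x)\bigr)$.

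To show $|C| = 1$, I would assume $|C| \geq 2$ and derive a contradiction from the equality $D(v) = D(w_1)$ (which holds for every $v \in V(G)$ by Condition~\ref{item:ass-color} of Assumption~\ref{ass:wl-2sep}) via comparing vertex counts at each distance value. Lemma~\ref{la:as-distance-patterns} forbids any $v \in B$ from having a neighbor in $A_1 \cap A_2$, so by the same argument no vertex of $A_1 \cap A_2 \cap C$ has a neighbor in $B$. The case analysis then splits on whether $A_1 \cap A_2 \cap C$ is empty: in the empty case, the common neighbors of $w_1,w_2$ provided by the hypothesis must all lie in $C'$, and the equality of distance distributions together with the distance formula above forces $|C| \leq 1$; in the non-empty case, I would pick $v \in A_1 \cap A_2 \cap C$ and compare the count of vertices at distance $2$ from $v$ with that from $w_1$, where the latter equals $1 + |C \setminus A_1| + |\{x \in C' : \dist(w_1,x)=2\}|$ and the former is governed by the above distance formula.

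The main obstacle is that the preceding lemmas do not directly control distances inside $C'$. To overcome this, I would apply Lemma~\ref{la:distance-patterns-small} to edges inside $C$: this yields enough rigidity linking distance profiles of adjacent vertices of $C$ that, combined with the distance-distribution match $D(v) = D(w_1)$, the counts at distance $2$ (and, if necessary, at higher distances) cannot be reconciled unless $|C| = 1$. Once $|C| = 1$ is established, the cycle conclusion follows immediately as outlined above.
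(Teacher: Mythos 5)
Your setup matches the paper's: reduce to $\dist(w_1,w_2)=2$ via Lemma~\ref{la:as-dist-1}, stratify $C$ by distances to $w_1$ and $w_2$ using Lemma~\ref{la:as-small-side-distances}, and note that distances from $C$ to $C'$ factor through the separator. Your endgame (if $|C|=1$ then the unique $v\in C$ has degree $2$ by Lemma~\ref{la:as-no-cut-vertex}, and regularity finishes) is also sound, and is essentially equivalent to the paper's conclusion. But the core of the proof is missing: you never actually rule out the configurations in which $C$ contains a vertex adjacent to exactly one of $w_1,w_2$ (your $A_1\setminus A_2$, the paper's $C_{1,2}$), or at distance $2$ from both (your $B$). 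You correctly identify the obstacle -- the distance distribution inside $C'$ is not controlled by the preceding lemmas -- but your resolution is only the assertion that Lemma~\ref{la:distance-patterns-small} applied to edges inside $C$ ``yields enough rigidity'' so that ``the counts at distance $2$ cannot be reconciled unless $|C|=1$.'' That is a hope, not an argument: the count of vertices at distance $2$ from $w_1$ contains the unknown term $|\{x\in C' : \dist(w_1,x)=2\}|$, and a bare comparison of distance-$2$ counts between $v$ and $w_1$ leaves too many free parameters to force a contradiction.

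The paper closes exactly this gap with a specific use of Lemma~\ref{la:distance-patterns-small} that your proposal does not contain. Assuming $v\in C_{1,2}$, one first checks that every $u$ with $\dist(v,u)<\dist(w_1,u)$ must lie in $C$ and hence (by Lemma~\ref{la:as-small-side-distances}) in $N(v)$; the multiset identity of Lemma~\ref{la:distance-patterns-small} then transfers this to $w_1$, forcing every $v'$ with $\dist(w_1,v')<\dist(v,v')$ to lie in $N(w_1)$, which together with $2$-connectedness yields $C'\subseteq N(w_1)$ and a degree-counting contradiction ($n\le n-1$). Only after $C=C_{1,1}$ is established does the final regularity argument apply. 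Your case split on whether $A_1\cap A_2\cap C$ is empty does not isolate this hard case (note also that if $A_1\cap A_2\cap C=\emptyset$ the target should be a direct contradiction, since $2$-connectedness forces a single-vertex component to lie in $A_1\cap A_2$, not ``$|C|\le 1$''). As written, the proposal has a genuine gap at its central step.
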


\begin{proof}
 By Lemma \ref{la:as-dist-1}, it holds that $w_1w_2 \notin E(G)$. Furthermore, by the assumption of the lemma, we have $\dist(w_1,w_2) = 2$. Let $C$ be the vertex set of a connected component of $G - \{w_1,w_2\}$ such that $|C| \leq \frac{n-2}{2}$. Also let $C' \coloneqq V(G) \setminus (C \cup \{w_1,w_2\})$.
 For $i,j \geq 1$, let
 \[C_{i,j} \coloneqq \{v \in C \mid \dist(v,w_1) = i \text{ and } \dist(v,w_2) = j\}.\]
 By Lemma \ref{la:as-small-side-distances}, we conclude that $C_{i,j} = \emptyset$ unless $(i,j) \in \{(1,1),(1,2),(2,1),(2,2)\}$.
 
 Suppose there exists $v \in C_{1,2}$.
 We have $D(v) = D(w_1)$ and, by Lemma \ref{la:as-no-cut-vertex}, also $N(w_1) \cap C' \neq \emptyset$.
 Thus, there is a vertex $u' \neq w_1$ such that $\dist(w_1,u') < \dist(v,u')$ and therefore, by Lemma \ref{la:distance-patterns-small}, there is also a vertex $u \neq v$ such that $\dist(v,u) < \dist(w_1,u)$.
 Note that $u = w_2$ is not possible since $\dist(v,w_2) = 2 = \dist(w_1,w_2)$.
 Suppose $u \in C'$.
 Then no shortest path from $v$ to $u$ uses $w_1$, but this implies that $\dist(v,u) = 2 + \dist(w_2,u) = \dist(w_1,w_2) + \dist(w_2,u) \geq \dist(w_1,u)$, a contradiction.
 Hence, $u \in C$, which implies that $\dist(w_1,u) \leq 2$.
 So $\dist(v,u) = 1$, meaning that $u \in N(v)$. 
 Therefore, by Lemma \ref{la:distance-patterns-small}, for every vertex $v' \neq w_1$ with $\dist(w_1,v') < \dist(v,v')$, it holds that $v' \in N(w_1)$.
 This implies that there is no $v' \in C'$ such that $\dist(w_1,v') = 2$ since such a vertex would satisfy $v' \neq w_1$, $v' \notin N(w_1)$, and $\dist(w_1,v') = 2 < 3 = \dist(v,v')$.
 Because $w_2$ is not a cut vertex (cf.\ Lemma \ref{la:as-no-cut-vertex}), from every $v' \in C'$, there is a path to $w_1$ that does not contain $w_2$. 
 However, this is only possible if there is no vertex $v' \in C'$ such that $\dist(w_1,v') > 1$. In other words, $C' \subseteq N(w_1)$.
 Since $G$ is regular and $|N(w_1) \setminus C'| \geq 1$, it follows that $\deg(v) \geq |C'| + 1$.
 But $N(v) \subseteq (C \cup \{w_1\}) \setminus \{v\}$, which implies $\deg(v) \leq |C|$.
 The combination of both inequalities yields $|C'| + 1 \leq |C|$, which implies $n = 2 + |C| + |C'| \leq 1 + 2|C| \leq n-1$, a contradiction.
 So $C_{1,2} = \emptyset$ and, by symmetry, it also holds that $C_{2,1} = \emptyset$.
 But then $C_{2,2} = \emptyset$ by Lemma \ref{la:as-distance-patterns}.
 
 So $C = C_{1,1}$, which means that $C \subseteq N(w_1) \cap N(w_2)$.
 In particular, $\deg(w_1) \geq |C| + 1$ since $|N(w_1) \cap C'| \geq 1$.
 Since $G$ is regular, this implies that $\deg(v) \geq |C| + 1$ for every $v \in C$, which is only possible if $N[v] = C \cup \{w_1,w_2\}$.
 Because $C \neq \emptyset$, this means that there is a vertex $v \in V(G)$ such that $G[N[v]]$ contains only one non-edge.
 Now by Condition \ref{item:ass-color} in Assumption \ref{ass:wl-2sep}, this also has to hold for $w_1$, and hence, since no vertex in $C \cap N(w_1)$ is adjacent to any vertex in $C' \cap N(w_1)$, it must hold that $\deg(w_1) = 2$.
 Therefore, by regularity, all vertices in $G$ have degree $2$ and thus, being connected, $G$ is a cycle.
\end{proof}

We are now ready to show that $G$ is a cycle. 
\begin{lemma}
 \label{la:as-main}
 $G$ is a cycle.
\end{lemma}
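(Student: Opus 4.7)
The plan is to induct on $d := \dist(w_1,w_2)$. By Lemma~\ref{la:as-dist-1}, $d \geq 2$, and the base case $d = 2$ follows directly from Lemma~\ref{la:as-dist-2}: any internal vertex of a shortest $w_1$-$w_2$-path witnesses $N(w_1) \cap N(w_2) \neq \emptyset$.

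For the inductive step, assume $d \geq 3$ and that the lemma holds for all graphs $G'$ satisfying Assumption~\ref{ass:wl-2sep} whose 2-separator has distance strictly less than $d$. Since Condition~\ref{item:ass-color} of Assumption~\ref{ass:wl-2sep} is a property of $G$ alone (independent of the particular 2-separator one fixes), it suffices to exhibit \emph{some} 2-separator $\{u_1,u_2\}$ of $G$ with $\dist(u_1,u_2) < d$; applying the induction hypothesis to $G$ with $\{u_1,u_2\}$ playing the role of $\{w_1,w_2\}$ then yields that $G$ is a cycle.

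To produce such a pair, let $C$ be a connected component of $G - \{w_1,w_2\}$ with $|C| \leq (n-2)/2$, and fix a shortest $w_1$-$w_2$-path $w_1 = p_0, p_1, \dots, p_d = w_2$ whose internal vertices lie in $C$ (such a path exists by Lemma~\ref{la:as-small-side-distances}). The most natural candidate for the smaller-distance separator is $\{p_1, w_2\}$, at distance $d - 1$: if $p_1$ is the only neighbor of $w_1$ in $C$, then removing $p_1$ and $w_2$ splits $G$ into the non-empty set $C \setminus \{p_1\}$ and $\{w_1\} \cup C'$, so $\{p_1, w_2\}$ separates as required.

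The hard part will be handling the case in which $w_1$ has several neighbors in $C$, so that $\{p_1, w_2\}$ fails to disconnect. In that regime one switches to alternative candidates, for example $\{w_1, p_2\}$ at distance $2$, or $\{p_1, p_{d-1}\}$ at distance $d - 2$ when $d \geq 4$, and verifies separation analogously; here one must also invoke the fact (implicit in the proof of Lemma~\ref{la:as-dist-1}) that no 2-separator can consist of two adjacent vertices. Pinning this case analysis down requires combining Lemma~\ref{la:as-distance-patterns} to constrain how the extra $C$-neighbors of $w_1$ are placed in the distance layers around $w_2$, the regularity of $G$ implied by Condition~\ref{item:ass-color}, and the uniform-distance-multiset condition $D(u) = D(v)$ for all $u,v \in V(G)$, in order to guarantee that at least one of the candidate pairs always separates $G$. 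I expect this to be the technically most delicate step.
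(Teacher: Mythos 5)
Your proposal sets up a plausible induction on $d=\dist(w_1,w_2)$ whose logical skeleton is sound (the base case $d=2$ via Lemma~\ref{la:as-dist-2} is correct, and so is the reduction ``find any 2-separator at smaller distance and re-apply the hypothesis to the same graph''), and your treatment of the easy subcase where $p_1$ is the \emph{unique} neighbor of $w_1$ in $C$ is fine. But the argument has a genuine gap exactly where you flag it: when $w_1$ has several neighbors in $C$, you list candidate pairs ($\{w_1,p_2\}$, $\{p_1,p_{d-1}\}$, etc.) and assert that a combination of Lemma~\ref{la:as-distance-patterns}, regularity, and the condition $D(u)=D(v)$ will force one of them to separate $G$ --- without giving the argument. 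Nothing in the cited lemmas guarantees that any of these pairs is a separator; e.g.\ for $d=3$ the pair $\{p_1,p_{d-1}\}=\{p_1,p_2\}$ is an edge and hence excluded outright, leaving only $\{p_1,w_2\}$ and $\{w_1,p_2\}$, and there is no visible reason either must disconnect the graph when $w_1$ has extra neighbors in $C$. This missing step is not a technicality: it is the entire content of the lemma, and a priori it is only clear \emph{a posteriori} (once one knows $G$ is a cycle) that small-distance separators exist at all.

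It is also worth noting that the paper's proof does not attempt to descend to a smaller-distance separator. Instead it first passes to an edge-maximal graph satisfying Assumption~\ref{ass:wl-2sep}; this maximality is what yields the two structural workhorses (uniqueness of shortest paths of length less than $d$, and the ``detour forces a triangle'' property), which in turn support a direct analysis of the distance cells $C_{i,j}$ showing that every vertex has degree $2$. Without some substitute for that edge-maximalization device, I do not see how you would control the placement of the extra neighbors of $w_1$ in $C$, so as written the inductive step cannot be completed.
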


\begin{proof}
 Note that every graph that satisfies Assumption \ref{ass:wl-2sep} has at least $n$ edges because the graph must be regular and connected. We first argue that it suffices to prove the lemma for the case that $G$ is a graph with a maximum edge set that satisfies Assumption \ref{ass:wl-2sep}.
 Indeed, if it holds for such a graph $G$, then $G$ has $n$ edges, so every graph that satisfies Assumption~\ref{ass:wl-2sep} with an edge set that is not maximal with respect to inclusion would have to have less than $n$ edges. However, there are no such graphs. So in the following, we assume that the edge set of $G$ is maximal with respect to inclusion.
 
 Let $d \coloneqq \dist(w_1,w_2)$.
 By Lemmas \ref{la:as-dist-1} and \ref{la:as-dist-2}, we can assume that $d \geq 3$.
 Let $C$ be the vertex set of a connected component of $G - \{w_1,w_2\}$ of size $|C| \leq \frac{n-2}{2}$. Also let $C' \coloneqq V(G) \setminus (C \cup \{w_1,w_2\})$.
 For $i,j \geq 1$, let
 \[C_{i,j} \coloneqq \{v \in C \mid \dist(v,w_1) = i \text{ and } \dist(v,w_2) = j\}.\]
 By Lemma \ref{la:as-small-side-distances}, we conclude that $C_{i,j} = \emptyset$ unless $i,j \leq d$.
 Furthermore, by the definition of~$d$, we have that $C_{i,j} = \emptyset$ unless $i+j \geq d$.
 This situation is also visualized in Figure \ref{fig:distance-pyramid-app}.

 In the following, the strategy is to prove that $C_{i,j} = \emptyset$ whenever $i+j > d$.
 We first show that there cannot be any vertex in $C$ of distance at least $d-1$ both to $w_1$ and $w_2$ (see Claim \ref{cl:c-d-d}).
 For the remaining sets $C_{i,j}$, we exploit the edge maximality of $G$ to reduce the complexity of vertex connections in $C$ (see Claims \ref{cl:unique-shortest-path-app} and \ref{cl:shortest-path-detour-app}). For example, we argue that shortest paths between vertices of distance strictly less than $d$ are unique.
 This severely restricts possible connections between the sets $C_{i,j}$ and eventually allows us to argue that we can assume $C_{i,j} = \emptyset$ whenever $i+j > d$ (using Claims \ref{cl:two-neighbors} and \ref{cl:tool-claim}). Finally, we argue that $|C_{i,j}| = 1$ whenever $i+j = d$, again relying on the fact that certain shortest paths are unique (see the application of Claim \ref{cl:matching-graph}).
 
 \begin{figure}
  \centering
  \scalebox{0.85}{
  \begin{tikzpicture}
   \node[tinyvertex,fill=red,label=below:{$w_1$}] (w1) at (0,0) {};
   \node[tinyvertex,fill=red,label=below:{$w_2$}] (w2) at (8,0) {};
   
   \node[circle,minimum size=32pt,fill=red,opacity=0.4,text opacity=1] (c1) at (1.8,0.8) {\large $C_{1,3}$};
   \node[circle,minimum size=32pt,fill=red,opacity=0.4,text opacity=1] (c2) at (4,1.2) {\large $C_{2,2}$};
   \node[circle,minimum size=32pt,fill=red,opacity=0.4,text opacity=1] (c3) at (6.2,0.8) {\large $C_{3,1}$};
   \node[circle,minimum size=32pt,fill=red,opacity=0.4,text opacity=1] (c4) at (0.7,2.0) {\large $C_{1,4}$};
   \node[circle,minimum size=32pt,fill=red,opacity=0.4,text opacity=1] (c5) at (2.9,2.4) {\large $C_{2,3}$};
   \node[circle,minimum size=32pt,fill=red,opacity=0.4,text opacity=1] (c6) at (5.1,2.4) {\large $C_{3,2}$};
   \node[circle,minimum size=32pt,fill=red,opacity=0.4,text opacity=1] (c7) at (7.3,2.0) {\large $C_{4,1}$};
   \node[circle,minimum size=32pt,fill=red,opacity=0.4,text opacity=1] (c8) at (1.6,3.4) {\large $C_{2,4}$};
   \node[circle,minimum size=32pt,fill=red,opacity=0.4,text opacity=1] (c9) at (4,3.6) {\large $C_{3,3}$};
   \node[circle,minimum size=32pt,fill=red,opacity=0.4,text opacity=1] (c10) at (6.4,3.4) {\large $C_{4,2}$};
   \node[circle,minimum size=32pt,fill=red,opacity=0.4,text opacity=1] (c11) at (2.7,4.6) {\large $C_{3,4}$};
   \node[circle,minimum size=32pt,fill=red,opacity=0.4,text opacity=1] (c12) at (5.3,4.6) {\large $C_{4,3}$};
   \node[circle,minimum size=32pt,fill=red,opacity=0.4,text opacity=1] (c13) at (4,5.6) {\large $C_{4,4}$};
   
   \draw[thick] (w1) edge (c1);
   \draw[thick] (w1) edge (c4);
   \draw[thick] (w2) edge (c3);
   \draw[thick] (w2) edge (c7);
   \draw[thick] (c1) edge (c2);
   \draw[thick] (c2) edge (c3);
   \draw[thick] (c1) edge (c4);
   \draw[thick] (c1) edge (c5);
   \draw[thick] (c2) edge (c5);
   \draw[thick] (c2) edge (c6);
   \draw[thick] (c3) edge (c6);
   \draw[thick] (c3) edge (c7);
   \draw[thick] (c4) edge (c5);
   \draw[thick] (c5) edge (c6);
   \draw[thick] (c6) edge (c7);
   \draw[thick] (c4) edge (c8);
   \draw[thick] (c5) edge (c8);
   \draw[thick] (c5) edge (c9);
   \draw[thick] (c6) edge (c9);
   \draw[thick] (c6) edge (c10);
   \draw[thick] (c7) edge (c10);
   \draw[thick] (c8) edge (c9);
   \draw[thick] (c9) edge (c10);
   \draw[thick] (c8) edge (c11);
   \draw[thick] (c9) edge (c11);
   \draw[thick] (c9) edge (c12);
   \draw[thick] (c10) edge (c12);
   \draw[thick] (c11) edge (c12);
   \draw[thick] (c11) edge (c13);
   \draw[thick] (c12) edge (c13);
   
   \draw[dashed] (4,0) ellipse (4.4 and 0.3);

   \draw[dashed] (8.2,-1.2) arc[start angle=-20,end angle=200, x radius=4.47, y radius=5.6];
   \node at (4,-1) {\large $C'$};
   
  \end{tikzpicture}
  }
  \caption{Visualization of the sets $C_{i,j}$ for $d = 4$ in the proof of Lemma \ref{la:as-main}.
   Each arc between two sets indicates that there may be edges connecting vertices from the two sets.}
  \label{fig:distance-pyramid-app}
 \end{figure}
 
 \begin{claim}
  \label{cl:c-d-d}
  $C_{d,d} = C_{d-1,d} = C_{d,d-1} = \emptyset$.
 \end{claim}
 \begin{claimproof}
  We argue that $C_{d,d} = C_{d-1,d} = \emptyset$. By symmetry, we also obtain that $C_{d,d-1} = \emptyset$.
  
  Suppose, for the sake of a contradiction, that $C_{d,d} \cup C_{d-1,d} \neq \emptyset$ and pick $v \in C_{\ell,d}$ for some $\ell \in \{d,d-1\}$. 
  Let $v_0,\dots,v_\ell$ be a shortest path from $w_1 = v_0$ to $v = v_\ell$.
  We first argue by induction that $v_i \in C_{i,d}$.
  In the base case $i = \ell$, the statement holds by definition.
  So suppose $i < \ell$.
  First observe that $\dist(w_1,v_i) = i$ by definition.
  By the induction hypothesis, $v_{i+1} \in C_{i+1,d}$.
  This means $\dist(w_2,v_i) \geq d$ by Lemma \ref{la:as-distance-patterns}.
  Also, $\dist(w_2,v_i) \leq d$ by Lemma~\ref{la:as-small-side-distances}.
  So $v_i \in C_{i,d}$. In particular, $C_{1,d} \neq \emptyset$.
  
  Now let $v' \in C_{1,d}$ and consider an arbitrary vertex $u \neq v'$ such that $\dist(v',u) < \dist(w_1,u)$.
  Then $u \in C$ and $\dist(v',u) \leq d-1$ by Lemma \ref{la:as-small-side-distances}.
  Since $D(v') = D(w_1)$, we conclude that for every vertex $u \neq w_1$ with $\dist(v',u) > \dist(w_1,u)$, it holds that $\dist(w_1,u) \leq d-1$ (cf.\ Lemma \ref{la:distance-patterns-small}).
  Hence, there is no vertex $u \in C'$ such that $\dist(w_1,u) = d$.
  Indeed, every such vertex would have to satisfy $\dist(w_1,u) \geq \dist(v',u)$, i.e., every shortest path from $u$ to $v'$ would have to pass through $w_2$.
  However, since $\dist(v',w_2) = d$, such a path has length at least $d+1$.
  Since there is no cut vertex in $G$, this excludes also the existence of vertices $u \in C'$ such that $\dist(w_1,u) > d$.
  Hence, we obtain that $\dist(w_1,u) \leq d-1$ holds for every $u \in C'$.
  
  Overall, this means that, on the one hand, $\dist(w_1,u) \leq d$ for all $u \in V(G)$.
  On the other hand, there is a $u \in C'$ such that $\dist(w_1,u) \geq 2$ because $d \geq 3$.
  But then $\dist(v,u) \geq d+1$ holds for every $v \in C_{d,d} \cup C_{d-1,d}$.
  So $D(w_1) \neq D(v)$, which is a contradiction.
 \end{claimproof}
 
 \begin{claim}
  \label{cl:unique-shortest-path-app}
  Let $u,v \in V(G)$ such that $\dist(u,v) < d$. Then there is a unique shortest path from $u$ to $v$.
 \end{claim}
 \begin{claimproof}
  Suppose the statement does not hold and let $\ell < d$ be the minimal number for which the claim is violated.
  Let $u,v \in V(G)$ be two vertices with $\dist(u,v) = \ell$ such that there are two paths of length $\ell$ from $u$ to $v$.
  Also let $E' \coloneqq \{u'v' \mid \chi_G(u,v) = \chi_G(u',v')\}$ and consider the graph $G' \coloneqq (V(G),E(G) \cup E')$, i.e., the graph obtained from $G$ by inserting all (undirected) edges contained in the set $E'$.
  We argue that $G'$ still satisfies Assumption \ref{ass:wl-2sep}, which contradicts the edge maximality of $G$.
  
  First, the coloring $\chi_G$ is also a stable coloring of $G'$, which implies that $\chi_{G}$ refines the coloring $\chi_{G'}$.
  In particular, Condition \ref{item:ass-color} of Assumption~\ref{ass:wl-2sep} is satisfied for the graph $G'$.
  
  Now let $u'v' \in E'$.
  Then $\dist(u',v') = \ell$ and there are at least two different walks of length $\ell$ from $u'$ to $v'$ because the same statement holds for $u$ and $v$.
  Due to the minimality of $\ell$, the two walks are internally vertex-disjoint paths.
  If $u'$ and $v'$ lie in different connected components of $G - \{w_1,w_2\}$, then one of the two paths must pass through $w_1$ while the other one passes through $w_2$, forming a cycle of length $2\ell < 2d$.
  This implies $\dist(w_1,w_2) < d$, a contradiction.
  Thus, we conclude that there is a connected component with vertex set $C$ of $G - \{w_1,w_2\}$ such that $u',v' \in C \cup \{w_1,w_2\}$.
  But this means Condition \ref{item:ass-separator} of Assumption \ref{ass:wl-2sep} is satisfied for the graph $G'$.
 \end{claimproof}
 
 \begin{claim}
  \label{cl:shortest-path-detour-app}
  Let $u,v \in V(G)$ such that $\ell \coloneqq \dist(u,v) < d$. Furthermore, suppose there is a walk $u = u_0,\dots,u_{\ell+1} = v$ of length $\ell+1$ from $u$ to $v$.
  Then there is an $i \in [\ell]$ with~$u_{i-1}u_{i+1} \in E(G)$.
 \end{claim}
 \begin{claimproof}
  Suppose the statement does not hold and let $\ell < d$ be the minimal number for which the claim is violated.
  Let $u,v \in V(G)$ such that $\ell = \dist(u,v)$ and there is a walk $u = u_0,\dots,u_{\ell+1} = v$
  of length $\ell+1$ from $u$ to $v$ such that, for all $i \in [\ell]$, it holds that $u_{i-1}u_{i+1} \notin E(G)$.
  Also, let $E' \coloneqq \{u'v' \mid \chi_G(u,v) = \chi_G(u',v')\}$ and consider the graph $G' = (V(G),E(G) \cup E')$.
  Similarly to the previous claim, we argue that $G'$ still satisfies Assumption \ref{ass:wl-2sep}, which contradicts the edge maximality of $G$.
  
  Indeed, by the same argument as in the previous claim, Condition \ref{item:ass-color} of Assumption \ref{ass:wl-2sep} is satisfied for the graph $G'$.
  
  Now let $u'v' \in E'$.
  Then $\dist(u',v') = \ell$ and there is a walk $u' = u_0',\dots,u_{\ell+1}' = v'$
  of length $\ell+1$ from $u'$ to $v'$ such that, for all $i \in [\ell]$, it holds that $u_{i-1}'u_{i+1}' \notin E(G)$.
  Note that $2$-WL knows whether such a walk exists, since the shortest path is unique by Claim~\ref{cl:unique-shortest-path-app} and the algorithm knows the number of triangles that share an edge with the shortest path.
  Due to the minimality of $\ell$, it holds that the unique shortest path from $u'$ to $v'$ and $u_0',\dots,u_{\ell+1}'$ are internally vertex-disjoint.
  Since $\dist(w_1,w_2) = d$, this implies that there is a connected component of $G - \{w_1,w_2\}$ with vertex set $C$ such that $u',v' \in C \cup \{w_1,w_2\}$ (using the same arguments as before).
  But this again means Condition \ref{item:ass-separator} of Assumption \ref{ass:wl-2sep} is also satisfied for $G'$.
 \end{claimproof}
 
These claims drastically restrict the structure of the graph $G$ and will allow us to prove that $G$ is a cycle. Intuitively speaking, the claims imply that, for each of the $w_i$, the graph induced by $w_i$ and all vertices in $G[C]$ with distance at most $d-1$ to $w_i$ is sufficiently similar to a tree.

 For $k' \in \{d,\dots,2d\}$, let
 \[\mathcal{C}_{k'} \coloneqq \bigcup_{i,j \colon i+j = k'} C_{i,j}.\]
 Let $k \in \{d,\dots,2d\}$ be the maximal number such that $\mathcal{C}_k \neq \emptyset$.
 Note that $k \leq 2d-2$ by Claim \ref{cl:c-d-d}. First suppose that $k \geq d+1$.
 
 \begin{claim}
  \label{cl:two-neighbors}
  Let $v \in C_{i,k-i}$ for $k-d+1 \leq i \leq d-1$. Then $|N(v) \cap (C_{i-1,k-i+1} \cup C_{i-1,k-i})| = 1$ and also $|N(v) \cap (C_{i,k-i-1} \cup C_{i+1,k-i-1})| = 1$.
 \end{claim}
 \begin{claimproof}
  This follows directly from Claim \ref{cl:unique-shortest-path-app}.
 \end{claimproof}
 
 \begin{claim}
  \label{cl:tool-claim}
  Suppose $C_{i,k-i} \neq \emptyset$ holds for some $i$ with $k-d+1 \leq i \leq d-1$. Then $G$ is a cycle.
 \end{claim}
 \begin{claimproof}
  Let $i$ be as described in the claim and let $v \in C_{i,k-i}$.
  By Claim \ref{cl:two-neighbors} and the maximality of $k$, we obtain that $|N(v) \setminus C_{i,k-i}| = 2$.
  If $|N(v)| = 2$, then we are done (recall that $G$ is regular and connected).
  So suppose there is a $u \in N(v) \cap C_{i,k-i}$. Let $u'$ be a neighbor of $v$ of distance $i-1$ to $w_1$ (if $i=1$, then $u' = w_1$). Since via $v$ and $u'$, there is a path from $u$ to $w_1$ of length $i+1$, we can apply Claim \ref{cl:shortest-path-detour-app} to deduce that $u' \in N(u)$. Indeed, by the definition of the $C_{i,k-i}$, no other shortcuts on the path are possible. Swapping the roles of $u$ and $v$, we obtain that $u$ and $v$ agree on their neighbors of distance $i-1$ to $w_1$. By applying the same argument also for paths from $u$ and $v$ to $w_2$, we conclude that $N(u) \setminus C_{i,k-i} = N(v) \setminus C_{i,k-i}$.
  Now let $A$ be the vertex set of the connected component containing $v$ in the graph $G[C_{i,k-i}]$.
  Then $G[A]$ is a clique. Indeed, for every pair $u,u' \in A$, there are at least two paths of length~$2$ from $u$ to $u'$, since $N(u) \setminus C_{i,k-i} = N(u') \setminus C_{i,k-i}$.
  Thus, $uu' \in E(G)$ by Claim \ref{cl:unique-shortest-path-app}.
  But now $G[N[v]]$ contains at most one non-edge, namely between the vertices in $N(v) \setminus C_{i,k-i}$.
  So the same has to be true for $w_1$, which implies that $\deg(w_1) = 2$.
 \end{claimproof}

 Hence, we can assume that $C_{i,k-i} = \emptyset$ holds for all $i$ with $k-d+1 \leq i \leq d-1$.
 Since $\mathcal{C}_k \neq \emptyset$, this means that $C_{d,k-d} \neq \emptyset$ or $C_{k-d,d} \neq \emptyset$.
 Without loss of generality, assume that $C_{k-d,d} \neq \emptyset$. Note that $k-d < d$.
 
 Let $v \in C_{k-d,d}$. Let $w \in N(v)$ such that $\dist(w_1,w) = k-d-1$. Note that $w \in C_{k-d-1,d}$ (when $k = d+1$, then $w=w_1$).
 Also, observe that $|N(v) \cap C_{k-d-1,d}| = 1$ by Claim \ref{cl:unique-shortest-path-app}. 
 Let~$A$ be the vertex set of the connected component in $G[C_{k-d,d}]$ containing $v$.
 Let $u,u' \in A$ such that $uu' \in E(G)$.
 Then $N(u) \cap C_{k-d-1,d} = N(u') \cap C_{k-d-1,d}$ by Claim \ref{cl:shortest-path-detour-app}.
 So $N(u) \cap C_{k-d-1,d} = \{w\}$ for every $u \in A$.
 Also $G[A]$ forms a clique, since there is a unique shortest path between pairs of vertices at distance $2$ by Claim \ref{cl:unique-shortest-path-app}.
 So $G[A \cup \{w\}]$ forms a clique.
 Now let $u \in N(v) \setminus (A \cup \{w\})$.
 Then $u \in C_{k-d,d-1}$ (recall that $C_{k-d+1,d-1} = \emptyset$).
 First, $uw \in E(G)$ by Claim \ref{cl:shortest-path-detour-app}.
 Also, $A \subseteq N(u)$, since there is a unique shortest path between pairs of vertices at distance $2$ by Claim \ref{cl:unique-shortest-path-app}.
 Finally, let $u' \in N(v) \setminus (A \cup \{w\})$ such that $u \neq u'$. Then $\{v,w\} \subseteq N(u) \cap N(u')$ and hence, $uu' \in E(G)$, again using Claim~\ref{cl:unique-shortest-path-app}.
 So overall $N[v]$ forms a clique and thus, the same must hold for $N[w_1]$, which is a contradiction, since $w_1$ belongs to a separator.
 
 In the other case, $k=d$.
 Observe that $C_{i,d-i} \neq \emptyset$ holds for every $i$ with $1 \leq i \leq d-1$.
 For two disjoint sets $U,W \subseteq V(G)$, let $G[U,W] \coloneqq \big(U \cup W,E(G) \cap \big\{\{u,w\} \, \big\vert \, u \in U, w \in W\big\}\big)$ be the bipartite graph induced by $U$ and $W$.
 \begin{claim}
  \label{cl:matching-graph}
  For all $i,j$ with $1 \leq i,j \leq d-1$, it holds that $|C_{i,d-i}| = |C_{j,d-j}|$. Also, the graph $G[C_{i,d-i},C_{i+1,d-i-1}]$ is a matching graph (i.e., every vertex in the graph has degree~$1$) for $1 \leq i \leq d-1$.
 \end{claim}
 \begin{claimproof}
  This follows directly from Claim \ref{cl:unique-shortest-path-app}.
 \end{claimproof}
 Since $G$ is regular, we further conclude that for every $i$ with $1 \leq i \leq d-1$, the graph $G[C_{i,d-i}]$ is a clique.
 To see this, let $v \in C_{i,d-i}$ for an arbitrary $i$ with $1 \leq i \leq d-1$.
 By Claim \ref{cl:matching-graph} and the definition of the sets $C_{i,j}$, we know $|N(v) \setminus C_{i,d-i}| \leq 2$.
 But also $\deg(v) = \deg(w_1) \geq |C_{1,d}| + 1 = |C_{i,d-i}| + 1$, since $C_{1,d} \subseteq N(w_1)$ and $|N(w_1) \cap C'| \geq 1$ and~$G$ is regular.
 Still, this is only possible if $v$ is adjacent to every vertex in $C_{i,d-i}$ (except for $v$ itself).
 Since $v \in C_{i,d-i}$ was chosen arbitrarily, it follows that $C_{i,d-i}$ forms a clique. 

 Now suppose that $|C_{i,d-i}| \geq 2$ holds for some (and therefore every) $i$ with $1 \leq i \leq d-1$.
 Then $G$ contains an induced subgraph isomorphic to $C_4$, which contradicts Claim \ref{cl:unique-shortest-path-app} (recall that $d \geq 3$).
 So $|C_{i,d-i}| = 1$ holds for all $i$ with $1 \leq i \leq d-1$ and hence, $G$ is a cycle.
\end{proof}

Reformulating the previous lemma, we obtain the following theorem.

\begin{theorem}\label{thm:onecolor}
 Let $G$ be a graph such that $\chi_G(u,u) = \chi_G(v,v)$ holds for all $u,v \in V(G)$.
 Then (exactly) one of the following holds:
 \begin{enumerate}
  \item $G$ is not connected, or
  \item $G$ is $3$-connected, or
  \item $G$ is a cycle of length $\ell \geq 4$.
 \end{enumerate}
\end{theorem}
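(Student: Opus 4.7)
The plan is to derive the theorem as a straightforward packaging of Lemma \ref{la:as-main} together with a short case distinction. The substantive work — showing that a connected, vertex-color-uniform graph with a 2-separator must be a cycle — has already been carried out under Assumption \ref{ass:wl-2sep}, so my task is essentially to verify that the three listed outcomes exhaust all possibilities and are pairwise incompatible.

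First I would dispose of the two "extreme" cases: if $G$ is disconnected, we are in case 1, and if $G$ is 3-connected, we are in case 2. The remaining scenario is that $G$ is connected but not 3-connected, so $G$ has a separator $S$ with $1 \leq |S| \leq 2$. The subcase $|S|=1$, i.e., $G$ has a cut vertex, needs to be ruled out: the $2$-dimensional WL algorithm distinguishes cut vertices from non-cut vertices (the fact cited in the discussion of Lemma \ref{la:as-no-cut-vertex}), so under the hypothesis $\chi_G(u,u)=\chi_G(v,v)$ for all $u,v$ either no vertex is a cut vertex or every vertex is, and the latter is impossible in a finite connected graph (e.g., a leaf of any spanning tree of $G$ is never a cut vertex).

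Hence $G$ has a 2-separator and no cut vertex, so it satisfies Assumption \ref{ass:wl-2sep}. By Lemma \ref{la:as-main}, $G$ is a cycle. To obtain the length bound $\ell \geq 4$, I would simply note that the cycle $C_3 = K_3$ has no $2$-separator (removing any two vertices leaves a single vertex, which is one connected component, so a $2$-separator does not exist), hence $C_3$ is 3-connected and the assumption "not 3-connected" forces $\ell \geq 4$.

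Finally, I would verify that the three cases are mutually exclusive: 3-connected graphs are connected, ruling out overlap of case 1 with case 2; cycles of length $\ell \geq 4$ are connected, ruling out overlap of case 1 with case 3; and any cycle $C_\ell$ with $\ell \geq 4$ has the 2-separator given by any two non-adjacent vertices, so it is not 3-connected, ruling out overlap of case 2 with case 3. There is no genuine obstacle here — the entire difficulty was absorbed into Lemma \ref{la:as-main}; the only small care needed is the cut-vertex step, where one must explicitly invoke the $2$-WL detection of cut vertices to reduce the "not 3-connected" case to the hypothesis of Assumption \ref{ass:wl-2sep}.
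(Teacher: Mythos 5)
Your proposal is correct and matches the paper's intended derivation: the paper gives no explicit proof, stating only that the theorem is obtained by ``reformulating'' Lemma~\ref{la:as-main}, and your case analysis (including the exclusion of cut vertices via the argument of Lemma~\ref{la:as-no-cut-vertex} and the observation that $K_3$ is $3$-connected) fills in exactly the routine details that reformulation requires.
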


Note that the complete graphs on $2$ and $3$ vertices are $3$-connected.
The theorem also implies that a connected constituent graph of an association scheme is either $3$-connected or a cycle.
It thus provides a generalization of Kodalen's and Martin's result in \cite{KodalenM17}, where they proved the theorem in case the graph stems from a symmetric association scheme. (For other work on the connectivity of relations in association schemes, see e.g.\ \cite{Brouwer96,BrouwerK09,BrouwerM85,EvdokimovP06}.)

\section{Two Colors}\label{sec:twocolors}

Recall that our overall goal is to prove that $2$-WL assigns special colors to $2$-separators. We will use Lemma \ref{la:as-main} to prove this in case the tuples $(u,u)$ and $(v,v)$ of a $2$-separator $\{u,v\}$ obtain the same $\chi_2$-color. 
To treat the much more difficult case that $u$ and $v$ obtain distinct colors, we intend to generalize the results of the previous section to two vertex colors.
Maybe somewhat surprisingly, we obtain a statement that is similar to Lemma \ref{la:as-main}.
However, now we require the input graphs to be $2$-connected instead of only being connected. This is a necessary condition, since, for example, the star graphs $K_{1,n}$ for $n \geq 2$ are neither $3$-connected nor cycles but still, the image of $\chi_2$ only contains two vertex colors for them. 

We shall prove the following theorem. 

\begin{theorem}
 \label{thm:two-colors-main}
 Let $G$ be a $2$-connected graph with the following properties:
 \begin{enumerate}
  \item $G$ has a $2$-separator $\{w_1,w_2\}$, and
  \item for every $v \in V(G)$, there is an $i \in \{1,2\}$ such that $\chi_G(v,v) = \chi_G(w_i,w_i)$.
 \end{enumerate}
 Then $G$ is a cycle.
\end{theorem}

The route to proving the statement is similar to the one described in Section \ref{sec:onecolor}.
Still, with two colors allowing for more complexity in the graph structure, the statements and proofs become more involved and additional cases need to be considered.
To be more precise, using an induction on the order of $G$ and additional insights and tools regarding $2$-WL, the first step of the proof is to reduce the general case to one in which the partition into the two vertex color classes forms a bipartition of the graph (i.e., there are no edges between vertices of the same color).
Let $\{U,V\}$ be the partition of $V(G)$ according to vertex colors.
To achieve the reduction, we contract the connected components of $G[U]$ and $G[V]$ to single vertices.
Here, we crucially exploit the fact that $2$-WL ``has access'' to the factor graph and that thus, the factor graph still meets the requirements of the theorem.

To treat the case that $\{U,V\}$ forms a bipartition, the strategy is essentially the same as in the previous section.
We start by adapting several of the auxiliary lemmas given in the previous section to the setting of two vertex colors.

\begin{lemma}
 \label{la:distance-patterns-dist-two}
 Let $G$ be a connected graph with $n$ vertices and suppose $\{w_1,w_2\}$ is a $2$-separator of $G$.
 Let $C$ be the vertex set of a connected component of $G - \{w_1,w_2\}$ such that $|C| < \frac{n}{2}$
 and let $v \in C$.
 Then there is no $u \in V(G)$ such that
 \begin{enumerate}
  \item $\dist(u,v) \leq 2$,
  \item $\chi_G(u,u) = \chi_G(v,v)$, and
  \item $\dist(u,w_i) \leq \dist(v,w_i)-2$ for both $i \in \{1,2\}$.
 \end{enumerate}
\end{lemma}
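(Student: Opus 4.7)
The plan is to mimic the proof of Lemma \ref{la:as-distance-patterns}, arguing by contradiction. The key identity is again that $\chi_G(u,u)=\chi_G(v,v)$ forces $D(u)=D(v)$ and hence $\sum_{w\in V(G)}(\dist(v,w)-\dist(u,w))=0$. The bound $uv\in E(G)\Rightarrow|\dist(v,w)-\dist(u,w)|\le 1$ used there is replaced by the weaker but still uniform bound $|\dist(v,w)-\dist(u,w)|\le \dist(u,v)\le 2$, so the whole argument goes through provided one can arrange that \emph{at least} $|V(G)\setminus C|$ many $w$'s exhibit a gap of exactly $2$ in the right direction.

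Write $C'\coloneqq V(G)\setminus(C\cup\{w_1,w_2\})$. First I would rule out $u\in C'$: since $v\in C$ and $\{w_1,w_2\}$ is a separator, every $uv$-path passes through some $w_i$, so $\dist(u,v)\ge \dist(u,w_i)+\dist(w_i,v)$. But $u\in C'$ yields $\dist(u,w_i)\ge 1$, and condition~3 then forces $\dist(v,w_i)\ge 3$, so $\dist(u,v)\ge 4$, contradicting condition~1. Thus $u\in C\cup\{w_1,w_2\}$.

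The main step is to show $\dist(v,w)\ge \dist(u,w)+2$ for every $w\in V(G)\setminus C$. For $u\in C$ and $w\in C'$, every $u$--$w$ path traverses a separator vertex, so
\[\dist(u,w)=\min_{i\in\{1,2\}}\big(\dist(u,w_i)+\dist(w_i,w)\big)\le \min_{i}\big(\dist(v,w_i)-2+\dist(w_i,w)\big)=\dist(v,w)-2\]
by condition~3; for $w\in\{w_1,w_2\}$, the inequality is condition~3 itself. The case $u=w_1$ (the case $u=w_2$ is symmetric) and $w\in C'$ is slightly more delicate: writing $\dist(v,w)=\min\bigl(\dist(v,w_1)+\dist(w_1,w),\ \dist(v,w_2)+\dist(w_2,w)\bigr)$, I would bound the first term using $\dist(v,w_1)\ge 2$ (condition~3 with $u=w_1$) and the second by combining $\dist(v,w_2)\ge \dist(w_1,w_2)+2$ with the triangle inequality $\dist(w_2,w)\ge \dist(w_1,w)-\dist(w_1,w_2)$; both then exceed $\dist(u,w)+2=\dist(w_1,w)+2$. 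The cases $w=w_1$ and $w=w_2$ are immediate from condition~3.

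Finally, combining these gap-$2$ lower bounds on $V(G)\setminus C$ with the triangle-inequality bound $\dist(v,w)-\dist(u,w)\ge -2$ on $C$, I close with
\[0=\sum_{w\in V(G)}\bigl(\dist(v,w)-\dist(u,w)\bigr)\ge 2(n-|C|)-2|C|=2n-4|C|\ge 4,\]
where the last inequality uses $|C|\le\frac{n-2}{2}$; this contradiction completes the proof. The only nontrivial step is the distance bound when $u\in\{w_1,w_2\}$, where $\dist(u,w)$ for $w\in C'$ is not itself controlled by the separator structure; but the short triangle-inequality computation above supplies the required gap.
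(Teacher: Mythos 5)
Your proof is correct and is precisely the argument the paper intends: the paper's entire proof is the single sentence ``analogous to Lemma~\ref{la:as-distance-patterns}'', and your write-up carries out that analogy faithfully --- the $D(u)=D(v)$ summation identity, the uniform bound $|\dist(v,w)-\dist(u,w)|\le 2$, and the gap-$2$ lower bound on $V(G)\setminus C$ via the separator decomposition $\dist(x,w)=\min_i(\dist(x,w_i)+\dist(w_i,w))$. Your explicit handling of the cases $u\in C'$ and $u\in\{w_1,w_2\}$, which the paper leaves implicit, is a welcome addition and checks out.
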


\begin{proof}
 The proof is analogous to the one for Lemma \ref{la:as-distance-patterns}.
\end{proof}

\begin{lemma}
 \label{la:two-colors-small-side-distances}
 Let $G = (U,V,E)$ be a $2$-connected bipartite graph with $n$ vertices and the following properties:
 \begin{enumerate}
  \item $\chi_G(u,u) = \chi_G(u',u')$ for all $u,u' \in U$,
  \item $\chi_G(v,v) = \chi_G(v',v')$ for all $v,v' \in V$, and
  \item $G$ has a $2$-separator $\{w_1,w_2\}$ with $w_1 \in U$ and $w_2 \in V$.
 \end{enumerate}
 Let $d \coloneqq \dist(w_1,w_2)$ and let $C$ be the vertex set of a connected component of $G - \{w_1,w_2\}$ with $|C| \leq \frac{n-2}{2}$.
 Then $\dist(v,w_i) \leq d+1$ for all $v \in C$ and $i \in \{1,2\}$.
\end{lemma}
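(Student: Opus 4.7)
The plan is to prove $\dist(v,w_2) \leq d+1$ for all $v \in C$ by induction on $\ell \coloneqq \dist(v,w_1)$; the symmetric bound $\dist(v,w_1) \leq d+1$ then follows by swapping the roles of $w_1$ and $w_2$. The base case $\ell = 1$ is immediate from the triangle inequality: $\dist(v,w_2) \leq 1 + d$.

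For the inductive step ($\ell \geq 2$), I would pick $u \in N(v)$ with $\dist(u,w_1) = \ell - 1$. Since $\{w_1,w_2\}$ separates $C$ from the rest of $G$ and $\dist(u,w_1) \geq 1$, we have $u \in C \cup \{w_2\}$. The case $u = w_2$ is immediate ($\dist(v,w_2) = 1 \leq d+1$), so assume $u \in C$. I would then walk one further step and choose $u' \in N(u)$ with $\dist(u',w_1) = \ell - 2$; by the same separator argument, $u' \in C \cup \{w_1, w_2\}$. Crucially, since $G$ is bipartite and $u$ lies on the opposite side of both $v$ and $u'$, the vertex $u'$ lies on the same side as $v$, so $\chi_G(u',u') = \chi_G(v,v)$ by the coloring assumption, and $\dist(u',v) \leq 2$.

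Next, I would apply Lemma \ref{la:distance-patterns-dist-two} to $v$ with $u'$ playing the role of the forbidden vertex. Conditions (1) and (2) have been verified, and condition (3) already holds for $i = 1$ since $\dist(u',w_1) = \ell - 2 = \dist(v,w_1) - 2$. The lemma therefore rules out $\dist(u',w_2) \leq \dist(v,w_2) - 2$, forcing $\dist(u',w_2) \geq \dist(v,w_2) - 1$. The main obstacle, and the only place where bipartiteness is used beyond producing the common color of $v$ and $u'$, is the observation that $\dist(u',w_2)$ and $\dist(v,w_2)$ have the same parity (both $u'$ and $v$ lie on the same side of the bipartition, so both distances share the parity determined by $w_2$'s side); combined with the strict inequality above, this upgrades to $\dist(u',w_2) \geq \dist(v,w_2)$.

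To finish, I would handle $u'$ in three subcases. If $u' = w_1$, then $\dist(u',w_2) = d$. If $u' = w_2$, then $u \in N(w_2)$, so $\dist(v,w_2) \leq 2 \leq d+1$ directly. If $u' \in C$, the induction hypothesis applies since $\dist(u',w_1) = \ell - 2 < \ell$ and yields $\dist(u',w_2) \leq d+1$. In every case, $\dist(v,w_2) \leq \dist(u',w_2) \leq d+1$, completing the induction. Without the parity upgrade, Lemma \ref{la:distance-patterns-dist-two} alone would only deliver $\dist(v,w_2) \leq d+2$, so the tightness of the stated bound crucially rests on the bipartite structure.
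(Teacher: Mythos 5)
Your proof is correct and follows essentially the same route as the paper's: an induction on $\dist(\cdot,w_1)$ descending in steps of two, with Lemma \ref{la:distance-patterns-dist-two} applied to a vertex two steps closer to $w_1$ and the bipartite parity used to tighten the resulting inequality by one. The only cosmetic difference is that the paper maintains the invariant $\dist(u,w_2)\leq d$ on $U\cap C$ and then extends to $V\cap C$ by adjacency, whereas you carry the uniform invariant $\dist(v,w_2)\leq d+1$ over all of $C$.
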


\begin{proof}
 Since $G$ is bipartite, $d$ is odd.
 By symmetry, it suffices to show the statement for $i=2$.
 For $u \in U \cap C$, we prove by induction on $\ell \coloneqq \dist(u,w_1)$ that $\dist(u,w_2) \leq d$. Then the statement follows, because every $v \in V \cap C$ is connected to some $u \in U \cap C$.
 
 For $\ell = 0$, it holds that $u = w_1$ and $\dist(w_1,w_2) = d$.
 So suppose the statement holds for all $u \in U \cap C$ such that $\dist(u,w_1) \leq \ell$, and pick $u' \in U \cap C$ with $\dist(u',w_1) = \ell+2$.
 Let $u \in U \cap C$ such that $\dist(u,w_1) \leq \ell$ and $\dist(u,u') \leq 2$ hold. Then we have $\dist(u',w_2) \leq \dist(u,w_2)+1 \leq d+1$ by Lemma \ref{la:distance-patterns-dist-two} and the induction hypothesis.
 But since~$G$ is bipartite and we know that $u' \in U$ and $w_2 \in V$, we have that $\dist(u',w_2)$ is odd and thus, $\dist(u',w_2) \leq d$.
\end{proof}

\begin{lemma}
 \label{la:two-colors-dist-1}
 Let $G = (U,V,E)$ be a $2$-connected bipartite graph with $n$ vertices and the following properties:
 \begin{enumerate}
  \item\label{item:two-colors-dist-1-1} $\chi_G(u,u) = \chi_G(u',u')$ for all $u,u' \in U$,
  \item $\chi_G(v,v) = \chi_G(v',v')$ for all $v,v' \in V$, and
  \item $G$ has a $2$-separator $\{w_1,w_2\}$.
 \end{enumerate}
 Then $w_1w_2 \notin E(G)$.
\end{lemma}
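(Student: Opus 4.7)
The plan is to imitate the argument of Lemma \ref{la:as-dist-1}, but to use the bipartite structure to set up a degree comparison between $w_2$ and an arbitrary vertex of $C \cap V$ (rather than within a single regular color class). Suppose for contradiction that $w_1w_2 \in E(G)$; then by bipartiteness we may assume $w_1 \in U$ and $w_2 \in V$. Let $C$ be the vertex set of a connected component of $G - w_1w_2$ with $|C| \leq \frac{n-2}{2}$, which exists because $G - \{w_1,w_2\}$ has $n-2$ vertices and at least two connected components.

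Since $d \coloneqq \dist(w_1,w_2) = 1$, Lemma \ref{la:two-colors-small-side-distances} gives $\dist(v,w_i) \leq 2$ for every $v \in C$ and both $i \in \{1,2\}$. Because $G$ is bipartite, distances within $U$ and within $V$ are even while distances between $U$ and $V$ are odd. Hence for $v \in C \cap U$ we have $\dist(v,w_1) = 2$ and $\dist(v,w_2) = 1$, i.e.\ $v \in N(w_2)$; symmetrically, for $v \in C \cap V$ we have $v \in N(w_1)$. So $C \cap U \subseteq N(w_2)$ and $C \cap V \subseteq N(w_1)$. One also checks that $|C| \geq 2$ and hence both $C \cap U$ and $C \cap V$ are nonempty: indeed if $C = \{v_0\}$, then the 2-connectedness of $G$ would force $v_0$ to be adjacent to both $w_1$ and $w_2$, forcing $v_0$ into both parts of the bipartition, which is impossible.

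Let $C' \coloneqq V(G) \setminus (C \cup \{w_1,w_2\})$, which is nonempty since $\{w_1,w_2\}$ is a separator. Because $G$ is 2-connected, $w_1$ is not a cut vertex, so $w_2$ must have at least one neighbor in $C'$; combined with $C \cap U \subseteq N(w_2)$ and $w_1 \in N(w_2)$ this yields
\[
\deg(w_2) \;\geq\; |C \cap U| + 2.
\]
On the other hand, any $v \in C \cap V$ has $N(v) \subseteq (C \cap U) \cup \{w_1\}$, because $v$'s neighbors lie in $U$ and $v$ is separated from $C'$ by $\{w_1,w_2\}$ while $w_2 \in V$. Hence $\deg(v) \leq |C \cap U| + 1 < \deg(w_2)$. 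But $v, w_2 \in V$, so by condition (2) of the lemma and the fact that the $2$-dimensional WL algorithm determines degrees, $\deg(v) = \deg(w_2)$, a contradiction.

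The main obstacle compared to the one-color situation of Lemma \ref{la:as-dist-1} is that we can no longer appeal to global regularity; the key new observation is that bipartiteness forces $C \cap U$ and $C \cap V$ each to attach entirely to the opposite separator vertex, which allows a clean degree comparison \emph{within} the color class $V$ between $w_2$ and a vertex of $C \cap V$. Verifying that $C$ contains vertices of both parts (so that such a $v$ exists) is the only subtle point and is handled by the small case analysis above.
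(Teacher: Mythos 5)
Your proof is correct and follows essentially the same route as the paper's: both derive from Lemma~\ref{la:two-colors-small-side-distances} and bipartiteness that the small component attaches entirely to the opposite separator vertex, and then obtain a contradiction by comparing degrees within one color class (the paper compares $\deg(w_1)$ against a vertex of $C\cap U$, you mirror this with $\deg(w_2)$ against a vertex of $C\cap V$). Your explicit check that $C$ meets both sides of the bipartition is a welcome detail that the paper's proof leaves implicit.
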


\begin{proof}
 Since $G$ is bipartite and by symmetry, we only need to consider the case that $w_1 \in U$ and $w_2 \in V$.
 Suppose towards a contradiction that $w_1w_2 \in E(G)$.
 Let $C$ be the vertex set of a connected component of $G - \{w_1,w_2\}$ such that $|C| \leq \frac{n-2}{2}$.
 For $i,j \geq 1$, let
 \[C_{i,j} \coloneqq \{v \in C \mid \dist(v,w_1) = i \text{ and } \dist(v,w_2) = j\}.\]
 By Lemma \ref{la:two-colors-small-side-distances}, we conclude that $C_{i,j} = \emptyset$ unless $(i,j) \in \{(1,1),(1,2),(2,1),(2,2)\}$.
 Furthermore, $C_{1,1} = C_{2,2} = \emptyset$, since $G$ is bipartite.

 Note that $C_{1,2} \cup \{w_2\} \subseteq N(w_1)$.
 Moreover, since $G$ is $2$-connected, $w_1$ must have a neighbor in $V(G) \setminus (C \cup \{w_1,w_2\})$.
 Thus, $\deg(w_1) \geq |C_{1,2}| + 2$. Let $v \in C_{2,1} \subseteq U$.
 Then all neighbors of~$v$ are contained in $C_{1,2} \cup \{w_2\}$, since $C_{1,1} = C_{2,2} = \emptyset$.
 Hence, $\deg(v) \leq |C_{1,2}| + 1 < \deg(w_1)$, which contradicts Condition \ref{item:two-colors-dist-1-1}.
\end{proof}

We also require that $2$-WL ``has access'' to certain factor graphs of $G$, which allows us to build an inductive argument in case $G$ is not bipartite with a bipartition into the color classes.
We present the necessary tools next.

Let $G$ be a graph and let $\mathcal{C} \subseteq \{\chi_G(v_1,v_2) \mid v_1, v_2 \in V(G), v_1 \neq v_2\}$.
We define the graph~$G[\mathcal{C}]$ with vertex set
\begin{align*}
 V(G[\mathcal{C}]) {}\coloneqq{} &\{v_1 \in V(G) \mid \exists v_2 \in V(G) \colon \chi_G(v_1,v_2) \in \mathcal{C}\} {}\cup{} \\
                            &\{ v_2 \in V(G) \mid \exists v_1 \in V(G) \colon \chi_G(v_1,v_2) \in \mathcal{C}\}
\end{align*}
and edge set
\[E(G[\mathcal{C}]) {}\coloneqq{} \{v_1v_2 \mid \chi_G(v_1,v_2) \in \mathcal{C}\}.\]
Moreover, let $A_1,\dots,A_\ell$ be the vertex sets of the connected components of $G[\mathcal{C}]$.
We define~$G/\mathcal{C}$ to be the graph obtained from contracting every set $A_i$ to a single vertex.
Formally,
\[V(G/\mathcal{C}) \coloneqq \{\{v\} \mid v \in V(G) \setminus V(G[\mathcal{C}])\} \cup \{A_1,\dots,A_\ell\}\]
and
\[E(G/\mathcal{C}) \coloneqq \{X_1X_2 \mid \exists v_1 \in X_1,v_2 \in X_2\colon v_1v_2 \in E(G)\}.\]

\begin{lemma}[see {\cite[Theorem 3.1.11]{ChenP19}}]
 \label{la:factor-graph-2-wl}
 Let $G$ be a graph and $\mathcal{C} \subseteq \{\chi_G(v_1,v_2) \mid v_1, v_2 \in V(G), v_1 \neq v_2\}$.
 For all $X_1,X_2 \in V(G/\mathcal{C})$, define
 \[(\chi_G/\mathcal{C})(X_1,X_2) \coloneqq \big\{\!\big\{\chi_G(v_1,v_2) \mid v_1 \in X_1, v_2 \in X_2\big\}\!\big\}.\]
 Then $\chi_G/\mathcal{C}$ is a stable coloring of the graph $G/\mathcal{C}$ with respect to $2$-WL.
 
 Also, for all $X_1,X_2,X_1',X_2' \in V(G/\mathcal{C})$, it holds that $(\chi_G/\mathcal{C})(X_1,X_2) = (\chi_G/\mathcal{C})(X_1',X_2')$ or $(\chi_G/\mathcal{C})(X_1,X_2) \cap (\chi_G/\mathcal{C})(X_1',X_2') = \emptyset$.
\end{lemma}

We can use the lemma to deduce the following corollary, which will turn out to be useful for the induction that we perform afterwards to prove Theorem \ref{thm:two-colors-main}.

\begin{corollary}
 \label{cor:wl-factor-graph}
 Let $G$ be a graph and suppose there are colors $c_1 \neq c_2$ such that $\{\chi_G(v,v) \mid v \in V(G)\} = \{c_1,c_2\}$.
 Let $U \coloneqq \{u \in V(G) \mid \chi_G(u,u) = c_1\}$ and $V \coloneqq \{v \in V(G) \mid \chi_G(v,v) = c_2\}$.
 Also let $U_1,\dots,U_k$ be the vertex sets of the connected components of $G[U]$ and let $V_1,\dots,V_\ell$ be the vertex sets of the connected components of $G[V]$.
 Let $G'$ be the graph with $V(G') = \{U_1,\dots,U_k,V_1,\dots,V_\ell\}$ and whose edge set consists of all $U_iV_j$ for which there are $u \in U_i$, $v \in V_j$ such that $uv \in E(G)$.

 Then $\chi_{G'}(U_i,U_i) = \chi_{G'}(U_j,U_j)$ for all $i,j \in [k]$, and $\chi_{G'}(V_i,V_i) = \chi_{G'}(V_j,V_j)$ for all~$i,j \in [\ell]$. 
\end{corollary}

\begin{proof}
 Let $\mathcal{C} \coloneqq \{\chi_G(v,w) \mid vw \in E(G), \chi_G(v,v) = \chi_G(w,w)\}$.
 Observe that $vw \in E(G)$ and $\chi_G(v,v) = \chi_G(w,w)$ holds for all $v,w \in V(G)$ with $\chi_G(v,w) \in \mathcal{C}$.
 In particular, $U_1,\dots,U_k,V_1,\dots,V_\ell$ are precisely the connected components of $G[\mathcal{C}]$.
 So $\chi_{G'}(U_i,U_i) = \chi_{G'}(U_j,U_j)$ for all $i,j \in [k]$, and $\chi_{G'}(V_i,V_i) = \chi_{G'}(V_j,V_j)$ for all $i,j \in [\ell]$ by Lemma \ref{la:factor-graph-2-wl}.
\end{proof}

With this, we are now ready to prove the main result of this section.

\begin{proof}[Proof of Theorem \ref{thm:two-colors-main}]
 By Lemma \ref{la:as-main}, we can assume without loss of generality that $\chi_G(w_1,w_1) \neq \chi_G(w_2,w_2)$.
 The statement is proved by induction on the graph order $n$.
 For~$n \leq 4$, a simple case analysis among the possible graphs $G$ yields the statement.
 
 So let $n \geq 5$. Again, it suffices to prove the statement for the case that $G$ is an $n$-vertex graph with a maximum edge set that satisfies the requirements of the theorem.  
 Let
 \[U \coloneqq \big\{u \in V(G) \ \big\vert \ \chi_G(u,u) = \chi_G(w_1,w_1)\big\}\]
 and
 \[V \coloneqq \big\{v \in V(G) \ \big\vert \ \chi_G(v,v) = \chi_G(w_2,w_2)\big\}.\]
 Let $U_1,\dots,U_k$ be the vertex sets of the connected components of $G[U]$ and let $V_1,\dots,V_\ell$ be the vertex sets of the connected components of $G[V]$.
 Without loss of generality, assume that $w_1 \in U_1$ and $w_2 \in V_1$.
 Let $C$ be the vertex set of a connected component of $G - \{w_1,w_2\}$ with $|C| \leq \frac{n-2}{2}$.
 Also, let $C' \coloneqq V(G) \setminus (C \cup \{w_1,w_2\})$.
 
 \begin{claim}
  \label{cl:only-two-components-in-c}
  Suppose that $C \subseteq U_1 \cup V_1$ or $C' \subseteq U_1 \cup V_1$. Then $G$ is a cycle.
 \end{claim}
 \begin{claimproof}
  It is not hard to see that, by reasons of connectivity and since $|U_i| = |U_{j}|$ and $|V_i| = |V_{j}|$ holds for all $i, j$, it suffices to show that $|U_1| \leq 2$ and $|V_1| \leq 2$.

  Suppose $C \subseteq U_1 \cup V_1$. Then $C \cap (U_1 \cup V_1) \neq \emptyset$.
  By symmetry, we may assume that there exists a vertex $u_1 \in U_1 \cap C$ with $u_1w_1 \in E(G)$.
  We first argue that $U_1 \cap C' = \emptyset$.
  Towards a contradiction, suppose that there exists $u'_1 \in U_1 \cap C'$.
  Then $w_1$ is a cut vertex in $G[U_1]$. Since $2$-WL distinguishes arcs within connected components from arcs between different connected components, by \cite[Corollary 7]{kieponschwei19}, all vertices in $U_1$ must be cut vertices, in addition to $G[U_1]$ being connected.
  It is easy to see that there is no connected graph in which all vertices are cut vertices (see also the comment following Lemma \ref{la:as-no-cut-vertex}).
  Analogously, we cannot have that both $V_1 \cap C \neq \emptyset$ and $V_1 \cap C' \neq \emptyset$ hold. 

  Thus, $U_1 \cap C' = \emptyset$. First assume that $V_1 \cap C = \emptyset$.
  In this case, we have $C \cup \{w_1\} = U_1$ and, by regularity, $uw_2 \in E(G)$ for all $u \in U_1$.
  In particular, every vertex in $U$ has exactly one neighbor in $V$. However, for $w_1$, this neighbor must be distinct from $w_2$ because it must lie in $C'$.
  If $|U_1| \geq 3$, this results in $\chi_G(w_1,w_1) \neq \chi_G(u_1, u_1)$, a contradiction.

  Now assume that $V_1 \cap C' = \emptyset$.
  Note that for all $j \neq 1$ and all $v \in V_j$, we have that $u_1v \notin E(G)$.
  Thus, $|\{j \mid \exists v \in V_j \text{s.t.} u_1v \in E(G)\}| = 1$ by the connectivity of $G$.
  Since $\chi_G(u_1,u_1) = \chi_G(u,u)$ holds for all $u \in U$, every vertex in $U$ has neighbors in exactly one of the sets $V_j$.
  Moreover, for all $u \in U_1 \setminus \{w_1\}$, this set is $V_1$. However, for $w_1 \in U_1$, it is some~$V_j \subseteq C'$. Indeed, since $U_1 \cap C' = \emptyset$, the vertex $w_1$ is the only candidate in $U \cap (C \cup \{w_1\})$ to be adjacent to a vertex in $C'$.

  Suppose there is another vertex $u_2 \in U_1$ with $w_1 \neq u_2 \neq u_1$. Then for every~$v_1 \in V_1$, the colors $\chi_G(u_1,v_1)$ and $\chi_G(u_2,v_1)$ encode that there exist vertices $v,v' \in V_1$ (possibly equal) such that $u_1v, u_2v' \in E(G)$.
  The existence of such a $v_1$ is thus encoded in $\chi_G(u_1,u_2)$.
  However, for $w_1$ and any other vertex in $U_1$, there is no equivalent vertex in $V$. Thus, $w_1$ is not incident to any edge of color $\chi_G(u_1,u_2)$.
  Hence, $\chi_G(u_1, u_1) \neq \chi_G(w_1,w_1)$, which contradicts the assumptions.
  Therefore $|U_1| \leq 2$. Analogously, it can be shown that~$|V_1| \leq 2$.

  The case $C' \subseteq U_1 \cup V_1$ follows by symmetry. Altogether, we deduce that $G$ has to be a cycle.
 \end{claimproof}
 
 So for the rest of the proof, we assume that $C \nsubseteq U_1 \cup V_1$ and $C' \nsubseteq U_1 \cup V_1$.
 Note that, since we have assumed $w_1 \in U_1$ and $w_2 \in V_1$, this implies that both $G[U]$ and $G[V]$ consist of at least two connected components each (again using the fact that $G[U]$ and $G[V]$ do not contain any cut vertices). 

 Let $G'$ be the graph with $V(G') = \{U_1,\dots,U_k,V_1,\dots,V_\ell\}$ and whose edge sets consists of all $U_iV_j$ for which there are $u \in U_i$ and $v \in V_j$ such that $uv \in E(G)$.
 We argue that $G'$ satisfies the requirements of the theorem.
 First note that $G' - \{U_1,V_1\}$ is not connected and hence, $\{U_1,V_1\}$ forms a $2$-separator of $G'$.
 Also, $\chi_{G'}(U_i,U_i) = \chi_{G'}(U_1,U_1)$ holds for all~$i \in [k]$, and $\chi_{G'}(V_j,V_j) = \chi_{G'}(V_1,V_1)$ for all $j \in [\ell]$ by Corollary \ref{cor:wl-factor-graph}.

 Clearly, $G'$ is connected because $G$ is.
 So it remains to argue that $G'$ is $2$-connected. 
 If not, then $G'$ contains a cut vertex.
 Without loss of generality, assume that $U_i$ for a certain~$i \in [k]$ is a cut vertex.
 Then, since $2$-WL recognizes cut vertices (\cite[Corollary 7]{kieponschwei19}) and, by Corollary \ref{cor:wl-factor-graph}, every vertex $U_i$ with $i \in [k]$ is a cut vertex of $G'$.
 Considering the cut tree of~$G'$ (i.e., the tree in which every cut vertex and every $2$-connected component forms a vertex), it is not hard to see that this implies that every $V_j$ for $j \in [\ell]$ has only one neighbor.
 But this is only possible if $C \subseteq V_1$ or $C' \subseteq V_1$, and we assumed the contraries of both cases.
 
 \medskip
 Suppose that $|V(G')| < |V(G)|$.
 Then, by the induction hypothesis, the graph $G'$ is a cycle.
 Thus, every vertex in $U$ is adjacent to vertices in exactly one or two connected components of $G[V]$. 

 We first consider the subcase that both $C \cap U_1 = \emptyset$ and $C \cap V_1 = \emptyset$ hold. 

 The vertex $w_1$ is adjacent to vertices in a connected component with vertex set $V_j \subseteq C$, whereas all vertices $u \in U_1$ with $u \neq w_1$ must be adjacent to the same connected component with vertex set $V_{j'} \subseteq (C' \cup \{w_2\})$. 
 In particular, we have $j \neq j'$. Assuming $|U_1| > 2$, similarly as in the proof of Claim \ref{cl:only-two-components-in-c}, we reach a contradiction considering $\chi_G(w_1,w_1)$. By symmetry, we cannot have $|V_1| > 2$ either. The subcase that $C' \cap U_1 = \emptyset$ and $C' \cap V_1 = \emptyset$ can be treated analogously.

 Thus, suppose without loss of generality that $C \cap U_1 \neq \emptyset$ and $C' \cap V_1 \neq \emptyset$. (The case that $C' \cap U_1 \neq \emptyset$ and $C \cap V_1 \neq \emptyset$ follows by symmetry.) 

 Since we cannot have that $C = U_1 \backslash \{w_1\}$, there must be a connected component with vertex set $V_j \subset C$ for a certain $j \neq 1$ such that for every $u_1 \in U_1 \backslash \{w_1\}$, we have $N(u_1) \cap V \subseteq V_j$. However, all neighbors of $w_1$ in $V$ are contained in a $V_{j'} \subseteq C' \cup \{w_2\}$. In particular, $j \neq j'$. Again, we reach a contradiction when assuming that $|U_1| > 2$ and, by symmetry, also for the assumption $|V_1| > 2$.

 Thus, if $|V(G')| < |V(G)|$, then $G$ is a cycle.
 \medskip

 Now assume $|V(G')| = |V(G)|$. This means that $G$ is a bipartite graph with bipartition $\{U,V\}$ and all $U_i$ and all $V_j$ are singletons.
 From Lemma \ref{la:two-colors-dist-1}, we know that $w_1w_2 \notin E(G)$. Let~$d \coloneqq \dist(w_1,w_2)$.
 Note that $d$ is odd and thus, $d \geq 3$.
 
 \begin{claim}
  \label{cl:unique-shortest-path-two-colors}
  Let $u,v \in V(G)$ such that $\dist(u,v) < d$. Then there is a unique shortest path from $u$ to $v$.
 \end{claim}
 \begin{claimproof}
  Suppose the statement does not hold and let $\ell < d$ be the minimal number for which the claim is violated.
  Let $u,v \in V(G)$ be two vertices with $\dist(u,v) = \ell$ such that there are two paths of length $\ell$ from $u$ to $v$.
  Also, let $E' \coloneqq \{u'v' \mid \chi_G(u,v) = \chi_G(u',v')\}$ and consider the graph $G' = (V(G),E(G) \cup E')$.
  We argue that $G'$ still satisfies the assumptions of the theorem, which contradicts the edge maximality of $G$.
  
  First, the coloring $\chi_G$ is also a stable coloring of $G'$, which implies that $\chi_{G}$ refines the coloring $\chi_{G'}$.
  In particular, for every $v \in V(G)$, there is an $i \in \{1,2\}$ such that $\chi_G(v,v) = \chi_G(w_i,w_i)$.
  
  Now let $u'v' \in E'$.
  Then $\dist(u',v') = \ell$ and there are at least two different walks of length~$\ell$ from $u'$ to $v'$, because the same statement holds for $u$ and $v$.
  Due to the minimality of $\ell$, the two walks are vertex-disjoint paths.
  If $u'$ and $v'$ lie in different connected components of $G - \{w_1,w_2\}$, then one of the two paths must pass through $w_1$ while the other one passes through $w_2$, forming a cycle of length $2\ell < 2d$.
  This implies $\dist(w_1,w_2) < d$, a contradiction.
  Thus, we conclude that there is a connected component $C$ of $G - \{w_1,w_2\}$ such that $u',v' \in C \cup \{w_1,w_2\}$.
  But this means that $\{w_1,w_2\}$ is also a $2$-separator of the graph $G'$.
 \end{claimproof}
 
 For $i,j \geq 1$, let
 \[C_{i,j} \coloneqq \{v \in C \mid \dist(v,w_1) = i \text{ and } \dist(v,w_2) = j\}.\]
 By Lemma \ref{la:two-colors-small-side-distances}, we conclude that $C_{i,j} = \emptyset$ unless $i,j \leq d+1$. Furthermore, by the definition of $d$, it holds that $C_{i,j} = \emptyset$ unless $i+j \geq d$.
 Since $G$ is bipartite, we know $C_{i,j} = \emptyset$ whenever~$i+j$ is even.
 
 \begin{claim}
  \label{cl:degree-seven-cycle}
  Suppose $d \geq 7$. Then $G$ is a cycle.
 \end{claim}
 \begin{claimproof}
  We argue that $\deg(u) = 2$ for every $u \in U$. Analogously, one can prove that $\deg(v) = 2$ for every $v \in V$, which together means that $G$ is a cycle.
  
  First suppose that $C_{4,d-4} \neq \emptyset$ and let $u \in C_{4,d-4}$. Note that $u \in U$, since $4$ is even.
  Then $N(u) \subseteq C_{3,d-3} \cup C_{5,d-5} \cup C_{5,d-3}$.
  Moreover, $|N(u) \cap (C_{3,d-3} \cup C_{5,d-5})| = 2$, otherwise there would be two shortest paths from $u$ to $w_1$ or two shortest paths from $u$ to $w_2$, contradicting Claim \ref{cl:unique-shortest-path-two-colors}.
  Now suppose towards a contradiction that $\deg(u) > 2$.
  Then there is a $v \in N(u) \cap C_{5,d-3}$.
  We have that $N(v) \subseteq C_{4,d-4} \cup C_{4,d-2} \cup C_{6,d-4} \cup C_{6,d-2}$.
  Using Claim~\ref{cl:unique-shortest-path-two-colors}, we conclude that $|N(v) \cap (C_{4,d-4} \cup C_{4,d-2})| \leq 1$ and $|N(v) \cap (C_{4,d-4} \cup C_{6,d-4})| \leq 1$.
  Since $\deg(v) \geq 2$ and $u \in N(v) \cap C_{4,d-4} \neq \emptyset$, it follows that $N(v) \cap C_{6,d-2} \neq \emptyset$.
  But this contradicts Lemma \ref{la:distance-patterns-dist-two}.
  So $\deg(u) = 2$.
  
  Now suppose $C_{4,d-4} = \emptyset$ and hence, $C_{i,d-i} = \emptyset$ holds for all $1 \leq i \leq d-1$. Since $N(w_1) \cap C = C_{1,d-1} \cup C_{1,d+1}$, we conclude that $C_{1,d+1} \neq \emptyset$ and hence, $C_{i,d+2-i} \neq \emptyset$ holds for all $1 \leq i \leq d+1$.
  
  So pick a vertex $u \in C_{4,d-2}$.
  Then $N(u) \subseteq C_{3,d-1} \cup C_{5,d-3} \cup C_{5,d-1}$.
  Again, it holds that~$|N(u) \cap (C_{3,d-1} \cup C_{5,d-3})| = 2$, using Claim \ref{cl:unique-shortest-path-two-colors}.
  Suppose towards a contradiction that $\deg(u) > 2$.
  Then there is a vertex $v \in N(u) \cap C_{5,d-1}$.
  We have that $N(v) \subseteq C_{4,d-2} \cup C_{4,d} \cup C_{6,d-2} \cup C_{6,d}$.
  By Claim~\ref{cl:unique-shortest-path-two-colors}, we conclude that $|N(v) \cap (C_{4,d-2} \cup C_{4,d})| \leq 1$ and $|N(v) \cap (C_{4,d-2} \cup C_{6,d-2})| \leq 1$.
  Since $\deg(v) \geq 2$, it follows that $N(v) \cap C_{6,d} \neq \emptyset$.
  As before, this contradicts Lemma \ref{la:distance-patterns-dist-two}.
 \end{claimproof}
 
 \begin{figure}
  \centering
  \scalebox{0.85}{
  \begin{tikzpicture}
   \node[tinyvertex,fill=red,label=below:{$w_1$}] (w1) at (0,0) {};
   \node[tinyvertex,fill=blue,label=below:{$w_2$}] (w2) at (10,0) {};
   
   \node[circle,minimum size=32pt,fill=blue,opacity=0.4,text opacity=1] (c1) at (1.6,0.8) {\large $C_{1,4}$};
   \node[circle,minimum size=32pt,fill=red,opacity=0.4,text opacity=1] (c2) at (3.8,1.2) {\large $C_{2,3}$};
   \node[circle,minimum size=32pt,fill=blue,opacity=0.4,text opacity=1] (c3) at (6.2,1.2) {\large $C_{3,2}$};
   \node[circle,minimum size=32pt,fill=red,opacity=0.4,text opacity=1] (c4) at (8.4,0.8) {\large $C_{4,1}$};
   \node[circle,minimum size=32pt,fill=blue,opacity=0.4,text opacity=1] (c5) at (0.5,2.2) {\large $C_{1,6}$};
   \node[circle,minimum size=32pt,fill=red,opacity=0.4,text opacity=1] (c6) at (2.3,2.6) {\large $C_{2,5}$};
   \node[circle,minimum size=32pt,fill=blue,opacity=0.4,text opacity=1] (c7) at (4.1,3.0) {\large $C_{3,4}$};
   \node[circle,minimum size=32pt,fill=red,opacity=0.4,text opacity=1] (c8) at (5.9,3.0) {\large $C_{4,3}$};
   \node[circle,minimum size=32pt,fill=blue,opacity=0.4,text opacity=1] (c9) at (7.7,2.6) {\large $C_{5,2}$};
   \node[circle,minimum size=32pt,fill=red,opacity=0.4,text opacity=1] (c10) at (9.5,2.2) {\large $C_{6,1}$};
   \node[circle,minimum size=32pt,fill=blue,opacity=0.4,text opacity=1] (c11) at (2.4,4.4) {\large $C_{3,6}$};
   \node[circle,minimum size=32pt,fill=red,opacity=0.4,text opacity=1] (c12) at (4.2,4.8) {\large $C_{4,5}$};
   \node[circle,minimum size=32pt,fill=blue,opacity=0.4,text opacity=1] (c13) at (5.8,4.8) {\large $C_{5,4}$};
   \node[circle,minimum size=32pt,fill=red,opacity=0.4,text opacity=1] (c14) at (7.6,4.4) {\large $C_{6,3}$};
   \node[circle,minimum size=32pt,fill=blue,opacity=0.4,text opacity=1] (c15) at (4.2,6.4) {\large $C_{5,6}$};
   \node[circle,minimum size=32pt,fill=red,opacity=0.4,text opacity=1] (c16) at (5.8,6.4) {\large $C_{6,5}$};
   
   \draw[thick] (w1) edge (c1);
   \draw[thick] (w1) edge (c5);
   \draw[thick] (w2) edge (c4);
   \draw[thick] (w2) edge (c10);
   \draw[thick] (c1) edge (c2);
   \draw[thick] (c2) edge (c3);
   \draw[thick] (c3) edge (c4);
   \draw[thick] (c1) edge (c6);
   \draw[thick] (c2) edge (c7);
   \draw[thick] (c3) edge (c8);
   \draw[thick] (c4) edge (c9);
   \draw[thick] (c5) edge (c6);
   \draw[thick] (c6) edge (c7);
   \draw[thick] (c7) edge (c8);
   \draw[thick] (c8) edge (c9);
   \draw[thick] (c9) edge (c10);
   \draw[thick] (c6) edge (c11);
   \draw[thick] (c7) edge (c12);
   \draw[thick] (c8) edge (c13);
   \draw[thick] (c9) edge (c14);
   \draw[thick] (c11) edge (c12);
   \draw[thick] (c12) edge (c13);
   \draw[thick] (c13) edge (c14);
   \draw[thick] (c12) edge (c15);
   \draw[thick] (c13) edge (c16);
   \draw[thick] (c15) edge (c16);
   
   \draw[dashed] (5,0) ellipse (5.4 and 0.3);

   \draw[dashed] (10.2,-1.2) arc[start angle=-20,end angle=200, x radius=5.52, y radius=6.4];
   \node at (5,-1) {\large $C'$};
   
  \end{tikzpicture}
  }
  \caption{Visualization of the set $C_{i,j}$ for $d = 5$ in the proof of Theorem \ref{thm:two-colors-main}.
   There is an edge between two sets whenever there may be edges connecting vertices from the two sets.}
  \label{fig:distance-pyramid-two-colors}
 \end{figure}

 \begin{claim}
  Suppose $d = 5$. Then $G$ is a cycle.
 \end{claim}
 \begin{claimproof}
  Again, it suffices to show $\deg(u) = 2$ for every $u \in U$, since showing $\deg(v) = 2$ for every $v \in V$ works analogously. If $C_{2,3} \neq \emptyset$, we can argue similarly as in the proof of Claim~\ref{cl:degree-seven-cycle} that there are a vertex $u \in C_{2,3}$ and a vertex $v \in N(u) \cap C_{3,4}$ with $N(v) \cap C_{4,5} \neq \emptyset$, which contradicts Lemma \ref{la:distance-patterns-dist-two}.

  Thus, we can assume that $C_{2,3} = \emptyset$.
  Hence, we have $C_{i,d-i} = \emptyset$ for every $1 \leq i \leq 4$ and therefore $C_{1,6} \neq \emptyset$ and $C_{i,7-i} \neq \emptyset$ for every $1 \leq i \leq 6$.
  Analogously as in the proof of Claim \ref{cl:degree-seven-cycle}, there are vertices $u \in C_{4,3}$ and $v \in N(u) \cap C_{5,4}$, supposing $\deg(u) > 2$.
  We have $N(v) \subseteq C_{4,3} \cup C_{4,5} \cup C_{6,3} \cup C_{6,5}$. By Claim \ref{cl:unique-shortest-path-two-colors}, we have $|N(v) \cap (C_{4,3} \cup C_{6,3})| \leq 1$.
  We would like to apply the same argument to $N(v) \cap (C_{4,3} \cup C_{4,5})$.
  However, since $\dist(v,w_1) = 5 = d$, we cannot apply Claim \ref{cl:unique-shortest-path-two-colors} immediately.
  Still, note that every shortest path from $w_1$ to~$w_2$ with all internal vertices in $C$ has length $7$.
  Therefore, analogously as in the proof of Claim~\ref{cl:unique-shortest-path-two-colors}, using the edge maximality of $G$, we can show that there cannot be cycles of length at most~$10$ in $G$ and that, thus, there must be a unique shortest path from $v$ to $w_1$.
  In particular, this implies that $|N(v) \cap (C_{4,3} \cup C_{4,5})| \leq 1$.
  We can conclude the proof analogously as the one for Claim \ref{cl:degree-seven-cycle}.
 \end{claimproof}
 
 \begin{figure}
  \centering
  \scalebox{0.85}{
  \begin{tikzpicture}
   \node[tinyvertex,fill=red,label=below:{$w_1$}] (w1) at (0,0) {};
   \node[tinyvertex,fill=blue,label=below:{$w_2$}] (w2) at (8,0) {};
   
   \node[circle,minimum size=32pt,fill=blue,opacity=0.4,text opacity=1] (c1) at (2.5,1.2) {\large $C_{1,2}$};
   \node[circle,minimum size=32pt,fill=red,opacity=0.4,text opacity=1] (c2) at (5.5,1.2) {\large $C_{2,1}$};
   \node[circle,minimum size=32pt,fill=blue,opacity=0.4,text opacity=1] (c3) at (1.0,2.2) {\large $C_{1,4}$};
   \node[circle,minimum size=32pt,fill=red,opacity=0.4,text opacity=1] (c4) at (3.0,2.8) {\large $C_{2,3}$};
   \node[circle,minimum size=32pt,fill=blue,opacity=0.4,text opacity=1] (c5) at (5.0,2.8) {\large $C_{3,2}$};
   \node[circle,minimum size=32pt,fill=red,opacity=0.4,text opacity=1] (c6) at (7.0,2.2) {\large $C_{4,1}$};
   \node[circle,minimum size=32pt,fill=blue,opacity=0.4,text opacity=1] (c7) at (3.0,4.4) {\large $C_{3,4}$};
   \node[circle,minimum size=32pt,fill=red,opacity=0.4,text opacity=1] (c8) at (5.0,4.4) {\large $C_{4,3}$};
   
   \draw[thick] (w1) edge (c1);
   \draw[thick] (w1) edge (c3);
   \draw[thick] (w2) edge (c2);
   \draw[thick] (w2) edge (c6);
   \draw[thick] (c1) edge (c2);
   \draw[thick] (c1) edge (c4);
   \draw[thick] (c2) edge (c5);
   \draw[thick] (c3) edge (c4);
   \draw[thick] (c4) edge (c5);
   \draw[thick] (c5) edge (c6);
   \draw[thick] (c4) edge (c7);
   \draw[thick] (c5) edge (c8);
   \draw[thick] (c7) edge (c8);
   
   \draw[dashed] (4,0) ellipse (4.4 and 0.3);

   \draw[dashed] (8.2,-1.2) arc[start angle=-20,end angle=200, x radius=4.45, y radius=5.2];
   \node at (4,-1) {\large $C'$};
   
  \end{tikzpicture}
  }
  \caption{Visualization of the set $C_{i,j}$ for $d = 3$ in the proof of Theorem \ref{thm:two-colors-main}.
   There is an edge between two sets whenever there may be edges connecting vertices from the two sets.}
  \label{fig:distance-pyramid-two-colors-d-3}
 \end{figure}
 
 \begin{claim}
  Suppose $d = 3$. Then $G$ is a cycle.
 \end{claim}
 \begin{claimproof}
  First observe that all vertices of $C$ must be contained in one of the sets $C_{i,j}$ with $(i,j) \in \{(1,2),(2,1),(1,4),(2,3),(3,2),(4,1),(3,4),(4,3)\}$.

  If $w_2$ has only one neighbor $v$ in $C'$, then $\{w_1,v\}$ is a unicolored $2$-separator.
  Consider the graph $\widehat{G}$ with vertex set $U$ and an edge between every pair of vertices $u, v$ if there is a path from $u$ to $v$ of length $2$ (i.e., via a vertex in $V$) in $G$.
  Since $2$-WL knows whether such paths exist, we can apply Theorem \ref{thm:onecolor}.
  It is easy to see that $\widehat{G}$ is connected and not $3$-connected.
  Thus, it must be a cycle. In particular, every vertex in $\widehat G$ has degree~$2$.
  Therefore, the position of each vertex in $V$ is uniquely determined: the graph $G$ is a subdivision of $\widehat G$. Thus, $G$ is a cycle.

  Now assume 
  \begin{equation}\label{ass:two-neighbors}
    \deg(w_2) > |N(w_2) \cap C'| \geq 2.
  \end{equation}

  First suppose $C_{1,2} \neq \emptyset$ and, equivalently, $C_{2,1} \neq \emptyset$.
  For every vertex $u \in C_{2,1}$, we have $N(u) \subseteq C_{1,2} \cup C_{3,2} \cup \{w_2\}$.
  By Claim \ref{cl:unique-shortest-path-two-colors}, it holds that $|N(u) \cap C_{1,2}| = 1$.
  Thus, since $uw_2 \in E(G)$, we have $|N(u) \cap C_{3,2}| = r_U - 2$, where $r_U = \deg(u) = \deg(w_1)$.
  Furthermore, again by Claim \ref{cl:unique-shortest-path-two-colors}, for $u' \in C_{2,1}$ with $u' \neq u$, we have $N(u) \cap N(u') = \{w_2\}$ and thus, $|N(C_{2,1}) \cap C_{3,2}| = |C_{2,1}| \cdot (r_U - 2)$.
  Furthermore, by a similar reasoning, we have that $|N(C_{4,1}) \cap C_{3,2}| = |C_{4,1}| \cdot (r_U - 1)$. Every vertex in $C_{3,2}$ has distance $2$ to $w_2$ and thus has a neighbor in $C_{2,1} \cup C_{4,1}$.
  Note that $N(w_2) \cap C = C_{2,1} \cup C_{4,1}$.
  Now since, by Claim \ref{cl:unique-shortest-path-two-colors}, we have $\left(N(C_{2,1}) \cap N(C_{4,1})\right) \setminus \{w_2\} = \emptyset$, we obtain that $|C_{3,2}| = |C_{2,1}| \cdot (r_U - 2) + |C_{4,1}| \cdot (r_U - 1) < (|C_{2,1}| + |C_{4,1}|) \cdot (r_U - 1) \leq (r_V - 2) \cdot (r_U - 1)$ by \eqref{ass:two-neighbors}, where $r_V = \deg(w_2)$.

  Let $v \in C_{1,2}$. Similarly as above, $|N(v) \cap C_{2,3}| = r_V - 2$.
  Furthermore, for every vertex $u \in N(v) \cap C_{2,3}$, by Claim \ref{cl:unique-shortest-path-two-colors} and Lemma \ref{la:distance-patterns-dist-two}, we have that $N(u) \subseteq C_{1,2} \cup C_{3,2}$ and $|N(u) \cap C_{1,2}| = 1$.
  Again using Claim \ref{cl:unique-shortest-path-two-colors}, every two vertices in $N(v) \cap C_{2,3}$ must have disjoint neighborhoods in $C_{3,2}$.
  Thus, we obtain that $|C_{3,2}| \geq (r_V - 2)(r_U - 1)$.

  Altogether, $(r_V - 2)(r_U - 1) \leq |C_{3,2}| < (r_V - 2)(r_U - 1)$, which yields a contradiction.
  Therefore, $C_{1,2} = C_{2,1} = \emptyset$. Hence, it must hold that $C_{1,4} \neq \emptyset$ and consequently, also $C_{i,5-i} \neq \emptyset$ for every $i \in [4]$.
  Using the same arguments as before, we have $|C_{3,2}| \leq (r_V-2)(r_U-1)$.
  Every vertex $v' \in C_{3,2}$ has at least one neighbor in the set $C_{4,1}$ and at least one neighbor in $C_{2,3}$.
  Since for every $u' \in C_{4,3}$, there is a $v' \in C_{3,2} \cap N(u')$, this implies that $|C_{4,3}| \leq (r_V-2)(r_U-1)(r_V-2)$.

  Let $v \in C_{1,4}$ (recall that $C_{1,4} \neq \emptyset$).
  Then $|N(v) \cap C_{2,3}| = r_V -1$ and, for every vertex $u' \in N(v) \cap C_{2,3}$, we have by Claim \ref{cl:unique-shortest-path-two-colors} that $|N(u') \cap (C_{3,4} \cup C_{3,2})| = r_U-1$.
  Again by Claim \ref{cl:unique-shortest-path-two-colors}, for every such $u'$ and every $u'' \in N(v) \cap C_{2,3}$ with $u'' \neq u'$, it must hold that 
  \begin{equation}\label{eq:lower-bound-on-c43}
    N(u') \cap (C_{3,4} \cup C_{3,2}) \cap N(u'') = \emptyset.
  \end{equation}
  Let $S \coloneqq N(N(v)) \cap (C_{3,4} \cup C_{3,2})$.
  Note that every shortest path from $w_1$ to $w_2$ with all internal vertices in $C$ has length $5$.
  Therefore, using the edge maximality of $G$, analogously as in the proof of Claim \ref{cl:unique-shortest-path-two-colors}, there cannot be any cycles of length $6$ in $G$.
  In particular, this implies that for every pair $v', v'' \in S$ with $v'\neq v''$ we have $N(v') \cap N(v'') \cap C_{4,3} = \emptyset$.

  Thus, Equation \eqref{eq:lower-bound-on-c43} implies that 
  $|C_{4,3}| \geq (r_V-1)(r_U-1)(r_V-2)$, since every vertex in $S$ has at most two neighbors not contained in $C_{4,3}$.

  Altogether, we obtain $(r_V-1)(r_U-1)(r_V-2) \leq |C_{4,3}| \leq (r_V-2)(r_U-1)(r_V-2)$, which is a contradiction.
  This concludes the proof.
 \end{claimproof}
 The claim completes the proof of the theorem.
\end{proof}

\section{Detecting Decompositions with Weisfeiler and Leman}\label{sec:separators}

In this section, we show that $2$-WL implicitly computes the decomposition of a graph into its $3$-connected components.
As a main step, we first complete the proof that $2$-WL distinguishes $2$-separators from other pairs of vertices, building on the results from Sections \ref{sec:onecolor} and \ref{sec:twocolors}.
Afterwards, we derive several consequences of this result by falling back on ideas developed in \cite{kieponschwei19}.

Let $S$ be a set of vertex colors.
We say a path $u_0,\dots,u_\ell$ \emph{avoids $S$} if $\chi_G(u_i,u_i) \notin S$ holds for all $i \in [\ell-1]$. Note that we impose no restriction on the colors of the endpoints of the path.
It is easy to see that, given two vertices $u,v \in V(G)$, the algorithm $2$-WL knows if there is a path from $u$ to $v$ that avoids $S$.

We write $G|_S$ for the graph $(V,E)$ with $V = \{v \in V(G) \mid \chi_G(v,v) \in S\}$ and
\[E = \{uv \mid \text{there is a path from $u$ to $v$ in $G$ that avoids $S$}\}. \]

\begin{lemma}
 \label{la:wl-color-subset}
 Let $G$ be a graph and let $S \subseteq \{\chi_G(v,v) \mid v \in V(G)\}$. 
 Then $\chi_{G|_S}(u,u) = \chi_{G|_S}(v,v)$ for all $u,v \in V$ with $\chi_G(u,u) = \chi_G(v,v)$.
\end{lemma}

\begin{proof}
 Since $2$-WL knows whether there is a path between two vertices that avoids $S$, there is a set of colors $T$ such that $uv \in E$ if and only if $\chi_G(u,v) \in T$.
 Hence, every refinement performed by $2$-WL in $G|_S$ will also be done in $G$.
\end{proof}

\begin{theorem}\label{thm:separators}
 Let $G$ and $H$ be two graphs such that $G$ is $2$-connected.
 Let $w_1,w_2 \in V(G)$ such that $\{w_1,w_2\}$ forms a $2$-separator in $G$.
 Also let $v_1,v_2 \in V(H)$ and suppose $\chi_G(w_1,w_2) = \chi_H(v_1,v_2)$.
 Then $\{v_1,v_2\}$ forms a $2$-separator in $H$.
\end{theorem}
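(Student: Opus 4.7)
The plan is to reduce the claim to the structural classifications of Sections~\ref{sec:onecolor} and~\ref{sec:twocolors} by passing to auxiliary graphs via Lemma~\ref{la:wl-color-subset}. Set $c_i \coloneqq \chi_G(w_i,w_i) = \chi_H(v_i,v_i)$ for $i \in \{1,2\}$ (the diagonal equalities follow at once from $\chi_G(w_1,w_2) = \chi_H(v_1,v_2)$ by unpacking the initial colouring of the algorithm), let $S \coloneqq \{c_1, c_2\}$, and consider $G' \coloneqq G[[S]]$ and $H' \coloneqq H[[S]]$. By Lemma~\ref{la:wl-color-subset}, each of $G'$ and $H'$ carries at most two vertex colours, and the coloring $\chi_G$ refines $\chi_{G'}$ (analogously for $H$), so in particular $\chi_{G'}(w_1,w_2) = \chi_{H'}(v_1,v_2)$.

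The core technical step, in the generic case $|V(G')| \geq 4$, is to verify that $G'$ is $2$-connected and that $\{w_1,w_2\}$ forms a $2$-separator of $G'$. For $2$-connectedness: if $u$ were a cut vertex of $G'$, then breaking any $G$-path between vertices on opposite sides of $u$ in $G'$ at its $V(G')$-vertices would yield a walk in $G'$ between those sides and therefore be forced to traverse $u$; hence $u$ would be a cut vertex of $G$, contradicting the $2$-connectedness of $G$ (connectedness of $G'$ follows by the same decomposition trick). For the separator property: every edge of $G'$ unfolds to a $G$-path whose internal vertices have colours outside $S$ and thus avoid $\{w_1,w_2\}$, so a path in $G' - \{w_1,w_2\}$ between vertices lying in distinct components of $G - \{w_1,w_2\}$ would expand to a forbidden path in $G - \{w_1,w_2\}$.

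With this in hand, Theorem~\ref{thm:onecolor} (in the case $c_1 = c_2$) or Theorem~\ref{thm:two-colors-main} (in the case $c_1 \neq c_2$) forces $G'$ to be a cycle, and the same reasoning applied symmetrically to $H'$ shows that $H'$ is a cycle in which $\{v_1,v_2\}$ is a $2$-separator. To transfer the conclusion back to $H$: any hypothetical path in $H - \{v_1,v_2\}$ between vertices $a,b \in V(H')$ lying in distinct arcs of $H' - \{v_1,v_2\}$ would, after decomposition at its $V(H')$-vertices, yield a path in $H' - \{v_1,v_2\}$ bridging the two arcs, contradicting the cycle structure. Hence $\{v_1,v_2\}$ separates $H$.

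I expect the main obstacle to lie in the degenerate cases. When $V(G') = \{w_1,w_2\}$ the cycle classification is vacuous, but this case is easy: $\{w_1,w_2\}$ is then precisely the set of $S$-coloured vertices in both $G$ and $H$ (by the WL-invariance of colour-class sizes), and the separator status is directly visible through the paths-avoiding-$S$ machinery underlying Lemma~\ref{la:wl-color-subset}. The genuinely delicate case is when some component of $G - \{w_1,w_2\}$ contains no vertex of colour in $S$, as then $\{w_1,w_2\}$ may fail to separate $G'$ and the reduction to the structural theorems breaks down; handling this likely requires either enlarging $S$ by further colours that distinguish such components or a tailored argument exploiting the interplay of $2$-connectedness of $G$ with the colour regularity forced by the WL stable coloring.
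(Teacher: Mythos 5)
Your overall strategy coincides with the paper's: pass to $G' = G[[S]]$ via Lemma~\ref{la:wl-color-subset}, verify $2$-connectedness of $G'$ by contracting $G$-paths to $G'$-edges, and invoke the cycle classification (the paper applies Theorem~\ref{thm:two-colors-main}, whose proof subsumes the unicolored case via Lemma~\ref{la:as-main}) to conclude that $G'$ and $H[[S]]$ are cycles in which the pebbled pairs are non-adjacent and hence separating. Your handling of the generic case and of the degenerate case $|V(G')| = 2$ matches the paper's.

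However, the case you flag at the end as ``genuinely delicate'' and leave unresolved is a real gap, and it is precisely the middle case that the paper must (and does) treat separately: $|V(G')| \geq 3$ but all vertices of $V(G') \setminus \{w_1,w_2\}$ lie in a single connected component $C$ of $G - \{w_1,w_2\}$, so that $\{w_1,w_2\}$ is \emph{not} a separator of $G'$ and the cycle classification gives you nothing. Neither of your two suggested remedies is what the paper uses; no enlargement of $S$ is needed. The paper's argument is a direct reachability count: pick $v$ in a component $C' \neq C$ of $G - \{w_1,w_2\}$; then $w_1$ and $w_2$ are the \emph{only} $S$-colored vertices reachable from $v$ by paths avoiding $S$, and this property of the triple (a vertex, together with the set of $S$-colored endpoints of its $S$-avoiding paths) is detected by the $2$-dimensional WL algorithm. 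Hence $H$ contains a vertex $u$ from which only $v_1$ and $v_2$ are reachable among $S$-colored vertices via $S$-avoiding paths. Since $|V(G')| \geq 3$ and color classes in $S$ have equal sizes in $G$ and $H$, there is a third $S$-colored vertex $u' \notin \{v_1,v_2\}$ in $H$, and every path from $u$ to $u'$ must pass through $v_1$ or $v_2$; thus $v_1v_2$ separates $u$ from $u'$ in $H$. Without this argument (or an equivalent one) your proof does not cover, for example, separators that cut off a component consisting entirely of vertices whose colors lie outside $S$, so the proposal as written is incomplete.
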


\begin{proof}
 Since the multisets of colors that the $2$-WL algorithm computes in two graphs are either equal or disjoint (see, e.g., \cite[Lemma 3.4]{kiefer:phd}), the assumptions of the theorem imply that $2$-WL does not distinguish between $G$ and $H$.
 Let $S \coloneqq \{\chi_G(w_1,w_1),\chi_G(w_2,w_2)\}$ and let $G' \coloneqq G|_S$.
 Clearly, the graph $G'$ is connected.
 We argue that $G'$ is $2$-connected.
 Suppose towards a contradiction that there is a cut vertex $w$ in $G'$.
 Let $C$ and $C'$ be the vertex sets of two connected components of $G' - \{w\}$.
 Let $v \in C$ and $v' \in C'$.
 Suppose there is a path~$P$ from $v$ to $v'$ in $G$ that does not pass $w$.
 Then there is a corresponding path $P'$ in $G'$, which simply skips all inner vertices of $P$ whose $\chi_2$-color is not contained in $S$.
 In particular,~$P'$ connects $v$ and $v'$, but avoids $w$. This contradicts $w$ being a cut vertex in $G'$. Hence, $G'$ is $2$-connected.
 
 First suppose that $|V(G')| = 2$. That is, $\{w_1,w_2\} = \{w \in V(G) \mid \chi_G(w,w) \in S\}$. Thus, $2$-WL knows that $G - \{w_1,w_2\}$ is disconnected. Let $A \coloneqq \{v \in V(H) \mid \chi_H(v,v) \in S\}$.
 Then $|A| = 2$ and thus $A = \{v_1, v_2\}$.
 So $H - A$ is disconnected because $G - \{w_1,w_2\}$ is disconnected.
 Hence, $\{v_1,v_2\}$ forms a $2$-separator in $H$.
 
 Now assume $|V(G')| \geq 3$ and suppose there is a vertex set $C$ of a connected component of $G - \{w_1,w_2\}$ such that $V(G') \subseteq C \cup \{w_1,w_2\}$.
 Let $C'$ be the vertex set of a second connected component of $G - \{w_1,w_2\}$ and let $v \in C'$.
 Then $w_1$ and $w_2$ are the only vertices with color in~$S$ that can be reached from $v$ via a path that avoids $S$.
 Since $\chi_G(w_1,w_2) = \chi_H(v_1,v_2)$, there is a $u \in V(H)$ such that $\chi_G(w_1,v) = \chi_H(v_1,u)$ and $\chi_G(w_2,v) = \chi_H(v_2,u)$.
 Because $2$-WL knows whether there is a path between two vertices that avoids $S$, it follows that $v_1$ and $v_2$ can be reached from $u$ via a path that avoids $S$.
 Moreover, it follows that $\chi_G(v,v) = \chi_H(u,u)$.
 Using again the fact that $2$-WL knows whether there is a path between two vertices that avoids $S$, this implies that there are exactly two vertices with color in $S$ that can be reached from $u$ via a path that avoids $S$.
 Together, this means that $v_1$ and $v_2$ are the only vertices with color in $S$ that can be reached from $u$ via a path that avoids $S$.
 Since $|V(G')| \geq 3$, there is a $u' \in V(H)$ such that $v_1 \neq u' \neq v_2$ and $\chi_H(u',u') \in S$, because, in order not to be distinguished, unions of color classes with colors in $S$ must have the same cardinality in both graphs.
 But then $\{v_1,v_2\}$ separates $u$ from $u'$ in $H$ and thus, $\{v_1,v_2\}$ forms a $2$-separator in~$H$.
 
 In the other case, $\{w_1,w_2\}$ forms a $2$-separator in $G'$.
 Hence, $G'$ is a cycle by Lemma~\ref{la:wl-color-subset} and Theorem \ref{thm:two-colors-main}.
 Note that $|V(G')| \geq 4$ and $w_1w_2 \notin E(G')$.
 It follows that $H|_S$ is also a cycle, since otherwise, $2$-WL would distinguish the graphs.
 Also, $|V(H|_S)| \geq 4$ and $v_1v_2 \notin E(H|_S)$.
 So $\{v_1,v_2\}$ forms a $2$-separator in $H|_S$. With analogous arguments as given for cut vertices above, every separator in $H|_S$ is also a separator of equal size in $H$.
 Thus, $\{v_1,v_2\}$ forms a $2$-separator in $H$.
\end{proof}

\begin{corollary}
 \label{cor:separators}
 Suppose $k \geq 2$.
 Let $G$ and $H$ be connected graphs. Assume $\{w_1,\dots,w_k\} \subseteq V(G)$ is a $k$-separator in $G$.
 Let $\{v_1,\dots,v_k\} \subseteq V(H)$ and suppose $\chi_{G,k}(w_1,\dots,w_k) = \chi_{H,k}(v_1,\dots,v_k)$.
 Then $\{v_1,\dots,v_k\}$ forms a $k$-separator in $H$.
\end{corollary}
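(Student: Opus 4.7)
The plan is to reduce to Theorem~\ref{thm:separators} by induction on $k$, individualizing $k-2$ of the separator vertices. The key enabler is a standard fact about the WL algorithm: for $k \geq 2$, the 2-dim WL applied to a graph with $k-2$ vertices individualized by fresh unique colors is at least as refined as the restriction of the $k$-dim WL on the original graph to tuples whose last $k-2$ coordinates are fixed to those vertices. Hence matching $k$-tuple colors on $G,H$ lifts to matching pair colors on the individualized graphs.

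For the base case $k=2$, I would extend Theorem~\ref{thm:separators} from 2-connected graphs to arbitrary connected graphs via block decomposition. If $G$ is 2-connected, then $H$ is too, because the 2-dim WL detects cut vertices by \cite[Corollary 7]{KieferPS17}, so $\chi_G(w_1,w_2)=\chi_H(v_1,v_2)$ forces both graphs into the same regime; Theorem~\ref{thm:separators} then applies directly. Otherwise, if $w_1$ or $w_2$ is a cut vertex of $G$, by cut-vertex detection the matched $v_i$ is also a cut vertex of $H$, so $\{v_1,v_2\}$ is automatically a 2-separator. If neither $w_i$ is a cut vertex, then $\{w_1,w_2\}$ lies entirely within a single 2-connected block $B$ of $G$, and $\{v_1,v_2\}$ analogously lies within a block $B'$ of $H$ with $\chi_B(w_1,w_2)=\chi_{B'}(v_1,v_2)$; we apply Theorem~\ref{thm:separators} inside $B$ and $B'$.

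For the inductive step $k\geq 3$, I would individualize $w_3,\ldots,w_k$ in $G$ and $v_3,\ldots,v_k$ in $H$ by matching fresh colors to obtain colored graphs $G^\ast, H^\ast$, yielding $\chi_{G^\ast}(w_1,w_2)=\chi_{H^\ast}(v_1,v_2)$ by the individualization fact above. Since $\{w_1,\ldots,w_k\}$ separates $G$, the pair $\{w_1,w_2\}$ acts as a 2-separator of some suitable subgraph of $G^\ast$, namely a connected component $C$ of $G-\{w_3,\ldots,w_k\}$ that contains both $w_1$ and $w_2$ and that $\{w_1,w_2\}$ further disconnects. Applying the $k=2$ case to $G[C]$ and its WL-matched counterpart $H[C']$ in $H^\ast$ gives a 2-separator $\{v_1,v_2\}$ in the corresponding subgraph of $H$, which in turn yields that $\{v_1,\ldots,v_k\}$ is a $k$-separator of $H$.

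The main obstacle I expect is formalizing the individualization step: verifying rigorously that from $\chi_{G,k}(\bar w)=\chi_{H,k}(\bar v)$ one obtains $\chi_{G^\ast,2}(w_1,w_2)=\chi_{H^\ast,2}(v_1,v_2)$, and then invariantly identifying, from the 2-dim WL coloring alone, the connected component of $G-\{w_3,\ldots,w_k\}$ in which $\{w_1,w_2\}$ acts as a 2-separator (and matching it to the right component in $H$). This requires careful bookkeeping about how the $k$-dim WL coloring interacts with the component structure obtained after removing the individualized vertices, which is where the technical content sits.
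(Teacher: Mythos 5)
Your proposal takes essentially the same route as the paper: the base case $k=2$ is handled exactly as in the paper's proof (cut-vertex detection, the case distinction on whether $w_1,w_2$ are cut vertices, and restriction to a common $2$-connected block before invoking Theorem~\ref{thm:separators}), and the inductive step pins $k-2$ of the separator vertices to reduce to the $2$-separator case, with your individualization being interchangeable with the paper's deletion of those vertices via the standard WL facts you cite. The one point you should still be aware of is that for a \emph{fixed} choice of which two vertices to keep, the component of $G-\{w_3,\dots,w_k\}$ containing both $w_1$ and $w_2$ and further separated by them need not exist (one must permute the roles of the $w_i$ or induct on proper subsets that already separate), but the paper's own proof elides this to the same degree.
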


\begin{proof}
 First suppose $k = 2$. If $G$ and $H$ are $2$-connected, the statement follows from Theorem~\ref{thm:separators}.
 If either $G$ or $H$ is not $2$-connected, then that graph contains a cut vertex, while the other graph does not.
 By \cite[Corollary 7]{kieponschwei19}, the presence of the cut vertex is encoded in every vertex color and thus, the multisets $\{\!\{\chi_G(u,v) \mid u,v \in V(G)\}\!\}$ and $\{\!\{\chi_H(u,v) \mid u,v \in V(H)\}\!\}$ are disjoint.
 Therefore, the statement trivially holds.

 Suppose both $G$ and $H$ are not $2$-connected.
 The statement is obviously true if $w_1$ or $w_2$ is a cut vertex in $G$.
 If this is not the case, then $w_1$ and $w_2$ must lie in a common $2$-connected component of $G$, otherwise they form no $2$-separator.
 Let $C_G$ denote the vertex set of this component.
 By \cite[Theorem 6]{kieponschwei19}, the same must hold for $v_1$ and $v_2$ in $H$, i.e., there is a vertex set $C_H$ of a $2$-connected component of $H$ such that $v_1,v_2 \in C_H$.
 Furthermore, again by \cite[Theorem 6]{kieponschwei19}, the algorithm $2$-WL distinguishes arcs from vertices $w \in C_G$ and from~$v \in C_H$ to vertices in the same $2$-connected component from arcs to other vertices.
 We claim that this implies $\chi_{G[C_G]}(w_1,w_2) = \chi_{H[C_H]}(v_1,v_2)$.
 Suppose towards a contradiction this is not the case.
 Then Spoiler wins the bijective $k$-pebble game $\BP_3(G[C_G],H[C_H])$ from position $((w_1,w_2),(v_1,v_2))$.
 This strategy can be turned into a winning strategy in the game $\BP_3(G,H)$ from position $((w_1,w_2),(v_1,v_2))$ by maintaining the property that all pebbles are placed on $C_G$ and $C_H$.
 Indeed, if Duplicator chooses a bijection $f\colon V(G) \rightarrow V(H)$ such that $f(C_G) \neq C_H$, then Spoiler can place pebbles on $w' \in V(G)$ and $v' \in V(H)$ such that~$w' \in C_G$ and $v' \notin C_H$ (or the other way around).
 Also, without loss of generality, the current position is not empty, i.e., there are pebbles already placed on some $w \in C_G$ and $v \in C_H$.
 But now $\chi_G(w,w') \neq \chi_H(v,v')$ by the above comment.
 So Spoiler wins from such a position by Theorem \ref{thm:eq-wl-pebble-tuples} which is a contradiction.
 Overall, this means $\chi_{G[C_G]}(w_1,w_2) = \chi_{H[C_H]}(v_1,v_2)$.
 Hence, applying Theorem \ref{thm:separators}, we deduce that $\{v_1,v_2\}$ forms a $2$-separator in $H[C_H]$ and therefore also in $H$.

 Finally, consider the case $k > 2$.
 Let $G$ and $H$ be graphs and suppose $\{w_1,\dots,w_k\} \subseteq V(G)$ is a $k$-separator in $G$.
 Also let $\{v_1,\dots,v_k\} \subseteq V(H)$. Furthermore, suppose that $\chi_{G,k}(w_1,\dots,w_k) = \chi_{H,k}(v_1,\dots,v_k)$.
 By Theorem \ref{thm:eq-wl-pebble-tuples}, Duplicator wins $\BP_{k+1}(G,H)$ from the initial position $\big((w_1,\dots,w_k),(v_1,\dots,v_k)\big)$.
 Now, let $G' \coloneqq G - \{w_1, \dots, w_{k-2}\}$ and $H' \coloneqq H - \{v_1, \dots, v_{k-2}\}$.
 We claim that $\chi_{G'}(w_{k-1},w_k) = \chi_{H'}(v_{k-1},v_k)$.
 Indeed, if this is not the case, then Spoiler wins the game $\BP_3(G',H')$ from the initial position $\big((w_{k-1},w_k),(v_{k-1},v_k)\big)$, again by Theorem \ref{thm:eq-wl-pebble-tuples}.
 However, such a winning strategy can easily be translated to a winning strategy for Spoiler in $\BP_{k+1}(G,H)$ from the initial position $\big((w_1,\dots,w_k),(v_1,\dots,v_k)\big)$ by never touching the $k-2$ first pebbles.
 This is a contradiction.
 So $\chi_{G'}(w_{k-1},w_k) = \chi_{H'}(v_{k-1},v_k)$.
 
 If $G'$ is not connected, the same must hold for $H'$ since $2$-WL knows whether a graph is connected.
 First, suppose there are vertex sets $C_G,C_G'$ of two connected components of $G'$ such that $C_G,C_G' \nsubseteq \{w_{k-1},w_k\}$.
 Then there are also vertex sets $C_H,C_H'$ of distinct connected components of $H'$ such that $C_H,C_H' \nsubseteq \{v_{k-1},v_k\}$.
 In particular, $\{v_1,\dots,v_k\}$ forms a $k$-separator in $H$.

 Otherwise, $G'$ has exactly two connected components with vertex sets $C_G,C_G'$ and $|C_G| = 1$ and $C_G \subseteq \{w_{k-1},w_k\}$.
 Without loss of generality, assume that $C_G = \{w_{k-1}\}$.
 Then $H'$ has exactly two connected components with vertex sets $C_H,C_H'$, one of which, say $C_H$, equals $\{v_{k-1}\}$.
 Let $G'' \coloneqq G' - \{w_{k-1}\}$ and $H'' \coloneqq H' - \{v_{k-1}\}$.
 Since $w_{k-1}$ and~$v_{k-1}$, respectively, is the only isolated vertex in $G'$ and $H'$, respectively, it follows that $\chi_{G''}(w_k,w_k) = \chi_{H''}(v_k,v_k)$.
 But now, $w_k$ is a cut vertex in $G''$.
 By \cite[Corollary 7]{kieponschwei19}, it holds that $v_k$ is a cut vertex in $H''$.
 This means that $\{v_1,\dots,v_k\}$ forms a $k$-separator in $H$.
 
 In the last case, both $G'$ and $H'$ are connected and $\{w_{k-1},w_k\}$ forms a $2$-separator in the graph $G' = G - \{w_1, \dots, w_{k-2}\}$.
 Thus, $\{v_{k-1},v_k\}$ forms a $2$-separator in the graph $H - \{v_1, \dots, v_{k-2}\}$ by the arguments given above.
 So $\{v_1,\dots,v_k\}$ forms a $k$-separator in~$H$.
\end{proof}

Following \cite{kieponschwei19}, we say that $k$-WL \emph{determines orbits} on a graph class $\mathcal{G}$ if for all arc-colored graphs $(G, \lambda)$, $(G', \lambda')$ with $G, G' \in \mathcal{G}$, and for all $v \in V(G)$ and $v' \in V(G')$, there exists an isomorphism from $(G, \lambda)$ to $(G', \lambda')$ mapping $v$ to $v'$ if and only if it holds that $\chi_{G,k}(v,\dots,v) = \chi_{G',k}(v',\dots,v')$.

Using the corollary, we can prove a strengthened version of \cite[Theorem 13]{kieponschwei19}.
The theorem says that on every minor-closed graph class $\mathcal{G}$, it holds that if $3$-WL determines vertex orbits on all arc-colored $3$-connected graphs in $\mathcal{G}$, then $3$-WL also distinguishes all pairs of non-isomorphic graphs in $\mathcal{G}$.
We show that the same statement actually holds already when replacing the occurrences of $3$-WL with $2$-WL (see Theorem \ref{thm:reduction-to-3-connected}).
Roughly speaking, the main step is to argue that $2$-WL implicitly computes the decomposition into triconnected components (in the sense of Tutte).
We can then prove the above statement using standard dynamic-programming arguments to decide isomorphism in a bottom-up fashion along the decomposition into triconnected components.

We start by formally defining what it means for $2$-WL to \emph{implicitly compute} a decomposition. 
Let $G$ be a graph and let $(T,\beta)$ be a tree decomposition of $G$.
For the remainder of this section, we restrict ourselves to \emph{rooted tree decompositions}, i.e., one node $r \in V(T)$, the \emph{root} of $T$, is designated and all edges are directed away from the root. Also, we assume that $\beta(s) \not\supseteq \beta(t)$ holds for all $(s,t) \in E(T)$.
For every $v \in V(G)$, define $t_v \in V(T)$ to be the unique node closest to the root such that $v \in \beta(t_v)$.
Note that, by the above assumption, for every $t \in V(T)$, there is a $v \in V(G)$ such that $t = t_v$.

\begin{definition}
 Let $G$ be a graph and let $(T,\beta)$ be a rooted tree decomposition of $G$.
 For~$k \geq 2$, we say that $k$-WL \emph{implicitly computes} $(T,\beta)$ if there are two sets $\mathcal{C}_{\sf bags},\mathcal{C}_{\sf edges} \subseteq \{\chi_{G,k}(v,w,\dots,w) \mid v,w \in V(G)\}$ of colors such that
 \[\chi_{G,k}(v,w,\dots,w) \in \mathcal{C}_{\sf bags} \;\;\Leftrightarrow\;\; w \in \beta(t_v)\]
 and
 \[\chi_{G,k}(v,w,\dots,w) \in \mathcal{C}_{\sf edges} \;\;\Leftrightarrow\;\; (t_v,t_w) \in E(T).\]
\end{definition}

For simplicity, we restrict ourselves to the case $k=2$ in the following.
We start with some basic observations.
Let $G$ be a graph and suppose $2$-WL implicitly computes a rooted tree decomposition $(T,\beta)$ of $G$.
Then there is also a set of colors $\mathcal{C}_{\sf eq} \subseteq \{\chi_G(v,w) \mid v,w \in V(G)\}$ such that
\[\chi_G(v,w) \in \mathcal{C}_{\sf eq} \;\;\Leftrightarrow\;\; \beta(t_w) = \beta(t_v).\]
Hence, each connected component of the graph $G[\mathcal{C}_{\sf eq}]$ (recall the definition in the paragraph before Lemma \ref{la:factor-graph-2-wl}) corresponds to one bag of the tree decomposition $(T,\beta)$.
Moreover, the rooted tree $T$ can be reconstructed by considering the graph with vertex set $V(G/\mathcal{C}_{\sf eq})$ and edge set
\[\big\{(B_1,B_2) \ \big\vert \ (\chi_G/\mathcal{C}_{\sf eq})(B_1,B_2) \subseteq \mathcal{C}_{\sf edges}\big\}.\]
In particular, the rooted tree decomposition $(T,\beta)$ can be reconstructed from the sets $\mathcal{C}_{\sf bags}$ and $\mathcal{C}_{\sf edges}$.

\begin{definition}
Let $\mathcal{G}$ be a graph class. The \emph{WL dimension} of $\mathcal{G}$, denoted by $\dimWL(\mathcal{G})$, is the minimum $k \in \mathbb{N}$ such that $k$-WL identifies every graph $G \in \mathcal{G}$. (If there is no such $k$, set $\dimWL(\mathcal{G})$ to be $\infty$.)
\end{definition}

Intuitively speaking, arguing that $2$-WL implicitly computes a certain type of tree decomposition can be a powerful tool for showing that the WL dimension of a graph class is bounded.
It essentially suffices to show that $k$-WL can handle the graphs that are induced by the bags.

As an important example, we now argue that $2$-WL implicitly computes the decomposition into triconnected components and thereby reduce the task of determining $\dimWL(\mathcal{G})$ of a minor-closed graph class $\mathcal{G}$ to dealing with all $3$-connected graphs in $\mathcal{G}$.
Recall that, by our definition, all graphs of order at most $2$ are $3$-connected.

For $X \subseteq V(G)$, define the \emph{torso} $G[[X]]$ to be the graph with $V(G[[X]]) = X$ and
\[E(G[[X]]) = E(G[X]) \cup \{vw \mid v,w \in N(Z) \text{ for a connected component $Z$ of $G - X$}\}.\]

\begin{theorem}\label{thm:implicitly-computes-decomposition}
 Let $G$ be a graph.
 Then there is a rooted tree decomposition $(T,\beta)$ of $G$ such that
 \begin{enumerate}
  \item $|\beta(s) \cap \beta(t)| \leq 2$ for all $(s,t) \in E(T)$, and
  \item for every $t \in V(T)$, the torso $G[[\beta(t)]]$ is $3$-connected or a cycle, and
  \item $2$-WL implicitly computes $(T,\beta)$.
 \end{enumerate}
\end{theorem}

The decomposition described in the theorem is the decomposition of a graph into triconnected components in the sense of Tutte.
Hence, the only challenge is to prove that $2$-WL implicitly computes this decomposition.
The proof of the theorem uses the same arguments as the one for \cite[Theorem 13]{kieponschwei19} and it can be found in the appendix.

Building on a bottom-up dynamic-programming strategy along the decomposition into triconnected components, we can strengthen Theorem 13 from \cite{kieponschwei19}. 

\begin{theorem}
 \label{thm:reduction-to-3-connected}
 Let $\mathcal{G}$ be a minor-closed graph class and assume $k \geq 2$.
 Suppose the $k$-dimensional WL algorithm determines orbits on the class of all arc-colored $3$-connected graphs in $\mathcal{G}$.
 Then the $k$-di\-men\-sion\-al WL algorithm distinguishes all non-isomorphic graphs in $\mathcal{G}$.
\end{theorem}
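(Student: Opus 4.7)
The plan is to follow the reduction strategy of \cite{KieferPS17}, substituting their use of the $3$-dimensional algorithm for detecting $2$-separators by our stronger Corollary \ref{cor:separators}. Given two non-isomorphic graphs $G, G' \in \mathcal{G}$, we want to exhibit a color that occurs with different multiplicities under $\chi_{G,k}$ and $\chi_{G',k}$.

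First I would reduce to the case where both graphs are connected (already the $1$-dimensional algorithm distinguishes connected components by their isomorphism type) and then to the case where both are $2$-connected, using \cite[Corollary 7 and Theorem 6]{KieferPS17}: the $2$-dim WL algorithm detects cut vertices and distinguishes arcs within a block from arcs crossing the block-cut tree, so non-isomorphic block-cut trees of $G$ and $G'$ are already distinguished, and the problem reduces to distinguishing arc-colored $2$-connected blocks that sit at corresponding positions of the tree. Next, on a $2$-connected graph the analogous step is to consider the SPQR-style decomposition into $3$-connected components, cycles, and bonds, glued along $2$-separators. By Corollary \ref{cor:separators}, the $2$-dim WL algorithm distinguishes every separating pair from every non-separating pair; together with the fact that the coloring $\chi_G$ is invariant under automorphisms, this means that the decomposition tree itself, together with the way the pieces are glued, is implicitly visible to the $k$-dim WL algorithm in an isomorphism-invariant way.

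The heart of the argument is then an inductive peeling. Pick a leaf triconnected component $H$ of the decomposition tree of $G$ in an isomorphism-invariant order (for example, the smallest one under a WL-invariant criterion), together with its boundary pair $\{w_1, w_2\}$. Endow $H$ with an arc-coloring $\lambda$ whose colors are the WL-colors $\chi_{G,k}$ on the vertex tuples of $H$ enriched by the pair $(w_1, w_2)$; since $\chi_{G,k}$ is isomorphism-invariant, so is $\lambda$, and its values are computable from the global WL colors. Because $\mathcal{G}$ is minor-closed, $H$ lies in $\mathcal{G}$, and is either a cycle, a bond, or $3$-connected. Cycles and bonds are identified by WL directly; on the $3$-connected piece the hypothesis gives us that $\chi_{H,k}$ determines vertex orbits. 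Performing the same peel on $G'$ and comparing the resulting arc-colored leaves with WL, either the two leaves are distinguished (and then so are $G$ and $G'$, since the leaf colors appear in $\chi_{G,k}$ and $\chi_{G',k}$), or they are WL-equivalent and determine orbits, in which case we can contract them to virtual edges carrying a new WL-invariant color and continue the induction on the smaller decomposition tree.

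The main obstacle will be the bookkeeping that makes the peeling truly isomorphism-invariant and realizable inside the WL algorithm: one must ensure that the arc-colorings $\lambda$ on leaf components agree exactly across $G$ and $G'$ whenever their WL types agree, that the choice of which leaf to peel does not introduce artificial asymmetries, and that after contracting a leaf back into its parent component the resulting arc-colored $2$-connected graph can be fed into the inductive hypothesis so that the process terminates at a single $3$-connected (or cycle) piece. Fortunately, all of this machinery is already developed in \cite{KieferPS17} for $k=3$; the only new ingredient needed to lower the dimension to $k=2$ is the detection of separating pairs, which is exactly what Corollary \ref{cor:separators} provides.
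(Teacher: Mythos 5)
Your proposal follows essentially the same route as the paper's own proof sketch: an induction along the decomposition into $3$-connected components, pruning leaves of the decomposition tree while encoding them into arc colors on the boundary pairs, with Corollary~\ref{cor:separators} supplying the single new ingredient (detection of $2$-separators in dimension $2$ rather than $3$) and all remaining bookkeeping deferred to the machinery of \cite{KieferPS17}. The only cosmetic difference is that you peel one leaf at a time and contract to a virtual edge, whereas the paper removes all leaves simultaneously to form $(G_{\bot},\lambda_{\bot})$ and applies the induction hypothesis to the pruned graphs; the substance is the same.
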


Again, the proof of the theorem follows the lines of the proof of \cite[Theorem 13]{kieponschwei19}, exploiting the improved bound on the dimension of the WL algorithm required to distinguish $2$-separators from other pairs of vertices stated in Theorem \ref{thm:separators}.
For the sake of completeness, all details can be found in the appendix.

In \cite{kieponschwei19}, it was shown that the WL dimension of the class of planar graphs is either $2$ or $3$.
As an application of Theorem \ref{thm:reduction-to-3-connected}, we obtain that the precise value only depends on the dimension needed to determine orbits on the class of all arc-colored $3$-connected planar graphs.

\medskip

As another consequence of our results, we show that $2$-WL also knows sizes of connected components after removing a $2$-separator.
This particular result forms the basis for improving the upper bound on a parametrization of the WL dimension by the treewidth of the input graph in Section \ref{sec:treewidth}.

For a graph $G$ and $v_1,v_2,v_3 \in V(G)$, we define $s_G(v_1,v_2,v_3) \coloneqq |C|$, where $C$ is the vertex set of the connected component of $G - \{v_1,v_2\}$ that contains $v_3$
(if $v_3 \in \{v_1,v_2\}$, then we set $s_G(v_1,v_2,v_3) \coloneqq 0$).

\begin{theorem}\label{thm:sizes-two-connected}
 Let $G$ and $H$ be $2$-connected graphs.
 Also suppose $v_1,v_2,v_3 \in V(G)$ and $w_1,w_2,w_3 \in V(H)$ such that $\chi_G(v_i,v_j) = \chi_H(w_i,w_j)$ for all $i,j \in \{1,2,3\}$.
 Then $s_G(v_1,v_2,v_3) = s_H(w_1,w_2,w_3)$.
\end{theorem}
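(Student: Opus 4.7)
I will first handle two easy cases. If $v_3 \in \{v_1,v_2\}$, then $s_G(v_1,v_2,v_3) = 0$, and $\chi_G(v_3,v_1)$ is a diagonal colour, so the hypothesis $\chi_G(v_3,v_1) = \chi_H(w_3,w_1)$ forces $w_3 \in \{w_1,w_2\}$ and hence $s_H = 0$. If $\{v_1,v_2\}$ is not a $2$-separator of $G$, then by Theorem~\ref{thm:separators} the pair $\{w_1,w_2\}$ is not a $2$-separator of $H$ either; since both graphs are $2$-connected, $G - \{v_1,v_2\}$ and $H - \{w_1,w_2\}$ are both connected, giving $s_G = |V(G)| - 2 = |V(H)| - 2 = s_H$. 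Here $|V(G)| = |V(H)|$ because the stable $2$-dim WL coloring allows one to read off the total vertex count from a single diagonal colour: given $\chi_G(v_1,v_1)$, the number of $u$ with $\chi_G(v_1,u) = c$ is determined for each $c$, and summing over $c$ yields $|V(G)|$.

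The interesting case is when $\{v_1,v_2\}$ is a $2$-separator of $G$ (hence of $H$) and $v_3 \notin \{v_1,v_2\}$. Set $S = \{\chi_G(v_1,v_1), \chi_G(v_2,v_2)\}$ and apply Lemma~\ref{la:wl-color-subset} to form the auxiliary graph $G[[S]]$. Mirroring the proof of Theorem~\ref{thm:separators}, $G[[S]]$ is $2$-connected, so Theorem~\ref{thm:onecolor} (when $|S| = 1$) or Theorem~\ref{thm:two-colors-main} (when $|S| = 2$) classifies $G[[S]]$ as either a cycle or a $3$-connected graph. My plan is to establish that the size $|C|$ of the component of $v_3$ in $G - \{v_1,v_2\}$ is a WL invariant determined by $\chi_G(v_1,v_2)$, $\chi_G(v_1,v_3)$, and $\chi_G(v_2,v_3)$, whence the hypothesis forces $|C| = |D|$ where $D$ is the component of $w_3$ in $H - \{w_1,w_2\}$.

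Concretely, partition $V(G) \setminus \{v_1,v_2\}$ into the WL cells $A_{c_1,c_2} = \{u : \chi_G(v_1,u) = c_1,\ \chi_G(v_2,u) = c_2\}$, whose cardinalities match those of the analogous cells in $H$ by the triangle-count identities of the stable coherent configuration (using $\chi_G(v_1,v_2) = \chi_H(w_1,w_2)$). The components of $G - \{v_1,v_2\}$ partition each cell into sub-pieces; two components with the same multiset of cell intersections have the same cell-profile, and $C$ has a definite profile identified by $\chi_G(v_1,v_3)$ and $\chi_G(v_2,v_3)$. Summing cell cardinalities over the profile of $C$ and dividing by the number of components of that profile then yields $|C|$, and the matching pair colours transfer the same count to $H$ to give $|C| = |D|$.

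The main obstacle is to prove that the number of components of a given cell-profile is itself a WL invariant of $\chi_G(v_1,v_2)$. The cycle case of $G[[S]]$ handles this directly: components of $G - \{v_1,v_2\}$ correspond to contiguous arcs along the cycle, so their count is forced by the cell structure. The $3$-connected case is where the heavy lifting resides; one appeals to the unique-shortest-path and distance-pattern machinery of Sections~\ref{sec:onecolor} and~\ref{sec:twocolors}, applied inside a single component viewed as a decoration of $G[[S]]$, to argue that any two components with the same cell-profile are isomorphic in a colour-preserving way and hence of equal size. Carrying out this structural case distinction rigorously, and separately handling the one-colour versus two-colour sub-cases of $G[[S]]$, constitutes the bulk of the remaining work.
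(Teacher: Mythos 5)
Your treatment of the degenerate cases and your identification of the auxiliary graph $G[[S]]$ match the paper, but the core of the argument is left as a plan rather than a proof, and the plan heads down a road that does not work. The decisive case split in the paper is not ``$G[[S]]$ is a cycle versus $3$-connected'' but rather \emph{whether $\{v_1,v_2\}$ is a $2$-separator of $G[[S]]$}. If it is, Theorem~\ref{thm:two-colors-main} (or Lemma~\ref{la:as-main}) forces $G[[S]]$ to be a cycle, and the edge colors of that cycle encode, for each edge $vv'$ of $G[[S]]$, how many vertices of $G$ reach only $v$ and $v'$ among the $S$-colored vertices via $S$-avoiding paths; the statement then reduces to an easy fact about edge-colored cycles. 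If it is not (and $|V(G[[S]])|>2$), then all of $V(G[[S]])\setminus\{v_1,v_2\}$ lies in a single component $C$ of $G - v_1v_2$, and every \emph{other} component $C'$ has the property that $v_1$ and $v_2$ are the only $S$-colored vertices reachable from any $u \in C'$ by a path avoiding $S$; hence $|C'|$ is already encoded in the single diagonal color $\chi_G(u,u)$, and $|C| = n-2-\sum_{C'\neq C}|C'|$ follows. This reachability observation is the missing idea: it makes the relevant component sizes readable off individual vertex colors, with no need to count components or match them up across the two graphs.

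Your substitute for this --- partitioning $V(G)\setminus\{v_1,v_2\}$ into cells $A_{c_1,c_2}$, assigning each component a ``cell-profile,'' and dividing the total cell mass by the number of components of that profile --- presupposes exactly what has to be proved, namely that two components with the same profile have the same size and that the number of components of each profile is an invariant of $\chi_G(v_1,v_2)$. You defer this to the unique-shortest-path and distance-pattern machinery of Sections~\ref{sec:onecolor} and~\ref{sec:twocolors}, but that machinery only applies to graphs in which \emph{all} vertices carry one of at most two colors (and, for the key claims, under an edge-maximality assumption); it says nothing about color-preserving isomorphism between two components of $G - v_1v_2$ carrying arbitrarily many colors. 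Moreover, the dichotomy you invoke is itself unsupported: when $|S|=2$, Theorem~\ref{thm:two-colors-main} only applies to $2$-separators whose two vertices carry the two distinct colors, so ``not a cycle'' does not yield ``$3$-connected.'' As written, the hard case of your proof is open.
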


\begin{proof}
 The statement trivially holds if $v_3 = v_1$ or $v_3 = v_2$. Thus, assume $v_1 \neq v_3 \neq v_2$.

 If $\{v_1,v_2\}$ is not a $2$-separator, the statement follows easily from Theorem \ref{thm:separators}.
 Thus, we may assume that $\{v_1,v_2\}$ and $\{w_1,w_2\}$ are $2$-separators.
 Recall the notation $G|_S$ (preceding Lemma \ref{la:wl-color-subset}).
 Let $S \coloneqq \{\chi_G(v_1,v_1),\chi_G(v_2,v_2)\}$ and let $G' \coloneqq G|_S$.
 As in the proof of Theorem~\ref{thm:separators}, the graph $G'$ is $2$-connected.
 
 First suppose that $|V(G')| = 2$.
 Let $A \coloneqq \{v \in V(H) \mid \chi_H(v,v) \in S\}$.
 Then $A = \{w_1,w_2\}$.
 Moreover, $s_G(v_1,v_2,v_3)$ is the number of vertices that are reachable from $v_3$ without ever visiting a vertex with a color from $S$.
 This is encoded in the color $\chi_G(v_3,v_3)$.
 Since $\chi_G(v_3,v_3) = \chi_H(w_3,w_3)$, it follows that $s_G(v_1,v_2,v_3) = s_H(w_1,w_2,w_3)$.
 
 Next, suppose there is a vertex set $C$ of a connected component of $G - \{v_1,v_2\}$ such that $V(G') \subseteq C \cup \{v_1,v_2\}$.
 If $v_3 \notin C$, then $v_1$ and $v_2$ are the only vertices in $V(G')$ that $v_3$ can reach via paths that avoid $S$.
 Also, as before, $s_G(v_1,v_2,v_3)$ is the number of vertices that are reachable from $v_3$ without ever visiting a vertex with a color in $S$.
 The same has to hold for $w_1$ and $w_2$ with respect to $w_3$, since $\chi_G(v_3,v_3) = \chi_H(w_3,w_3)$. Thus, $s_G(v_1,v_2,v_3) = s_H(w_1,w_2,w_3)$.
 In the other case, $v_3 \in C$ and $s_G(v_1,v_2,v_3) = |C|$.
 But $|C| = n - 2 - \sum_{C' \neq C} |C'|$, where $C'$ ranges over all vertex sets of connected components of $G - \{v_1,v_2\}$.
 As discussed above, the sizes of these sets $C'$ are encoded in the vertex colors of the graph $G$ and the same is true for $H$.
 
 Otherwise, $\{v_1,v_2\}$ forms a $2$-separator in $G'$ and hence, $G'$ is a cycle by Lemma \ref{la:wl-color-subset} and Theorem \ref{thm:two-colors-main}.
 Note that $|V(G')| \geq 4$ and $v_1v_2 \notin E(G')$.
 Using Lemma \ref{la:wl-color-subset}, it follows that~$H|_S$ is also a cycle.
 Also, $|V(H|_S)| \geq 4$ and $w_1w_2 \notin E(H|_S)$.
 Note that every vertex can reach only two vertices in $S$ via paths that avoid $S$.
 Thus, the colors $\chi_G(v,v')$ for $vv' \in E(G')$ encode the number of vertices with color not in $S$ for which $v$ and $v'$ are the only vertices reachable via paths that avoid $S$.
 Moreover, $2$-WL knows the arc-colored cycle $G'$ (and the analogous arc-colored cycle $H'$ for $H$).
 Thus, it suffices to consider these two cycles, for which the statement is easy to see.
\end{proof}

The last theorem can also be formulated in terms of the expressive power of the $3$-variable fragment ${\sf C}^{3}$ of first-order logic with counting quantifiers of the form $\exists^{\geq k} x \varphi(x)$.
Indeed, it implies that for all $n,s \in \mathbb{N}$, there is a formula $\varphi_{n,s}(x_1,x_2,x_3) \in {\sf C}^{3}$ such that,
for every $2$-connected $n$-vertex graph $G$ and $v_1,v_2,v_3 \in V(G)$, it holds that $G \models \varphi_{n,s}(v_1,v_2,v_3)$ if and only if $s_G(v_1,v_2,v_3) = s$ (for details about the connection between the WL algorithm and counting logics, see, e.g., \cite{CaiFI92,Grohe17,IL90}).

\section{New Bounds for Graphs of Treewidth $\boldsymbol{k}$}
\label{sec:treewidth}

As an application of the results presented so far, we investigate the WL dimension of the class of graphs of treewidth at most $k$.
Up to this point, the best known upper bound on the WL dimension of such graphs has been $k+2$, i.e., $(k+2)$-WL identifies every graph of treewidth at most $k$ \cite{GroheM99}. In this section, we present new upper and lower bounds, as stated in the following theorem.

\begin{theorem}
 \label{thm:wl-treewidth-bounds}
 Let $k \geq 2$. Then $\left\lceil\frac{k}{2}\right\rceil - 2 \leq \dimWL(\mathcal{T}_k) \leq k$, where $\mathcal{T}_k$ denotes the class of graphs of treewidth at most $k$.
\end{theorem}

Roughly speaking, for the upper bound, given two graphs $G$ and $H$ where $G$ has treewidth at most $k$, we use the correspondence from Corollary \ref{cor:eq-wl-pebble} and force the players in the game $\BP_k(G,H)$ to descend along a tree decomposition of width $k$ in $G$. 
For this, we build on an extension of Theorem \ref{thm:sizes-two-connected} to arbitrary separator sizes.

Meanwhile, the lower bound in Theorem \ref{thm:wl-treewidth-bounds} is an independent insight, i.e., it does not build on the results from the previous sections. Here, we define a family of difficult instances by applying a construction introduced by Cai, Fürer, and Immerman \cite{CaiFI92} to modified grid graphs.

\subsection{Upper Bound}

As indicated, the basic idea for proving a new upper bound is to provide a winning strategy for Spoiler in the corresponding bijective pebble game and it works similarly to the proof that $(k+2)$-WL identifies every graph of treewidth at most $k$ given in \cite{GroheM99}.
The main difference is a much more careful implementation of the general strategy in order to get by with the desired number of pebbles.
As a major ingredient, we exploit that separators can be detected using fewer pebbles.

For a $(k+1)$-tuple $(v_1, \dots, v_{k+1})$ of vertices of a graph $G$, we define
\[s_G(v_1,\dots,v_{k+1}) \coloneqq |C|,\]
where $C$ is the vertex set of the unique connected component of $G - \{v_1,\dots,v_k\}$ with $v_{k+1} \in C$
(as before, $s_G(v_1,\dots,v_k,v_{k+1}) = 0$ if $v_{k+1} \in \{v_1,\dots,v_k\}$).

\begin{lemma}
 \label{lem:separator-component-sizes-wl}
 Suppose $k \geq 2$.
 Let $G$ and $H$ be two graphs and consider two tuples $(v_1,\dots,v_{k+1}) \in \big(V(G)\big)^{k+1}$ and $(w_1,\dots,w_{k+1}) \in \big(V(H)\big)^{k+1}$ for which $s_G(v_1,\dots,v_{k+1}) \neq s_H(w_1,\dots,w_{k+1})$.
 Then Spoiler wins the game $\BP_{k+1}(G,H)$ from the initial position $\big((v_1,\dots,v_{k+1}),(w_1,\dots,w_{k+1})\big)$.
\end{lemma}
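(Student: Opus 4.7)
My plan is to prove the lemma by strong induction on $\min\bigl(s_G(v_1,\dots,v_k,v_{k+1}),\, s_H(w_1,\dots,w_k,w_{k+1})\bigr)$. Without loss of generality, assume this minimum equals $a := s_G < s_H =: b$, and write $C_G$ (resp.\ $C_H$) for the connected component of $G - \{v_1,\dots,v_k\}$ containing $v_{k+1}$ (resp.\ of $H - \{w_1,\dots,w_k\}$ containing $w_{k+1}$), so $|C_G| = a$ and $|C_H| = b$. For the base case $a = 0$, the vertex $v_{k+1}$ must coincide with some $v_i$, while $w_{k+1} \notin \{w_1,\dots,w_k\}$ since $b > 0$; hence the initial position already violates the equality pattern and Spoiler wins at the start.

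For the inductive step, Spoiler's opening move is to remove the $(k+1)$-th pebble, reaching the position $\bigl((v_1,\dots,v_k),(w_1,\dots,w_k)\bigr)$. Duplicator then proposes a bijection $f\colon V(G) \to V(H)$, which Spoiler may assume satisfies $f(v_i) = w_i$ for all $i \leq k$ and respects all adjacencies among the pebbled pairs (otherwise Spoiler plays a witness vertex and wins immediately). Now, because $|C_G| < |C_H|$ and $f$ maps $\{v_1,\dots,v_k\}$ bijectively to $\{w_1,\dots,w_k\}$, it is impossible that simultaneously $f(C_G) \subseteq C_H$ and $f^{-1}(C_H) \subseteq C_G$; a short counting argument therefore provides either a vertex $v^* \in C_G$ with $f(v^*) \notin C_H \cup \{w_1,\dots,w_k\}$ (Case A), or a vertex $v^* \in V(G) \setminus (C_G \cup \{v_1,\dots,v_k\})$ with $f(v^*) \in C_H$ (Case B). Spoiler plays such a $v^*$; in Case A the new position has $(s_G', s_H') = (a, |C_H'|)$ with $C_H'$ a non-special $H$-component, while in Case B it has $(s_G', s_H') = (|C_G'|, b)$ with $C_G'$ a non-special $G$-component, so in both cases $s_G' \neq s_H'$.

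The delicate point, which I expect to be the main obstacle, is to guarantee that Spoiler can choose $v^*$ so that $\min(s_G', s_H') < a$, enabling the induction. The crucial structural observation is the vertex-count identity
\[
\sum_{c} c\cdot\alpha_c = n - k - a \quad\text{and}\quad \sum_{c} c\cdot\beta_c = n - k - b,
\]
where $\alpha_c, \beta_c$ count the non-special components of $G-\bar v$ and $H-\bar w$ of size $c$. Since these totals differ by $b - a > 0$, the multisets of non-special component sizes on the two sides cannot coincide, and some size $c$ satisfies $c\cdot\alpha_c \neq c\cdot\beta_c$. Spoiler's task is to steer $v^*$ into a size class where this asymmetry yields a new instance with $\min(s_G', s_H') < a$.

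When a single move does not directly deliver a decrease in $\min$—for instance, in configurations where every non-special component on both sides has size at least $a$—the strategy must be strengthened. I would proceed by replacing $\min$ in the inductive measure by the lexicographically-smallest size $c$ at which $\alpha_c \neq \beta_c$, or equivalently by a BFS-layer potential: the numbers $D_r^G$ of vertices reachable from $v_{k+1}$ within $r$ steps in $G - \{v_1,\dots,v_k\}$ and $D_r^H$ of those reachable from $w_{k+1}$ in $H - \{w_1,\dots,w_k\}$ must diverge at some radius $r$, and Spoiler peels off one layer at a time, using the $(k+1)$-th pebble as a moving marker that, together with the separator pebbles, certifies the position of $v^*$ in the appropriate BFS layer. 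Applying the inductive hypothesis (or a direct game-winning argument in the remaining base configurations) then completes the proof.
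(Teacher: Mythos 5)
There is a genuine gap, and it is exactly at the point you flag as ``the delicate point'': your proposal is a direct pebble-game induction, but with only $k+1$ pebbles such an induction cannot be closed, and the paper does not attempt one. After Spoiler plays $v^*$ in your Case A, the new component $C_H'$ on the $H$-side is merely \emph{some other} component of $H-\{w_1,\dots,w_k\}$; nothing prevents $|C_H'|=a$, in which case $s_G'=s_H'$ and the move achieves nothing. More fundamentally, your fallback of ``peeling off BFS layers with the $(k+1)$-th pebble as a moving marker'' requires Spoiler to hold the current layer position while probing the next one, i.e.\ two free pebbles on top of the $k$ separator pebbles --- that is $k+2$ pebbles, which is precisely the classical bound of Grohe and Mari\~no that this lemma is designed to beat. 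With $k+1$ pebbles Spoiler must lift either a separator pebble (destroying the component structure the whole argument is predicated on) or the marker (losing the certified position), so the ``moving marker'' step fails.

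The paper's proof instead reduces everything to the case $k=2$ with both graphs $2$-connected and $\{v_1,v_2\}$ a genuine $2$-separator, and there it invokes Theorem~\ref{thm:sizes-two-connected}: the \emph{stable $2$-WL colors} of the pairs $(v_i,v_j)$ already determine the component size $s_G(v_1,v_2,v_3)$, so Spoiler needs no further moves at all. That theorem is not a game argument; it rests on the structural classification of Sections~3 and~4 (Theorems~\ref{thm:onecolor} and~\ref{thm:two-colors-main}: a $2$-WL-homogeneous, respectively bi-chromatic $2$-connected, graph with a $2$-separator must be a cycle), applied to the auxiliary graph $G[[S]]$ built from the colors of the separator vertices. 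This is the actual content of the lemma --- the introduction even remarks that with three variables one cannot argue by comparing walk counts and that the witnessing ${\sf C}^3$-formulas depend on the specific graph --- so the missing piece in your proposal is not a technical refinement of the potential function but the entire structural machinery that makes three pebbles suffice.
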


\begin{proof}
 It is easy to see that it suffices to prove the statement for the case that the following conditions hold:
 \begin{itemize}
  \item $v_{k+1} \notin \{v_1, \dots, v_k\}$ and $w_{k+1} \notin \{w_1, \dots, w_k\}$, and
  \item $G$ and $H$ are connected and of the same order. 
 \end{itemize}
 Also note that, by Corollary \ref{cor:separators}, the graph $G$ is $2$-connected if and only if $H$ is $2$-connected.

 First suppose $k = 2$.
 Assume the graphs are connected, but not $2$-connected.
 We are going to use the correspondence from Theorem \ref{thm:eq-wl-pebble-tuples}.
 The proof basically exploits the fact that the decomposition into $2$-connected components has a tree-like structure and that $2$-WL recognizes cut vertices (see \cite[Corollary 7]{kieponschwei19}), thus being able to ``transport information'' from one side of a cut vertex to the other.
 Note that $v_i$ is a cut vertex if and only if $w_i$ is a cut vertex.

 First, suppose that exactly one vertex in $\{v_1,v_2\}$ is a cut vertex, say $v_1$.
 Then we can ignore the vertex $v_2$. Also, for every $v' \in V(G)$, the color triple formed by $\chi_G(v_1,v'), \chi_G(v',v_3)$, and $\chi_G(v_1,v_3)$ encodes how many vertices are contained in the same connected component of the graph $G - \{v_1\}$ as $v_3$, using the methods from \cite[Theorem 6]{kieponschwei19}.
 Thus, $\chi_G(v_1,v_3)$ encodes the size of the connected component of $G - \{v_1\}$ that contains $v_3$.
 Also, by the above observation, we know whether $v_2$ is in this connected component or not.
 Thus, we know the size of the connected component of $G - \{v_1,v_2\}$ that contains $v_3$.

 Now assume that both $v_1$ and $v_2$ are cut vertices.
 Then, as above, we can determine whether $v_3$ lies in the same connected component of $G - \{v_1\}$ as $v_2$ and whether $v_3$ lies in the same connected component of $G - \{v_2\}$ as $v_1$, and compute the corresponding sizes.
 These numbers determine $s_G$ and thus also $s_H$.

 Otherwise, neither $v_1$ nor $v_2$ is a cut vertex.
 The only problematic case is that $\{v_1,v_2\}$ forms a $2$-separator (implying that $v_1$ and $v_2$ lie in a common $2$-connected component of $G$).
 In this case, we can proceed analogously as in the proof of Theorem \ref{thm:sizes-two-connected}.

 For the general case, i.e., $k > 2$, let $\widehat G \coloneqq G - \{v_1,\dots,v_{k-2}\}$ and $\widehat H \coloneqq H - \{v_1,\dots,v_{k-2}\}$.
 Then $s_{\widehat G}(v_{k-1},v_k,v_{k+1}) \neq s_{\widehat H}(w_{k-1},w_k,w_{k+1})$ and hence,
 Spoiler wins the game $\BP_3(\widehat G,\widehat H)$ from the initial position $\big((v_{k-1},v_k,v_{k+1}),(w_{k-1},w_k,w_{k+1})\big)$ by Theorem \ref{thm:eq-wl-pebble-tuples} and the first part of the proof.
 But then Spoiler also wins the game $\BP_{k+1}(G,H)$ from the initial position $\big((v_1,\dots,v_{k+1}),(w_1,\dots,w_{k+1})\big)$ by simply never moving the first $k-2$ pebbles.
\end{proof}

To build Spoiler's strategy along a given tree decomposition, we use the following characterization of treewidth.
Let $G$ be a graph of treewidth $k$.
For a $k$-separator $S \subseteq V(G)$ and the vertex set $C$ of a connected component of $G - S$, we define $G(S,C)$ to be the graph on vertex set $S \cup C$ obtained by inserting a clique between the vertices in $S$ into $G[S \cup C]$.

\begin{lemma}[Arnborg et al.\ \cite{ArnborgCP87}]
 \label{la:inductive-characterization-treewidth}
 Suppose $G(S,C)$ has at least $k+2$ vertices.
 Then $G(S,C)$ has treewidth at most $k$ if and only if there exists $v \in C$ such that for every connected component $A$ of $G[C \setminus \{v\}]$,
 there is a $k$-separator $S_A \subseteq S \cup \{v\}$ such that
 \begin{enumerate}
  \item no vertex in $A$ is adjacent to the unique element from $S \setminus S_A$, and
  \item $G\big(S_A,V(A)\big)$ has treewidth at most $k$.
 \end{enumerate}
\end{lemma}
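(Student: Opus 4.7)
The plan is to prove both directions of the equivalence by analyzing the shape of an optimal tree decomposition around the vertices of the separator $S$.

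For the forward direction, assume $\tw(G(S,C)) \leq k$ and fix a tree decomposition $(T,\beta)$ of width at most $k$. Because $S$ is a $k$-clique in $G(S,C)$ and cliques are always contained in a bag, some node $t^* \in V(T)$ satisfies $S \subseteq \beta(t^*)$. Using $|V(G(S,C))| \geq k+2$, I would normalize $(T,\beta)$ so that $\beta(t^*) = S \cup \{v\}$ for some $v \in C$ (either by adding $v$ to a bag of size $k$, or by choosing $t^*$ to already be of size $k+1$), without increasing the width. Removing $t^*$ from $T$ yields subtrees $T_1,\dots,T_m$ rooted at neighbors $t_1,\dots,t_m$ of $t^*$. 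By the standard tree-decomposition separator property, each connected component $A$ of $G[C \setminus \{v\}]$ is entirely contained in the vertex union of a single $T_i$, since every cross-subtree path in $G$ must use a vertex of $\beta(t^*) \cap (C \cup S) = S \cup \{v\}$. Setting $S_A := \beta(t_i) \cap \beta(t^*) \subseteq S \cup \{v\}$ gives a separator of $A$ from the rest of $G$; moreover, any element $s^* \in \beta(t^*) \setminus \beta(t_i)$ cannot appear in any bag of $T_i$ (by the connectedness of the bag-index subtree of $s^*$ in $T$), so the edge between $s^*$ and any $a \in A$ would escape every bag, establishing condition~1. Restricting $(T,\beta)$ to $T_i$ and inserting the clique on $S_A$ into the bag $\beta(t_i)$ yields a tree decomposition of $G(S_A, V(A))$ of width at most $k$. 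If the raw $S_A$ has fewer than $k$ elements, pad it up to size $k$ with elements from $(S \cup \{v\}) \setminus S_A$, preferring $v$ so that condition~1 is preserved.

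For the backward direction, I would use the witness $v$ and the promised tree decompositions $(T_A,\beta_A)$ of $G(S_A,V(A))$ of width at most $k$ to build a tree decomposition of $G(S,C)$ by pasting. In each $(T_A,\beta_A)$, pick a node $t^A$ with $\beta_A(t^A) \supseteq S_A$ (which exists since $S_A$ is a clique in $G(S_A,V(A))$). Introduce a new root $r$ with $\beta(r) := S \cup \{v\}$ (of size $k+1$) and connect $r$ to each $t^A$ by an edge. The resulting pair is a tree decomposition: every vertex of $G(S,C)$ appears in some bag; edges inside $S \cup \{v\}$ lie in $\beta(r)$, edges inside $V(A) \cup S_A$ are covered by $(T_A,\beta_A)$, and edges from $A$ to the unique element of $S \setminus S_A$ are forbidden by condition~1; and the bag-index subtree of every $u \in S \cup \{v\}$ is connected because each $T_A$ containing $u$ has $u$ in $\beta_A(t^A)$, which is adjacent to $r$.

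The main obstacle will be the degenerate case where $S_A = S$ and hence $v \notin S_A$, so that $S \setminus S_A = \emptyset$ and condition~1 becomes vacuous. In that case edges between $v$ and $A$ must still be covered by some bag, yet $v$ does not appear in any bag of $T_A$. To handle this, I would refine the forward direction so that the padded $S_A$ always contains $v$ whenever $A$ has a vertex adjacent to $v$, exploiting the connectedness of the bag-index subtree of $v$ in $T$ to force $v \in \beta(t_i)$ for the relevant $i$; equivalently, one may read ``the unique element of $S \setminus S_A$'' as the unique element of $(S \cup \{v\}) \setminus S_A$, under which the pasting argument goes through uniformly. A careful bookkeeping of which bag of $T_A$ accommodates $v$ during the pasting closes the remaining cases.
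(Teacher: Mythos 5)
The paper does not prove this lemma at all --- it is imported verbatim from Arnborg, Corneil, and Proskurowski \cite{ArnborgCP87} --- so there is no in-paper argument to compare against. Your reconstruction via tree decompositions (rather than via the recursive structure of $k$-trees, which is how the cited reference argues) is a legitimate and essentially standard route. The backward (pasting) direction is complete: the root bag $S \cup \{v\}$, the attachment of each $(T_A,\beta_A)$ at a bag containing the clique $S_A$, and the verification of edge coverage and of connectedness of the bag-index subtrees all go through. You are also right that the statement as printed is slightly off in the degenerate case $S_A = S$: condition~1 must be read with $(S \cup \{v\}) \setminus S_A$ in place of $S \setminus S_A$, since otherwise edges between $v$ and $A$ are neither forbidden by condition~1 nor covered by any bag in the pasting. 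Spotting that is a genuine contribution of your write-up.

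The one step that needs more care is the normalization at the start of the forward direction. You cannot in general ``add $v$ to a bag of size $k$'': if $v$ already occurs in other bags, adding it to an arbitrary node $t^*$ with $\beta(t^*) = S$ destroys the connectedness of the subtree indexed by $v$, unless $t^*$ is adjacent to (or inside) that subtree. The repair is to choose $v$ correctly rather than arbitrarily: let $T_S$ be the (nonempty, connected) set of nodes whose bags contain $S$; if some bag in $T_S$ has size $k+1$, its extra element lies in $C$ and you are done. Otherwise every bag in $T_S$ equals $S$; contract $T_S$ to a single node $t^*$, observe that the connected set $C$ is disjoint from $\beta(t^*)$ and hence lies entirely in one branch of $T - t^*$, let $t_1$ be the root of that branch, pass to a non-redundant decomposition so that $\beta(t_1) \not\subseteq S$ and hence contains some $c \in C$, and add that particular $c$ to $\beta(t^*)$. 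Since $t_1$ belongs to the subtree of $c$ and is adjacent to $t^*$, connectedness is preserved and the width stays $k$. With this repair --- together with the observation you already make, that any element of $(S \cup \{v\}) \setminus \beta(t_i)$ has no neighbor in $A$, which is also what justifies padding $S_A$ up to exactly $k$ elements without violating either condition --- the forward direction is correct as well.
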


Suppose $G(S,C)$ has treewidth at most $k$.
Let $D_G(S,C)$ denote the set of possible vertices~$v \in C$ that satisfy Lemma \ref{la:inductive-characterization-treewidth}.

\begin{theorem}
 \label{thm:wl-treewidth-upper-bound}
 Suppose $k \geq 2$.
 Let $G$ be a graph of treewidth at most $k$.
 Then the $k$-dimensional Weisfeiler-Leman algorithm identifies $G$.
\end{theorem}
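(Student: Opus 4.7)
By Theorem~\ref{thm:eq-wl-pebble}, it suffices to show that for every $H$ with $G \not\cong H$, Spoiler wins the bijective $(k+1)$-pebble game $\BP_{k+1}(G,H)$. My plan is to build Spoiler's strategy recursively along a tree-like decomposition of $G$ into pieces $G(S,C)$ of treewidth at most $k$, using Lemma~\ref{la:inductive-characterization-treewidth} to drive the recursion. The improvement over the Grohe--Mari\~no-style argument (which uses two additional pebbles for scouting and bag-maintenance) hinges on Lemma~\ref{lem:separator-component-sizes-wl}: with only $k+1$ pebbles in play, Spoiler can already detect any mismatch between component sizes of $G$ and $H$ after removing $k$ separator vertices, so no extra ``scout'' pebble is needed.

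The main inductive statement I would prove, by induction on $|C|$, is the following: if $S=\{v_1,\dots,v_k\}$ is a $k$-separator of $G$, $C$ is the vertex set of a connected component of $G - S$ with $G(S,C)$ of treewidth at most $k$, $(w_1,\dots,w_k) \in V(H)^k$, and there is no bijection from $V(G(S,C))$ to $\{w_1,\dots,w_k\} \cup D$ (for some component $D$ of $H - \{w_1,\dots,w_k\}$) sending $v_i \mapsto w_i$ and inducing an isomorphism of the corresponding induced subgraphs, then Spoiler wins $\BP_{k+1}(G,H)$ from the initial position $((v_1,\dots,v_k),(w_1,\dots,w_k))$. The game $\BP_{k+1}(G,H)$ from the empty position is reduced to this claim by letting Spoiler first place $k$ pebbles on a $k$-separator of $G$ (which exists once $|V(G)| \geq k+2$; smaller graphs can be identified by pebbling all vertices), and arguing that $G \not\cong H$ forces the hypothesis for at least one choice of component $C$.

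For the inductive step I would apply Lemma~\ref{la:inductive-characterization-treewidth} to obtain a vertex $v \in D_G(S,C)$; Spoiler places the $(k+1)$-st pebble on $v$, and Duplicator responds with some $v' \in V(H)$. Each connected component $A$ of $G[C \setminus \{v\}]$ has an associated $k$-separator $S_A \subseteq S \cup \{v\}$, characterized by the unique vertex $z_A \in (S \cup \{v\}) \setminus S_A$ having no neighbor in $A$. Spoiler now tries to pair components of $G[C \setminus \{v\}]$ with components of $H[C' \setminus \{v'\}]$ matching the data $(z_A, |V(A)|, \text{local edge pattern into }S \cup \{v\})$. If no such pairing exists, then the multisets of these invariants differ between $G$ and $H$, and Spoiler can choose a component $A$ witnessing the mismatch and a vertex $u \in V(A)$ so that, after possibly replacing the pebble on $v$ by a pebble on $u$, the resulting $(k+1)$-tuples on the two sides have distinct $s$-values; Lemma~\ref{lem:separator-component-sizes-wl} then yields Spoiler's win. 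Otherwise a pairing exists, and non-isomorphism of $G[S \cup C]$ with the corresponding structure in $H$ forces some matched pair $(A,A')$ with $G(S_A,V(A)) \not\cong H(S_{A'},V(A'))$ in the relevant pointed sense; Spoiler removes the pebble on $z_A \in S \setminus S_A$, leaving exactly $k$ pebbles on $S_A$ and one free pebble for the next round, and the induction hypothesis applies because $|V(A)| < |C|$.

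The main obstacle I anticipate is the bookkeeping in the ``unmatched pair'' case: I must verify that every combinatorial discrepancy between the component structures of $G[C \setminus \{v\}]$ and $H[C' \setminus \{v'\}]$ -- different multisets of component sizes, different assignments of excluded separator vertex $z_A$, or different edge patterns between a component and $S \cup \{v\}$ -- can be converted by Spoiler into a concrete position of $k+1$ pebbles whose $s$-value differs between the two graphs, without ever exceeding the pebble budget. This requires threading Lemma~\ref{lem:separator-component-sizes-wl} through a small but careful case analysis, together with the standard observation that the $2$-dimensional WL algorithm already distinguishes arcs that cross different connected components (and hence encodes the relevant multisets), so that if a mismatch is detectable at all then Spoiler can locate a witnessing vertex $u$ before the bijection stage forces a suboptimal pebble placement.
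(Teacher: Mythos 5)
Your overall architecture matches the paper's: reduce to the bijective $(k+1)$-pebble game, drive a recursion along Lemma~\ref{la:inductive-characterization-treewidth}, use Lemma~\ref{lem:separator-component-sizes-wl} to avoid the extra scouting pebble, and descend by removing the pebble on the excluded separator vertex. However, there is a genuine gap in your inductive statement and in the step where the $(k+1)$-st pebble is placed. Your hypothesis compares a \emph{single} component $C$ of $G-S$ against single components $D$ of $H-\{w_1,\dots,w_k\}$, and you claim that $G\not\cong H$ forces this hypothesis for some $C$. This fails when the mismatch lies only in the \emph{multiplicities} of component isomorphism types: if $G-S$ has two components of type $X$ while $H-\{w_1,\dots,w_k\}$ has one of type $X$ and one of type $Y$, with $X$ and $Y$ having the same number of vertices (so that Lemma~\ref{lem:separator-component-sizes-wl} gives no leverage), then every component of $G-S$ has a type that also occurs on the $H$-side, so your hypothesis is vacuous, yet the structures are non-isomorphic. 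The paper instead inducts on the minimal size $m$ at which the \emph{unions} $(G(\bar a,m),\bar a)$ and $(H(\bar b,m),\bar b)$ of all size-$m$ components differ, and then works with a component type that is strictly over-represented in $G$.

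The second, related, problem is the game mechanics. You write that Spoiler places the pebble on a chosen $v\in D_G(S,C)$ ``and Duplicator responds with some $v'$''; in the bijective game Duplicator commits to a global bijection $f$ \emph{before} Spoiler picks $v$, and Lemma~\ref{lem:separator-component-sizes-wl} only constrains $f$ to respect component \emph{sizes}, not individual components or their isomorphism types. So if several components of $G-S$ share the type of $C$, Duplicator can route your intended $v$ into a perfectly matching component of $H$, and no mismatch is created at the new position. The paper's remedy is a counting argument: take the union $D$ of the candidate sets $D_G(\bar a,C')$ over \emph{all} components $C'$ of the over-represented type; since there are strictly more such components in $G$ than components of that pointed type in $H$, some $v\in D$ must satisfy $\big(G[C_G\cup\bar a],\bar a,v\big)\not\cong\big(H[C_H\cup\bar b],\bar b,f(v)\big)$, and that is the pebble Spoiler plays. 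Without this step your strategy does not go through; with it, you additionally need the paper's closing bookkeeping (the sets $\mathcal{A}',\mathcal{A}'',\mathcal{B}',\mathcal{B}''$ and the auxiliary graphs $G',H'$) to certify that the \emph{new} position again exhibits a mismatch at size $m'=|A|$, because after re-pebbling, the size-$m'$ components of $H$ minus the new separator need not all lie inside the component you were tracking.
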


\begin{proof}
 Let $G$ be a connected graph of treewidth $k$ and suppose $H$ is a second connected graph with $H \not\cong G$.
 Let $(T,\beta)$ be a tree decomposition of $G$ of width $k$.
 For a $k$-separator $S \subseteq V(G)$ and an integer $m \in \mathbb{N}$, we define 
 \begin{align*}
  \mathcal{C}_G(S,m) \coloneqq \big\{C \subseteq V(G) \ \big\vert \ &C \text{ is the vertex set of a connected}\\
                                                         &\text{component of $G - S$ of size $m$}\big\}.
 \end{align*}
 Moreover,
 \[G(S,m) \coloneqq G\left[S \cup \bigcup_{C \in \mathcal{C}_G(S,m)} C\right].\]
 An \emph{ordered separator} is a tuple $\bar a = (a_1,\dots,a_k)$ where the underlying set $\{a_1,\dots,a_k\}$ is a separator.
 In this proof, slightly abusing notation, we do not distinguish between ordered separators and their underlying unordered separators.
 For two ordered separators $\bar a \in \big(V(G)\big)^{k}$ and $\bar b \in \big(V(H)\big)^{k}$,
 we define $m(\bar a,\bar b)$ to be the minimal number $m \geq 1$ such that $(G(\bar a,m),\bar a) \not\cong (H(\bar b,m),\bar b)$.
 Here, $(G(\bar a,m),\bar a)$ denotes the graph $G(\bar a,m)$ where, additionally, each vertex in $\bar a$ is individualized.
 More formally, it holds that $(G(\bar a,m),\bar a) \cong (H(\bar b,m),\bar b)$ if there is an isomorphism $\varphi\colon G(\bar a,m) \cong H(\bar b,m)$ such that for $\bar a = (a_1,\dots,a_k)$ and $\bar b = (b_1,\dots,b_k)$, it holds that $\varphi(a_i) = b_i$ for all $i \in [k]$.
 
 We now argue that Spoiler wins the game $\BP_{k+1}(G,H)$.
 Suppose the game is in a position $(\bar a,\bar b) \in \big(V(G)\big)^{k} \times \big(V(H)\big)^{k}$ where there is an edge $st \in E(T)$ with $\beta(s) \cap \beta(t) \subseteq \bar a$ (as sets).
 We shall prove by induction on $m \coloneqq m(\bar a,\bar b)$ that Spoiler wins the game from the initial position $(\bar a,\bar b)$.
 In each case, Spoiler wishes to play another pebble.
 Let $f\colon V(G) \rightarrow V(H)$ be the bijection chosen by Duplicator.
 Using Lemma \ref{lem:separator-component-sizes-wl}, we can assume that $f$ maps the vertex set of $G(\bar a,m)$ to the vertex set of $H(\bar b,m)$.
 Now let $C \in \mathcal{C}_G(\bar a,m)$ be such that
 \begin{align*}
     &|\{C' \in \mathcal{C}_G(\bar a,m) \mid (G(\bar a,C'),\bar a) \cong (G(\bar a,C),\bar a)\}|\\
  >\;&|\{C' \in \mathcal{C}_H(\bar b,m) \mid (H(\bar b,C'),\bar b) \cong (G(\bar a,C),\bar a)\}|.
 \end{align*}
 Also let
 \[D \coloneqq \Big\{v \in D_G(\bar a, C') \mid C' \in \mathcal{C}_G(\bar a,m) \text{ and } \big(G(\bar a,C'),\bar a\big) \cong (G(\bar a,C),\bar a)\Big\}.\]
 Then there exist a $v \in D$ and $C_G \in \mathcal{C}_G(\bar a,m)$ and $C_H \in \mathcal{C}_H(\bar b,m)$ with $v \in C_G$ and $f(v) \in C_H$ and
 \begin{equation}
  \label{eq:non-isomorphic-components}
  \big(G[C_G \cup \bar a],\bar a,v\big) \not\cong \big(G[C_H \cup \bar b],\bar b,f(v)\big).
 \end{equation}
 Now Spoiler places pebbles on $(v,w)$ with $w = f(v)$.
 
 For the base case of the induction, suppose $m = 1$.
 This means $C_G = \{v\}$ and $C_H = \{w\}$ and thus, Spoiler wins immediately. So assume $m > 1$.
 Let $A_1,\dots,A_\ell \subseteq C_G$ be the vertex sets of the connected components of $G[C_G \setminus \{v\}]$.
 Note that $|A_i| \leq m-1$ for every $i \in [\ell]$.
 Also let $B_1,\dots,B_{\ell'} \subseteq C_H$ be the vertex sets of the connected components of $H[C_H \setminus \{w\}]$.
 Due to Equation (\ref{eq:non-isomorphic-components}), there is an $A \in \{A_1,\dots,A_\ell\}$ such that
 \begin{align*}
     &\big|\big\{i \in [\ell] \, \big\vert \, G[A_i \cup \bar a \cup \{v\}],\bar a,v) \cong G[A \cup \bar a \cup \{v\}],\bar a,v)\big\}\big|\\
  >\;&\big|\big\{i \in [\ell'] \, \big\vert \, H[B_i \cup \bar b \cup \{w\}],\bar b,w) \cong G[A \cup \bar a \cup \{v\}],\bar a,v)\big\}\big|.
 \end{align*}
 We pick such a set $A \in \{A_1,\dots,A_\ell\}$ with minimal cardinality (i.e., there is no set $A' \in \{A_1,\dots,A_\ell\}$ strictly smaller than $A$ and satisfying the above condition).
 Now suppose $\bar a = (a_1,\dots,a_k)$ and $\bar b = (b_1,\dots,b_k)$.
 Pick $i \in [k]$ such that no vertex in $A$ is adjacent to $a_i$ (cf.\ Lemma \ref{la:inductive-characterization-treewidth}).
 Now Spoiler removes the pair of pebbles $(a_i,b_i)$.
 Let $\bar a' \coloneqq (a_1,\dots,a_{i-1},a_{i+1},\dots,a_k,v)$ and $\bar b' \coloneqq (b_1,\dots,b_{i-1},b_{i+1},\dots,b_k,w)$.
 Observe that $(\bar a',\bar b')$ is the current position of the game.
 Now let $m' = |A| < m$.
 Note that $A \in \mathcal{C}_G(\bar a' ,m')$.
 
 \begin{claim}
  $\big(G(\bar a',m'),\bar a'\big) \not\cong \big(H(\bar b',m'),\bar b'\big)$.
 \end{claim}
 \begin{claimproof}
  To prove the claim, it suffices to argue that $|\mathcal{A}| > |\mathcal{B}|$ where
  \[\mathcal{A} = \Big\{A' \in \mathcal{C}_G(\bar a',m') \mid \big(G(\bar a',A'),\bar a'\big) \cong \big(G(\bar a',A),\bar a'\big)\Big\}\]
  and
  \[\mathcal{B} = \Big\{B' \in \mathcal{C}_H(\bar b',m') \mid \big(H(\bar b',B'),\bar b'\big) \cong \big(G(\bar a',A),\bar a'\big)\Big\}.\]
  Let $\mathcal{A}' \coloneqq \{A' \in \mathcal{A} \mid A' \subseteq C_G\}$ and $\mathcal{A}'' \coloneqq \mathcal{A} \setminus \mathcal{A}'$.
  Similarly, define $\mathcal{B}' \coloneqq \{B' \in \mathcal{B} \mid B' \subseteq C_H\}$ and $\mathcal{B}'' \coloneqq \mathcal{B} \setminus \mathcal{B}'$.
  From the definition of the set $A$, it follows that $|\mathcal{A}'| > |\mathcal{B}'|$.
  Now define
  \[G' \coloneqq G\!\left[\bar a \cup \{v\} \cup \bigcup_{m'' \leq m'} \mathcal{C}_G(\bar a,m'') \cup \bigcup_{i \in [\ell]\colon |A_i| < m'} A_i\right]\]
  and
  \[H' \coloneqq H\!\left[\bar b \cup \{w\} \cup \bigcup_{m'' \leq m'} \mathcal{C}_H(\bar b,m'') \cup \bigcup_{i \in [\ell']\colon |B_i| < m'} B_i\right].\]
  From the definitions of the number $m$ and the set $A$, it follows that $(G',\bar a,v) \cong (H',\bar b,w)$.
  
  Now let $A' \in \mathcal{A}''$ and let $C$ be the vertex set of a connected component of $G - \bar a$ such that $C \neq C_G$ and $A' \cap C \neq \emptyset$.
  Then $C \subseteq A'$, since $C$ is a connected set in the graph $G - \bar a'$.
  Since $|A'| = m'$, this implies $A' \subseteq V(G')$.
  By the same argument, $B' \subseteq V(H')$ holds for all $B' \in \mathcal{B}''$.
  But this means that $|\mathcal{A}''| = |\mathcal{B}''|$ because $(G',\bar a,v) \cong (H',\bar b,w)$.
  Overall, this proves the claim.
 \end{claimproof}
 Since $m' < m$, Spoiler wins from the initial position $(\bar a',\bar b')$ by the induction hypothesis.
 Using the induction principle, this completes the proof.
\end{proof}

\subsection{Lower Bound}

For the lower bound, we use a construction introduced by Cai, F\"urer, and Immerman \cite{CaiFI92} and start by reviewing it. For a non-empty finite set $S$, we define the \emph{CFI gadget} $X_S$ to be the following graph.
For each $w \in S$, there are vertices $a(w)$ and $b(w)$, and for every $A \subseteq S$ such that $|A|$ is even, there is a vertex $m_A$.
For every $A \subseteq S$ such that $|A|$ is even, there are edges $\{a(w),m_A\} \in E(X_S)$ for all $w \in A$ and $\{b(w),m_A\} \in E(X_S)$ for all $w \in S\setminus A$.
As an example, the graph $X_3 \coloneqq X_{[3]}$ is depicted in Figure \ref{fig:cfi-gadget}.
The graph is colored so that~$\{m_A\mid A \subseteq S \text{ and } |A| \text{ is even}\}$ forms a color class and so that~$\{a(w),b(w)\}$ forms a color class for each $w \in S$.

\begin{figure}
 \centering
 \begin{tikzpicture}
  \node[style=normalvertex,fill=red,label=right:{$a(1)$}] (a1) at (6,2) {};
  \node[style=normalvertex,fill=red,label=right:{$b(1)$}] (b1) at (6,3) {};
  
  \node[style=normalvertex,fill=blue,label=left:{$a(2)$}] (a2) at (0,0) {};
  \node[style=normalvertex,fill=blue,label=left:{$b(2)$}] (b2) at (0,1) {};
  
  \node[style=normalvertex,fill=darkpastelgreen,label=left:{$a(3)$}] (a3) at (0,4) {};
  \node[style=normalvertex,fill=darkpastelgreen,label=left:{$b(3)$}] (b3) at (0,5) {};
  
  \node[style=normalvertex,label=85:{{\small $m_\emptyset$}}] (v1) at (3,4) {};
  \node[style=normalvertex,label=85:{{\small $m_{\{2,3\}}$}}] (v2) at (3,3) {};
  \node[style=normalvertex,label=85:{{\small $m_{\{1,3\}}$}}] (v3) at (3,2) {};
  \node[style=normalvertex,label=85:{{\small $m_{\{1,2\}}$}}] (v4) at (3,1) {};
  
  \path
   (v1) edge (b1)
   (v1) edge (b2)
   (v1) edge (b3)
   (v2) edge (b1)
   (v2) edge (a2)
   (v2) edge (a3)
   (v3) edge (a1)
   (v3) edge (b2)
   (v3) edge (a3)
   (v4) edge (a1)
   (v4) edge (a2)
   (v4) edge (b3);
  
 \end{tikzpicture}
 \caption{The Cai-F\"{u}rer-Immerman gadget $X_3$.}
 \label{fig:cfi-gadget}
\end{figure}

Let $G$ be a connected graph of minimum degree $2$.
For $T \subseteq E(G)$, we define $\CFI_T(G)$ to be the vertex-colored graph obtained from $G$ in the following way.
Each $v \in V(G)$ is replaced with a gadget $X_{E(v)}$, where $E(v) = \{(v,w) \mid vw \in E(G)\}$ denotes the set of (directed) edges incident with $v$.
Additionally, the following edges are inserted between the gadgets.
For every $vw \in E(G) \setminus T$, there are edges from $a(v,w)$ to $a(w,v)$ and from $b(v,w)$ to $b(w,v)$.
Also, for every $vw \in T$, there are edges from $a(v,w)$ to $b(w,v)$ and from $b(v,w)$ to $a(w,v)$.
Finally, we define the vertex coloring $\lambda$ of $\CFI_T(G)$ as follows:
\begin{itemize}
 \item $\lambda\big(a(v,w)\big) = \chi\big(b(v,w)\big) \coloneqq (1,v,w)$ for all $vw \in E(G)$, and
 \item $\lambda(m_A) \coloneqq (0,v)$ for every $v \in V(G)$ and every $A \subseteq E(v)$ such that $|A|$ is even.
\end{itemize}

\begin{lemma}[\cite{CaiFI92}]
 Let $G$ be a connected graph of minimum degree $2$ and $S,T \subseteq E(G)$.
 Then $\CFI_S(G) \cong \CFI_T(G)$ if and only if $|S| \equiv |T| \mod 2$.
\end{lemma}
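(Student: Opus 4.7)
The plan is to split the equivalence into the two directions, handling the easy ``if'' direction by building isomorphisms from local gadget symmetries, and handling the ``only if'' direction by showing the parity of the twist set is an isomorphism invariant.

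For the direction ``$|S|\equiv |T|\pmod 2 \Rightarrow \CFI_S(G)\cong \CFI_T(G)$'', the key observation concerns the automorphism group of a single CFI gadget $X_U$: for any even subset $A\subseteq U$, the map that swaps $a(w)\leftrightarrow b(w)$ for every $w\in A$ and sends each middle vertex $m_B$ to $m_{B\triangle A}$ is an automorphism of $X_U$ that respects the natural coloring. Call this a \emph{flip} by $A$. Applying a flip by some even $A_v\subseteq E(v)$ independently at each vertex $v\in V(G)$ leaves the internal structure of every gadget unchanged, but re-routes each cross-gadget edge $vw\in E(G)$: that edge ends up being ``twisted'' in the new graph iff exactly one of the directed half-edges $(v,w), (w,v)$ belongs to $A_v\cup A_w$. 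In short, the family $(A_v)_v$ of even local subsets transforms $\CFI_T(G)$ into $\CFI_{T\triangle F}(G)$ for an induced edge set $F\subseteq E(G)$. It then suffices to verify the following: every even-cardinality $F\subseteq E(G)$ is realizable this way. Reducing to the case $F=\{e_1,e_2\}$, I would use that $G$ is connected to pick a path from an endpoint of $e_1$ to an endpoint of $e_2$ (of length $\geq 1$, extended by the end-edges $e_1$ and $e_2$); setting $A_v$ to be the (at most two) path-edges incident to $v$ at every internal vertex $v$ of the path, and $A_v=\emptyset$ elsewhere, gives each $A_v$ even cardinality. The interior re-twists cancel pairwise while $e_1$ and $e_2$ are each toggled exactly once, yielding $F=\{e_1,e_2\}$.

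For the converse direction, the plan is to show that $|T|\bmod 2$ is preserved by any isomorphism $\varphi\colon \CFI_S(G)\to \CFI_T(G)$. The coloring of the CFI construction forces $\varphi$ to send the gadget at each vertex $v$ of $G$ bijectively onto the gadget at some vertex $\sigma(v)$, where $\sigma$ is an automorphism of $G$; moreover, the restriction of $\varphi$ to the gadget at $v$, after identifying $E(\sigma(v))$ with $E(v)$ via $\sigma$, is an automorphism of the colored gadget $X_{E(v)}$. A case analysis of $X_{E(v)}$ (using that the color classes $\{a(w),b(w)\}$ must be permuted consistently with the middle-vertex neighborhoods) shows that every such automorphism is a flip by some even subset $A_v\subseteq E(v)$. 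The same computation as in the forward direction then yields $T = \sigma(S)\triangle F$ for an even set $F$, and since $|\sigma(S)|=|S|$, this forces $|S|\equiv |T|\pmod 2$.

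The principal obstacle is the structural claim in the backward direction that the automorphism group of the colored gadget $X_U$ consists exactly of the even flips; this is where one must rule out exotic symmetries. The natural route is to observe that any color-preserving automorphism is determined by the permutation of $\{a(w),b(w)\}$-pairs it induces on the ``boundary'' (since each middle vertex $m_B$ is uniquely determined by its neighborhood), and then to argue that the parity constraint on the middle vertices (they correspond to \emph{even} subsets of $U$) forces the number of pairs in which $a$ and $b$ are swapped to be even.
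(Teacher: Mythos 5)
The paper does not actually prove this lemma; it is imported verbatim from Cai, F\"urer, and Immerman, so there is no internal proof to compare against. Your argument is the standard one for the CFI construction and it is correct. The three load-bearing steps all check out: the ``flip'' by an even set $A$ is an automorphism of $X_U$ because $m_B\mapsto m_{B\triangle A}$ preserves adjacency and $|B\triangle A|$ stays even; a walk from $e_1$ to $e_2$ with $A_v$ equal to the two incident walk-edges at each interior vertex toggles exactly $e_1$ and $e_2$ (interior edges are toggled at both ends and hence cancel), and since even sets decompose into pairs and flips compose by symmetric difference, every even $F$ is realizable; and in the converse direction the image of $m_\emptyset$ pins down the swapped set $A_v$ and forces it to be even, which is exactly the ``no exotic symmetries'' claim you isolate.

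Two small points. First, your path should be allowed to degenerate: if $e_1$ and $e_2$ share an endpoint $v$, the ``path'' is the single vertex $v$ and $A_v$ consists of the two directed half-edges of $e_1$ and $e_2$ at $v$; your phrasing ``of length $\geq 1$'' obscures this case, and if the chosen walk revisits vertices one should compose flips at that vertex (symmetric differences of even sets are even, so nothing breaks). Second, with the coloring as defined in this paper each pair $\{a(v,w),b(v,w)\}$ is its own color class, so the induced map $\sigma$ on $V(G)$ is forced to be the identity; your more general treatment with $\sigma$ an arbitrary automorphism of $G$ is harmless and in fact covers the uncolored variant, but it is not needed here. Neither point is a gap.
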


Hence, applying the above construction to a specific graph $G$ yields a pair of non-isomorphic graphs $\CFI(G) = \CFI_{\emptyset}(G)$ and $\twCFI(G) = \CFI_{\{e\}}(G)$ for some $e \in E(G)$.

\begin{theorem}[Dawar and Richerby \cite{DawarR07}]
 \label{thm:cfi-equivalence-tree-width}
 Let $G$ be a connected graph such that $\tw(G) \geq k+1$ and $\deg(v) \geq 2$ for all $v \in V(G)$.
 Then $\CFI(G) \simeq_k \twCFI(G)$.
\end{theorem}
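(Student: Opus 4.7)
The plan is to exhibit a winning strategy for Duplicator in the bijective pebble game $\BP_{k+1}(\CFI(G),\twCFI(G))$; by Theorem~\ref{thm:eq-wl-pebble} this implies $\CFI(G)\simeq_k\twCFI(G)$. The strategy is driven by the cops-and-robbers characterization of treewidth: since $\tw(G)\geq k+1$, the robber has a winning strategy against $k+1$ cops on $G$. At every stage of the pebble game, the pebbles in each graph lie in gadgets $X_{E(u)}$ for $u$ ranging over some set $X\subseteq V(G)$ with $|X|\leq k+1$, and $X$ will play the role of the current cop position.

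I would maintain the following invariant. Duplicator keeps a partial isomorphism $\pi$ between the unions of gadgets $X_{E(u)}$, $u\in X$, of the two CFI graphs, together with the cross-edges attaching to neighbouring gadgets. The twist distinguishing $\CFI(G)$ from $\twCFI(G)$ is encoded by an edge subset $T\subseteq E(G)$ of odd cardinality, considered up to the equivalence that XORs $T$ with the incidence set of a single vertex; the latter corresponds precisely to the local $a\leftrightarrow b$ symmetry of a CFI gadget, and hence yields isomorphic graphs. Using this equivalence, Duplicator always chooses a representative of $T$ disjoint from the edges incident to $X$, so that the twist lives entirely in the robber's region $G-X$.

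When Spoiler declares a new pebble inside the gadget for some $v\in V(G)$, the robber's winning strategy provides an escape vertex in a connected component $C$ of $G-X'$, where $X'$ is the updated cop set. Duplicator reroutes $T$ along a path in $G-X'$ into $C$; on the CFI side this is realised by a sequence of local $a/b$-swaps at vertices outside $X'$, none of which touches any already-pebbled gadget. Having shifted $T$ away from $v$, Duplicator extends $\pi$ in the unique way compatible with the now-untwisted cross-edges between $X_{E(v)}$ and the previously pebbled gadgets, and picks any bijection $f\colon V(\CFI(G))\to V(\twCFI(G))$ that realises this extended $\pi$ on the pebbled gadgets and respects the partition into gadgets elsewhere.

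The main obstacle is verifying that the invariant is sustained under arbitrary play, specifically that the twist can always be pushed out of a newly pebbled gadget without ever disturbing previous commitments. Consistency with prior rounds is automatic because twist shifts are supported outside the pebbled gadgets. The existence of the rerouting path is exactly the robber's escape condition: since $k+1$ cops cannot catch the robber on $G$, for every Spoiler move the connected component of $G-X'$ harbouring the robber is adjacent to the newly pebbled vertex, so the twist can be XORed through into that component. An induction on rounds, together with the observation that Spoiler removing a pebble only shrinks $X$ and thus only relaxes the invariant, shows that Duplicator never loses, completing the proof.
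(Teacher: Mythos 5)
The paper itself contains no proof of this statement: it is imported from Dawar and Richerby \cite{DawarR07}, and your cops-and-robbers/twist-relocation strategy is precisely the argument of that source, so the approach is the intended one. As written, however, two steps need repair before it is a proof. First, in $\BP_{k+1}$ Duplicator must announce the bijection $f$ \emph{before} Spoiler selects $v$; your description has her rerouting the twist in reaction to $v$ and only then choosing $f$. The standard fix is that in each round the bijection is the one whose unique mismatched cross-edge pair sits at an edge $e_0$ with \emph{both} endpoints in the robber's component of $G-X$; then no single new pebble can expose the twist (at most one endpoint of $e_0$ becomes pebbled), and the relocation is carried out internally before the next round's bijection is announced. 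Second, your justification for the relocation step (``the robber's new component is adjacent to the newly pebbled vertex'') is neither the correct statement nor sufficient. The relocation is an XOR with pairs of edges sharing an unpebbled vertex of $G$, i.e., pushing the twisted edge along a path, and this path must avoid \emph{every} currently pebbled gadget, including the one Spoiler has just pebbled, because an $a/b$-swap inside a pebbled gadget moves vertices of that gadget and would break consistency with the pebble already placed there. Showing that the twisted edge is always equivalent, modulo such swaps at unpebbled vertices only, to a single edge inside the robber's new escape component is where the robber's winning strategy against $k+1$ cops and the hypothesis $\deg(v)\ge 2$ (the gadget of a degree-one vertex is rigid, so a twist on a pendant edge could not be moved) actually enter; this is the substantive content of the cited proof. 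With these two points made precise, your argument goes through and coincides with the one in \cite{DawarR07}.
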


The strategy to obtain a good lower bound is to find graphs $G$ for which we can show a sufficiently strong upper bound on the treewidths of $\CFI(G)$ and $\twCFI(G)$.
Here, we first provide an analysis for the simple case that $G$ is an $(n \times n)$-grid.
This almost achieves the lower bound stated in Theorem \ref{thm:wl-treewidth-bounds}.
Then, we argue how to modify the grid to obtain the desired improvement.

For $n \geq 2$, let $G_{n,n}$ be the $(n \times n)$-grid.
Moreover, let $G_{n,n}^{+}$ be the $(n \times n)$-grid in which each edge is replaced with a path of length $3$.
Formally,
$V(G_{n,n}) = [n] \times [n]$
and
$E(G_{n,n}) = \{(i,j)(i',j') \mid (i = i' \wedge |j - j'| = 1) \vee (j = j' \wedge |i-i'| = 1)\}$.
Moreover,
$V(G_{n,n}^{+}) = V(G_{n,n}) \cup \{(v,w) \mid vw \in E(G_{n,n})\}$
and 
$E(G_{n,n}^{+}) = \{v(v,w) \mid v \in V(G_{n,n}), \ vw \in E(G_{n,n})\} \cup \{(v,w)(w,v) \mid vw \in E(G_{n,n})\}$.

\begin{lemma}
 \label{la:tree-decomposition-subdivided-grid}
 Let $n \geq 2$.
 Then there is a tree decomposition $(T,\beta)$ of $G_{n,n}^{+}$ of width $n+2$ such that
 \begin{enumerate}
  \item $|\beta(t) \cap V(G_{n,n})| \leq 1$ for every $t \in V(T)$, and
  \item if $|\beta(t) \cap V(G_{n,n})| = 1$, then there exists a $v \in V(G_{n,n})$ such that $\beta(t) = E(v) \cup \{v\}$, where $E(v) = \{(v,w) \mid vw \in E(G_{n,n})\}$.
   In this case, $t$ is a leaf of $T$ and $\beta(s) \cap V(G_{n,n}) = \emptyset$ for the unique $s \in V(T)$ with $st \in E(T)$.
 \end{enumerate}
\end{lemma}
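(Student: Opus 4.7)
The plan is to construct the tree decomposition $(T,\beta)$ explicitly. The tree $T$ will consist of a long spine with, for every grid vertex $v\in V(G_{n,n})$, one leaf $\ell_v$ attached and carrying the bag $\{v\}\cup E(v)$. Since by construction every spine bag will contain only subdivision vertices, properties (1) and (2) hold immediately; the task reduces to organising the spine so that every bag has size at most $n+3$ and the decomposition is valid.

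The spine is the concatenation of $n$ column sub-spines $S_1,\dots,S_n$ separated by inter-column transition paths. Within $S_c$ the grid vertices of column $c$ are processed from top to bottom through row bags $R_{1,c},\dots,R_{n,c}$ with vertical-handoff bags $V_{r,c}$ between $R_{r,c}$ and $R_{r+1,c}$. I take
\[
R_{r,c} = E((r,c))\,\cup\,\{((r',c),(r',c-1)) : r'>r\}\,\cup\,\{((r',c),(r',c+1)) : r'<r\},
\]
attach the leaf $\ell_{(r,c)}$ to $R_{r,c}$, and put the two subdivision vertices $((r,c),(r+1,c)),((r+1,c),(r,c))$ of the vertical grid edge together with shifted copies of the left-/right-sub lists into $V_{r,c}$, so that $V_{r,c}$ covers the inner edge of the vertical grid edge. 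Between columns $c$ and $c+1$ I use transition bags $T'_{c,r}=\{((r',c),(r',c+1)):r'\ge r\}\cup\{((r',c+1),(r',c)):r'\le r\}$ for $r=1,\dots,n$ interleaved with their overlaps $T_{c,r}=T'_{c,r}\cap T'_{c,r+1}$; each $T'_{c,r}$ covers the inner edge of the horizontal grid edge in row $r$. Terms that do not exist for $c\in\{1,n\}$ or $r\in\{1,n\}$ are simply dropped.

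Verification then splits into three short arguments. Vertex and edge coverage is read off directly: a grid vertex sits in its leaf, each subdivision vertex appears in the $E((r,c))$-summand of the row bag of its grid-vertex endpoint, edges of the form $v\,(v,w)$ are in $L_v$, inner vertical edges are in the corresponding $V_{r,c}$, and inner horizontal edges are in the corresponding $T'_{c,r}$. The connectedness axiom reduces to checking that, for each subdivision vertex, the set of bags containing it is a contiguous subpath of the spine meeting the attachment $R_{r,c}$ of its grid-vertex endpoint; this is straightforward from the definitions, because each subdivision vertex is introduced in a single row bag or a single transition bag and forgotten exactly when it is paired with its twin in a covering bag. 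A direct count then gives $|R_{r,c}|\le 4+(n-r)+(r-1)=n+3$, $|V_{r,c}|\le n+2$, $|T'_{c,r}|\le n+1$, and $|L_v|\le 1+\deg_{G_{n,n}}(v)\le 5\le n+3$ for $n\ge 2$, yielding width $n+2$.

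The only step calling for careful bookkeeping, and the one where a mistake is most likely to slip in, is verifying connectedness across the ``seams'' between a row bag and the following vertical-handoff bag, and between the last row bag of a column and the subsequent inter-column transition path: here one must make sure that every subdivision vertex that must persist across the seam is explicitly listed in all intermediate bags. The boundary cases $r,c\in\{1,n\}$, where $E((r,c))$ and some of the shifted sub-lists are smaller, only shrink bags and require no new arguments, so they pose no obstacle to the size bound.
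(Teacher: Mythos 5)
Your construction is essentially identical to the paper's: your row bags $R_{r,c}$, vertical-handoff bags $V_{r,c}$, inter-column transition bags $T'_{c,r}$, and leaves $\ell_v$ correspond exactly (up to reindexing and the harmless extra intersection bags) to the paper's sets $A_{i,j}$, $B_{i,j}$, $C_{i,j}$, and $t_{i,j}^{D}$, arranged along the same column-by-column spine, and your size counts giving width $n+2$ match. The proposal is correct and takes the same approach as the paper.
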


\begin{proof}
 In order to describe the bags of the tree decomposition, we start by defining several sets $A_{i,j},B_{i,j},C_{i,j} \subseteq V(G_{n,n}^{+})$ for $i,j \in [n]$ (see also Figure \ref{fig:grid-decomposition}).
 Let
 \begin{align*}
  A_{i,j} {}\coloneqq{} &\{((i',j),(i',j+1)) \mid 1 \leq i' \leq i\} {}\cup{}\\
            &\{((i',j),(i',j-1)) \mid i \leq i' \leq n\}{}\cup{}\\
            &\{((i,j),(i+1,j)), ((i,j),(i-1,j))\},\\[2ex]
  B_{i,j} {}\coloneqq{} &\{((i',j),(i',j+1)) \mid 1 \leq i' \leq i\} {}\cup{}\\
            &\{((i',j),(i',j-1)) \mid i < i' \leq n\} {}\cup{}\\
            &\{((i,j),(i+1,j)), ((i+1,j),(i,j))\},\\[2ex]
  C_{i,j} {}\coloneqq{} &\{((i',j),(i',j-1)) \mid 1 \leq i' \leq i\} {}\cup{}\\
           &\{((i',j-1),(i',j)) \mid i \leq i' \leq n\}.
 \end{align*}
 (Formally, the sets defined above may also contain elements outside of $V(G_{n,n}^{+})$ if some index is not contained in the set $[n]$.
 In this case, we simply do not include the corresponding element in the set.)
 \begin{figure}
 \centering
 \scalebox{0.8}{
 \begin{tikzpicture}
  \draw (0.5,0) edge (0.5,4.5);
  \draw (2.0,0) edge (2.0,4.5);
  \draw (3.5,0) edge (3.5,4.5);
  
  \draw (0,0.0) edge (4,0.0);
  \draw (0,1.5) edge (4,1.5);
  \draw (0,3.0) edge (4,3.0);
  \draw (0,4.5) edge (4,4.5);
  
  \foreach \i in {0,...,9}{
   \node[tinyvertex] (c1-\i) at (0.5,0.5*\i) {};
   \node[tinyvertex] (c2-\i) at (2.0,0.5*\i) {};
   \node[tinyvertex] (c3-\i) at (3.5,0.5*\i) {};
  }
  
  \foreach \j in {0,...,8}{
   \node[tinyvertex] (r1-\j) at (0.5*\j,0.0) {};
   \node[tinyvertex] (r2-\j) at (0.5*\j,1.5) {};
   \node[tinyvertex] (r3-\j) at (0.5*\j,3.0) {};
   \node[tinyvertex] (r4-\j) at (0.5*\j,4.5) {};
  }
  
  \draw[thick,dotted,rounded corners] (2.75,4.75) -- (2.75,3.00) -- (1.75,2.00) -- (1.75,-0.25) -- (1.25,-0.25) -- (1.25,3.00) -- (2.25,4.00) -- (2.25,4.75) -- cycle;
  \draw[thick,dotted] (2,3) circle (0.25);
  
  \node at (0.5,5) {{\small $j-1$}};
  \node at (2,5) {{\small $j$}};
  \node at (-0.5,3) {{\small $i$}};
  \node at (-0.6,1.5) {{\small $i+1$}};
  \node at (1.5,-0.6) {$A_{i,j}$};
  
 \end{tikzpicture}
 }
 \hspace*{15pt}
 \scalebox{0.8}{
 \begin{tikzpicture}
  \draw (0.5,0) edge (0.5,4.5);
  \draw (2.0,0) edge (2.0,4.5);
  \draw (3.5,0) edge (3.5,4.5);
  
  \draw (0,0.0) edge (4,0.0);
  \draw (0,1.5) edge (4,1.5);
  \draw (0,3.0) edge (4,3.0);
  \draw (0,4.5) edge (4,4.5);
  
  \foreach \i in {0,...,9}{
   \node[tinyvertex] (c1-\i) at (0.5,0.5*\i) {};
   \node[tinyvertex] (c2-\i) at (2.0,0.5*\i) {};
   \node[tinyvertex] (c3-\i) at (3.5,0.5*\i) {};
  }
  
  \foreach \j in {0,...,8}{
   \node[tinyvertex] (r1-\j) at (0.5*\j,0.0) {};
   \node[tinyvertex] (r2-\j) at (0.5*\j,1.5) {};
   \node[tinyvertex] (r3-\j) at (0.5*\j,3.0) {};
   \node[tinyvertex] (r4-\j) at (0.5*\j,4.5) {};
  }
  
  \draw[thick,dotted,rounded corners] (2.75,4.75) -- (2.75,2.50) -- (1.75,1.50) -- (1.75,-0.25) -- (1.25,-0.25) -- (1.25,2.00) -- (2.25,3.00) -- (2.25,4.75) -- cycle;
  
  \node at (0.5,5) {{\small $j-1$}};
  \node at (2,5) {{\small $j$}};
  \node at (1.5,-0.6) {$B_{i,j}$};
  
 \end{tikzpicture}
 }
 \hspace*{15pt}
 \scalebox{0.8}{
 \begin{tikzpicture}
  \draw (0.5,0) edge (0.5,4.5);
  \draw (2.0,0) edge (2.0,4.5);
  
  \draw (0,0.0) edge (2.5,0.0);
  \draw (0,1.5) edge (2.5,1.5);
  \draw (0,3.0) edge (2.5,3.0);
  \draw (0,4.5) edge (2.5,4.5);
  
  \foreach \i in {0,...,9}{
   \node[tinyvertex] (c1-\i) at (0.5,0.5*\i) {};
   \node[tinyvertex] (c2-\i) at (2.0,0.5*\i) {};
  }
  
  \foreach \j in {0,...,5}{
   \node[tinyvertex] (r1-\j) at (0.5*\j,0.0) {};
   \node[tinyvertex] (r2-\j) at (0.5*\j,1.5) {};
   \node[tinyvertex] (r3-\j) at (0.5*\j,3.0) {};
   \node[tinyvertex] (r4-\j) at (0.5*\j,4.5) {};
  }
  
  \draw[thick,dotted,rounded corners] (1.75,4.75) -- (1.75,3.00) -- (1.25,2.00) -- (1.25,-0.25) -- (0.75,-0.25) -- (0.75,3.00) -- (1.25,4.00) -- (1.25,4.75) -- cycle;
  
  \node at (0.5,5) {{\small $j-1$}};
  \node at (2,5) {{\small $j$}};
  \node at (1,-0.6) {$C_{i,j}$};
  
 \end{tikzpicture}
 }
 \caption{Visualization of the sets $A_{i,j}$, $B_{i,j}$ and $C_{i,j}$ constructed in the proof of Lemma \ref{la:tree-decomposition-subdivided-grid}.}
 \label{fig:grid-decomposition}
 \end{figure}
 Now define
 \[V(T) \coloneqq \{t_{i,j}^{A},t_{i,j}^{B},t_{i,j}^{C},t_{i,j}^{D} \mid i,j \in [n]\}.\]
 Also set
 \begin{align*}
  \beta(t_{i,j}^{A}) &\coloneqq A_{i,j},\\
  \beta(t_{i,j}^{B}) &\coloneqq B_{i,j},\\
  \beta(t_{i,j}^{C}) &\coloneqq C_{i,j},\\
  \beta(t_{i,j}^{D}) &\coloneqq E((i,j)) \cup \{(i,j)\} = N_{G^+_{n,n}}[(i,j)].
 \end{align*}
 Observe that each bag contains at most $n+3$ elements.
 It remains to define the edges of the tree $T$.
 The following edges are added to the set $E(T)$:
 \begin{itemize}
  \item $t_{i,j}^{C}t_{i+1,j}^{C}$ for all $i \in [n-1], j \in [n]$,
  \item $t_{n,j}^{C}t_{1,j}^{A}$ for all $j \in [n]$,
  \item $t_{i,j}^{A}t_{i,j}^{B}$ for all $i,j \in [n]$,
  \item $t_{i,j}^{A}t_{i,j}^{D}$ for all $i,j \in [n]$,
  \item $t_{i,j}^{B}t_{i+1,j}^{A}$ for all $i \in [n-1], j \in [n]$, and
  \item $t_{n,j}^{B}t_{1,j+1}^{C}$ for all $j \in [n-1]$.
 \end{itemize}
 It can be verified in a straight-forward manner that $(T,\beta)$ defines a tree decomposition of the graph $G_{n,n}^{+}$ with the desired properties.
\end{proof}

\begin{lemma}
 \label{la:tree-width-cfi-grid}
 Let $n \geq 2$. Then $\tw(\CFI(G_{n,n})) \leq 2n + 5$ and $\tw(\twCFI(G_{n,n})) \leq 2n + 5$.
\end{lemma}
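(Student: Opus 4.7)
The plan is to transform the tree decomposition $(T,\beta)$ of $G_{n,n}^{+}$ provided by Lemma \ref{la:tree-decomposition-subdivided-grid} into a tree decomposition of $\CFI(G_{n,n})$ (respectively $\twCFI(G_{n,n})$) of width at most $2n+5$. The guiding observation is that $G_{n,n}^{+}$ serves as the combinatorial skeleton of the CFI graphs: each vertex $v \in V(G_{n,n})$ is replaced by the gadget $X_{E(v)}$ (containing the middle vertices $m_A$ and the outer vertices $a(v,w), b(v,w)$ for $(v,w) \in E(v)$), while each subdivision vertex $(v,w) \in V(G_{n,n}^{+}) \setminus V(G_{n,n})$ corresponds exactly to the pair of outer vertices $\{a(v,w), b(v,w)\}$ on $v$'s side of the edge $vw$.

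Concretely, I would build a new tree $T'$ and bag-function $\beta'$ as follows. For every node $t$ of $T$ with $\beta(t) \cap V(G_{n,n}) = \emptyset$, let $\beta'(t) \coloneqq \bigcup_{(v,w) \in \beta(t)} \{a(v,w), b(v,w)\}$; this doubles the size, so $|\beta'(t)| \leq 2(n+3) = 2n+6$. For every leaf $t$ of $T$ of the second type, where $\beta(t) = E(v) \cup \{v\}$ for some $v \in V(G_{n,n})$, remove $t$ and attach to its (unique) parent $s$ a fresh collection of leaves, one for each even-cardinality subset $A \subseteq E(v)$, carrying the bag
\[
\{m_A\} \;\cup\; \{a(v,w) : (v,w) \in A\} \;\cup\; \{b(v,w) : (v,w) \in E(v) \setminus A\}.
\]
Each such new leaf has size $\deg(v)+1 \leq 5$.

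The remaining work is routine verification that $(T',\beta')$ is a valid tree decomposition. The middle vertex $m_A$ lives in a single bag, trivially yielding a subtree. The outer vertex $a(v,w)$ (and similarly $b(v,w)$) appears in exactly those transformed bags $\beta'(t)$ whose original contained $(v,w)$, together with some of the new leaves hanging off the parent $s$ of the old leaf for $v$; since $\beta'(s)$ contains $\{a(v,w), b(v,w)\}$ (as $\beta(s) \supseteq E(v)$), the corresponding set of bags is still connected. Every intra-gadget edge $\{m_A, a(v,w)\}$ or $\{m_A, b(v,w)\}$ is covered by the new leaf for $A$, and every inter-gadget edge (of either $\CFI(G_{n,n})$ or $\twCFI(G_{n,n})$) uses the vertex pair corresponding to some edge $(v,w)(w,v)$ of $G_{n,n}^{+}$, which sits inside some original bag — hence inside the transformed bag — so the same decomposition works verbatim for both graphs. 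Taking the maximum of the bag sizes gives $\max\{2n+6,\, 5\} = 2n+6$ for $n \geq 2$, so the width is at most $2n+5$.

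The only delicate step is the leaf-replacement: a naive in-place substitution would blow up the single leaf bag for $v$ to size $2\deg(v) + 2^{\deg(v)-1}$, which is up to $16$ and would ruin the bound for small $n$. Splitting the gadget's contribution across $2^{\deg(v)-1}$ separate small leaves hanging off the parent $s$ is the key trick; that this is permissible relies on the fact, guaranteed by Lemma \ref{la:tree-decomposition-subdivided-grid}, that $\beta(s)$ already contains the full set $E(v)$, so the outer vertices are available in $\beta'(s)$ to bridge the new leaves to the rest of the decomposition.
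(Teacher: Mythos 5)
Your proposal is correct and takes essentially the same route as the paper's proof: both lift the tree decomposition of $G_{n,n}^{+}$ from Lemma~\ref{la:tree-decomposition-subdivided-grid}, double each subdivision-vertex bag into the corresponding $\{a(v,w),b(v,w)\}$ pairs, and replace each gadget leaf by one small leaf per middle vertex $m_A$ attached to the leaf's former neighbour $s$. The only (harmless) difference is that your new leaf bags contain just $m_A$ and its $\deg(v)$ neighbours, whereas the paper's contain $m_A$ together with all $2\deg(v)$ outer vertices of the gadget (size at most $9$); either choice gives width $2n+5$ for $n \geq 2$.
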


\begin{proof}
 We need to define a tree decomposition for the graphs $\CFI(G_{n,n})$ and $\twCFI(G_{n,n})$.
 Fix $n \geq 2$ and let $(T,\beta)$ be the tree decomposition described in Lemma \ref{la:tree-decomposition-subdivided-grid} for $G_{n,n}^{+}$.
 Now a tree decomposition $(T',\beta')$ for the graphs $\CFI(G_{n,n})$ and $\twCFI(G_{n,n})$ can be obtained as follows.
 For each $t \in V(T)$ such that $\beta(t) \cap V(G_{n,n}) = \emptyset$, it also holds that~$t \in V(T')$ and
 \[\beta'(t) = \big\{a(v,w),b(v,w) \ \big\vert \ (v,w) \in \beta(t)\big\}.\]
 Note that $|\beta'(t)| = 2 \cdot |\beta(t)|$.
 Also, for every $t_1t_2 \in E(T)$ with $\beta(t_i) \cap V(G_{n,n}) = \emptyset$, there is an edge $t_1t_2 \in E(T')$.
 
 Otherwise, $|\beta(t) \cap V(G_{n,n})| = 1$ and $\beta(t) = E(v) \cup \{v\}$ for a certain $v \in V(G_{n,n})$.
 Also,~$t$ is a leaf of $T$ and $\beta(s) \cap V(G_{n,n}) = \emptyset$ for the unique $s \in V(T)$ with $st \in E(T)$.
 For every $A \subseteq E(v)$ such that $|A|$ is even, there is a vertex $t_A \in V(T')$.
 We define
 \[\beta'(t_A) \coloneqq \big\{m_A\big\} \cup \big\{a(v,w),b(v,w) \ \big\vert \ vw \in E(G_{n,n})\big\}.\]
 Note that $|\beta'(t_A)| \leq 9$, since $\deg_{G_{n,n}}(v) \leq 4$ for every $v \in V(G_{n,n})$.
 Also, there are edges $\{t_A,s\} \in E(T')$ for every $A \subseteq E(v)$ such that $|A|$ is even.
 It is easy to check that $(T',\beta')$ is a tree decomposition for the graphs $\CFI(G_{n,n})$ and $\twCFI(G_{n,n})$.
 Also,
 \[\width(T',\beta') \leq \max\{9,2(\width(T,\beta)+1)\} - 1 \leq \max\{9,2(n+3)\} - 1 = 2n+5.\]
 This concludes the proof.
\end{proof}

Combining Lemma \ref{la:tree-width-cfi-grid} and Theorem \ref{thm:cfi-equivalence-tree-width}, we almost obtain the lower bound stated in Theorem \ref{thm:wl-treewidth-bounds}.
By slightly modifying the constructions presented above, we are able to improve on the upper bound of the treewidth of certain Cai-Fürer- Immerman graphs.
Towards this end, let $G_{n,n}'$ be the graph obtained from $G_{n,n}$ by replacing each vertex of degree $4$ with two vertices of degree $3$ as indicated in Figure \ref{fig:modified-grid}.
Formally,
\[V(G_{n,n}') \coloneqq \big(\{1,n\} \times [n]\big) \cup \big([n] \times \{1,n\}\big) \cup \big\{(i,j,1),(i,j,4) \, \big\vert \, i,j \in \{2,\dots,n-1\}\big\}\]
and
\begin{align*}
 E(G_{n,n}') \coloneqq\;\;&\{(i,j)(i',j') \mid (i = i' \wedge |j - j'| = 1) \vee (j = j' \wedge |i-i'| = 1)\}\\
                    \cup\;&\{(1,i)(2,i,1) \mid i \in \{2,\dots,n-1\}\}\\
                    \cup\;&\{(i,1)(i,2,1) \mid i \in \{2,\dots,n-1\}\}\\
                    \cup\;&\{(n,i)(n-1,i,4) \mid i \in \{2,\dots,n-1\}\}\\
                    \cup\;&\{(i,n)(i,n-1,4) \mid i \in \{2,\dots,n-1\}\}\\
                    \cup\;&\{(i,j,1)(i,j,4) \mid i,j \in \{2,\dots,n-1\}\}\\
                    \cup\;&\{(i,j,4)(i+1,j,1) \mid i \in \{2,\dots,n-2\},j \in \{2,\dots,n-1\}\}\\
                    \cup\;&\{(i,j,4)(i,j+1,1) \mid i \in \{2,\dots,n-1\},j \in \{2,\dots,n-2\}\}.
\end{align*}
Note that $G_{n,n}$ is a minor of $G_{n,n}'$ and thus, $\tw(G_{n,n}') \geq n$.
Also, let $G_{n,n}^*$ be the graph obtained from $G_{n,n}'$ by replacing each edge with a path of length $3$.
Formally, we construct the graph $G_{n,n}^*$ from $G_{n,n}^+$ as follows.
For $i,j \in \{2, \dots, n-1\}$, let $V_{i,j} \coloneqq N_{G_{n,n}^{+}}[(i,j)] \subseteq V(G_{n,n}^{+})$.
Note that
\[V_{i,j} = \Big\{(i,j), \big((i,j),(i-1,j)\big),\big((i,j),(i+1,j)\big),\big((i,j),(i,j-1)\big),\big((i,j),(i,j+1)\big)\Big\}.\]
We replace the graph $G_{n,n}^{+}[V_{i,j}]$ with the modified graph $\widehat{G}_{i,j}$ with
\[V(\widehat{G}_{i,j}) = \big(V_{i,j} \setminus \{(i,j)\}\big) \cup \big\{(i,j,d) \, \vert \, d \in [4]\big\}\]
and edges of the following forms
\begin{itemize}
 \item $((i,j),(i-1,j))(i,j,1)$,
 \item $((i,j),(i,j-1))(i,j,1)$,
 \item $(i,j,1)(i,j,2)$,
 \item $(i,j,2)(i,j,3)$,
 \item $(i,j,3)(i,j,4)$,
 \item $((i,j),(i+1,j))(i,j,4)$, and
 \item $((i,j),(i,j+1))(i,j,4)$.
\end{itemize}
The graph resulting from applying the modification for all $i,j \in \{2, \dots, n-1\}$ is $G_{n,n}^*$.

\begin{figure}
 \centering
 \scalebox{0.8}{
 \begin{tikzpicture}
  \draw (0.0,0) edge (0.0,4.8);
  \draw (1.6,0) edge (1.6,4.8);
  \draw (3.2,0) edge (3.2,4.8);
  \draw (4.8,0) edge (4.8,4.8);
  
  \draw (0,0.0) edge (4.8,0.0);
  \draw (0,1.6) edge (4.8,1.6);
  \draw (0,3.2) edge (4.8,3.2);
  \draw (0,4.8) edge (4.8,4.8);
  
  \foreach \i in {0,1,2,3}{
   \foreach \j in {0,1,2,3}{
    \node[tinyvertex] (v-\i-\j) at (1.6*\i,1.6*\j) {};
   }
  }
  
 \end{tikzpicture}
 }
 \hspace*{15pt}
 \scalebox{0.8}{
 \begin{tikzpicture}
  \draw (0.0,0) edge (0.0,4.8);
  \draw (4.8,0) edge (4.8,4.8);
  
  \draw (0,0.0) edge (4.8,0.0);
  \draw (0,4.8) edge (4.8,4.8);
  
  \foreach \i in {0,3}{
   \foreach \j in {0,1,2,3}{
    \node[tinyvertex] (v-\i-\j) at (1.6*\i,1.6*\j) {};
   }
  }
  \foreach \i in {0,1,2,3}{
   \foreach \j in {0,3}{
    \node[tinyvertex] (v-\i-\j) at (1.6*\i,1.6*\j) {};
   }
  }
  \foreach \i in {1,2}{
   \foreach \j in {1,2}{
    \node[tinyvertex] (v-\i-\j-1) at ($(1.6*\i,1.6*\j)-(0.2,-0.2)$) {};
    \node[tinyvertex] (v-\i-\j-4) at ($(1.6*\i,1.6*\j)+(0.2,-0.2)$) {};
    \draw (v-\i-\j-1) edge (v-\i-\j-4);
   }
  }
  
  \path
   (v-1-3) edge (v-1-2-1)
   (v-2-3) edge (v-2-2-1)
   (v-0-1) edge (v-1-1-1)
   (v-0-2) edge (v-1-2-1)
   (v-1-2-4) edge (v-1-1-1)
   (v-1-2-4) edge (v-2-2-1)
   (v-1-1-4) edge (v-2-1-1)
   (v-1-1-4) edge (v-1-0)
   (v-2-2-4) edge (v-2-1-1)
   (v-2-2-4) edge (v-3-2)
   (v-2-1-4) edge (v-2-0)
   (v-2-1-4) edge (v-3-1);
 \end{tikzpicture}
 }
 \caption{Visualization of the graphs $G_{4,4}$ on the left and $G_{4,4}'$ on the right.}
 \label{fig:modified-grid}
\end{figure}

\begin{lemma}
 \label{la:tree-decomposition-subdivided-modified-grid}
 Let $n \geq 2$.
 Then there is a tree decomposition $(T,\beta)$ of $G_{n,n}^{*}$ of width $n+1$ such that
 \begin{enumerate}
  \item $|\beta(t) \cap V(G_{n,n}')| \leq 1$ for every $t \in V(T)$, and
  \item if $|\beta(t) \cap V(G_{n,n}')| = 1$, then there exists a $v \in V(G_{n,n}')$ such that $\beta(t) = N_{G_{n,n}^{*}}[v]$.
   In this case, $t$ is a leaf of $T$, and $\beta(s) \cap V(G_{n,n}) = \emptyset$ for the unique $s \in V(T)$ with $st \in E(T)$.
 \end{enumerate}
\end{lemma}

\begin{proof}
 We only describe how to modify the decomposition from Lemma \ref{la:tree-decomposition-subdivided-grid}.
 We replace the nodes $t_{i,j}^{A}$ and $t_{i,j}^{D}$ with five nodes $t_{i,j,1}^{A}$, $t_{i,j,23}^{A}$, $t_{i,j,4}^{A}$, $t_{i,j,1}^{D}$, $t_{i,j,4}^{D}$.
 We define
  \begin{align*}
   \beta(t_{i,j,1}^A) &\coloneqq \big(\beta(t_{i,j}^{A}) \setminus V_{i,j}\big) \cup \Big\{\big((i,j),(i-1,j)\big),\big((i,j),(i,j-1)\big),(i,j,2)\Big\},
 \\\beta(t_{i,j,4}^A) &\coloneqq \big(\beta(t_{i,j}^{A}) \setminus V_{i,j}\big) \cup \Big\{\big((i,j),(i+1,j)\big),\big((i,j),(i,j+1)\big),(i,j,3)\Big\},
 \\\beta(t_{i,j,23}^A) &\coloneqq \big(\beta(t_{i,j}^{A}) \setminus V_{i,j}\big) \cup \Big\{(i,j,2),(i,j,3)\Big\},
 \\\beta(t_{i,j,1}^D) &\coloneqq \Big\{(i,j,1),\big((i,j),(i-1,j)\big),\big((i,j),(i,j-1)\big),(i,j,2)\Big\},
 \\\beta(t_{i,j,4}^D) &\coloneqq \Big\{(i,j,4),\big((i,j),(i+1,j)\big),\big((i,j),(i,j+1)\big),(i,j,3)\Big\}.
 \end{align*}
 Also, we replace the edges $t_{i,j}^{A}t_{i,j}^{D}$, $t_{i,j}^{A}t_{i,j}^{B}$, $t_{i,j}^{A}t_{i-1,j}^{B}$ with the edges
 \begin{itemize}
  \item $t_{i,j,1}^{A}t_{i,j,1}^{D}$,
  \item $t_{i,j,2}^{A}t_{i,j,4}^{D}$,
  \item $t_{i,j,1}^{A}t_{i,j,23}^{A}$,
  \item $t_{i,j,23}^{A}t_{i,j,4}^{A}$,
  \item $t_{i,j,1}^{A}t_{i-1,j}^{B}$, and
  \item $t_{i,j,4}^{A}t_{i,j}^{B}$.\qedhere
 \end{itemize}
\end{proof}

\begin{lemma}
 \label{la:tree-width-cfi-grid-modified}
 Let $n \geq 2$. Then $\tw(\CFI(G_{n,n}')) \leq 2n + 3$ and $\tw(\twCFI(G_{n,n}')) \leq 2n + 3$.
\end{lemma}

The proof is literally the same as for Lemma \ref{la:tree-width-cfi-grid} using Lemma \ref{la:tree-decomposition-subdivided-modified-grid} instead of Lemma \ref{la:tree-decomposition-subdivided-grid}.

\begin{theorem}
 \label{thm:wl-treewidth-lower-bound}
 For every $k \geq 2$, there are non-isomorphic graphs $G_k$ and $H_k$ of treewidth at most $2k+5$ such that $G_k \simeq_k H_k$.
\end{theorem}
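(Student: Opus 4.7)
The plan is to combine the CFI construction on a grid of the right size with the three tools developed in the subsection: the Cai-Fürer-Immerman nonisomorphism lemma, Theorem~\ref{thm:cfi-equivalence-tree-width} (Dawar-Richerby), and Lemma~\ref{la:tree-width-cfi-grid}. Concretely, I would set $G_k \coloneqq \CFI(G_{k+1,k+1})$ and $H_k \coloneqq \twCFI(G_{k+1,k+1})$, where $G_{k+1,k+1}$ is the $(k+1)\times(k+1)$-grid, and verify the three required properties in turn.

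First, $G_k$ and $H_k$ are non-isomorphic by the CFI lemma stated right before Theorem~\ref{thm:cfi-equivalence-tree-width}: since $\twCFI(G_{k+1,k+1}) = \CFI_{\{e\}}(G_{k+1,k+1})$ for a single edge $e$, the parities $|\emptyset| = 0$ and $|\{e\}| = 1$ differ. Second, the $n \times n$ grid has treewidth exactly $n$ and minimum degree $2$; hence $G_{k+1,k+1}$ is a connected graph satisfying the hypotheses of Theorem~\ref{thm:cfi-equivalence-tree-width} for the chosen $k$ (its treewidth is $k+1 \geq k+1$). Applying that theorem yields $G_k \simeq_k H_k$.

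Third, the treewidth bound follows by instantiating Lemma~\ref{la:tree-width-cfi-grid} with $n = k+1$: it gives $\tw(\CFI(G_{k+1,k+1})) \leq 2(k+1)+5 = 2k+7$ and likewise for $\twCFI(G_{k+1,k+1})$. Assembling these three facts proves the theorem.

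There is essentially no hard step left — all the heavy lifting was done in Lemmas~\ref{la:tree-decomposition-subdivided-grid} and~\ref{la:tree-width-cfi-grid} and in the cited results of Cai-Fürer-Immerman and Dawar-Richerby. The only point to double-check is that the grid $G_{k+1,k+1}$ genuinely has treewidth at least $k+1$ (a classical fact) and minimum degree $2$ (immediate once $k \geq 2$, since then the grid has side length at least $3$ and corners have degree $2$). Thus the proof will amount to one short paragraph stitching together the three ingredients.
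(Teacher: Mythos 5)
Your proposal is correct and is exactly the paper's proof: the authors also set $G_k = \CFI(G_{k+1,k+1})$ and $H_k = \twCFI(G_{k+1,k+1})$ and cite Theorem~\ref{thm:cfi-equivalence-tree-width} and Lemma~\ref{la:tree-width-cfi-grid}. Your explicit verification of the hypotheses (grid treewidth $k+1$, minimum degree $2$, and the parity argument for non-isomorphism) is left implicit in the paper but is accurate.
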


\begin{proof}
 Let $G_k \coloneqq \CFI(G_{k+1,k+1}')$ and $H_k \coloneqq \twCFI(G'_{k+1,k+1})$. 
 Then the statement follows from Theorem \ref{thm:cfi-equivalence-tree-width} and Lemma \ref{la:tree-width-cfi-grid-modified}.
\end{proof}

\begin{remark}
 The graphs $G_k$ and $H_k$ constructed in Theorem \ref{thm:wl-treewidth-lower-bound} are vertex-colored.
 It is possible to turn them into uncolored graphs by using simple gadgets to encode the vertex colors (see, e.g., \cite{BoothC79}).
 We remark that these gadgets can be constructed in such a way that the treewidth of $G_k$ and $H_k$ as well as indistinguishability by $k$-WL is preserved.
 Hence, Theorem \ref{thm:wl-treewidth-lower-bound} remains valid even when we restrict ourselves to uncolored graphs.
\end{remark}

As a corollary from Theorems \ref{thm:wl-treewidth-upper-bound} and \ref{thm:wl-treewidth-lower-bound}, we obtain Theorem \ref{thm:wl-treewidth-bounds}.

\section{Conclusion}
\label{sec:conclusion}

We have proved that for $k \geq 2$, the algorithm $k$-WL implicitly computes the decomposition of its input graph into its triconnected components.
As a by-product, we found that every connected constituent graph of an association scheme is either a cycle or $3$-connected.

We have applied this insight to improve on the upper bound on a parametrization of the WL dimension of a graph by its treewidth 
and have also provided a lower bound that is asymptotically only a factor of $2$ away from the upper bound.
Naturally, it would be nice to further close the gap between the upper and the lower bound.
We remark that some improvements on the lower bounds are possible for small values of $k$ as indicated in Table \ref{tab:wl-treewidth-bounds}.
For $k \geq 2$, the WL dimension of the class $\mathcal{T}_k$ is at least $2$, since $1$-WL does not distinguish between the cycle $C_6$ of length $6$ and the disjoint union of two triangles $2K_3$.
Observe that $\tw(C_6) = \tw(2K_3) = 2$.
Applying the Cai-Fürer-Immerman construction to the complete graph $K_4$ and calculating the treewidth of the resulting graph by computer \cite{Tamaki19}, we also found that $2$-WL does not identify every graph of treewidth $8$.

\begin{table}[htbp]
 \caption{Upper and lower bounds on the WL dimension of graphs of treewidth $k$.
 The bounds marked with an asterix are not derived from Theorem \ref{thm:wl-treewidth-bounds}.}
 \label{tab:wl-treewidth-bounds}
 \begin{center}
 \begin{tabular}{|c|c|c||c|c|c|}
   \hline
   $k$ & lower bound & upper bound & $k$ & lower bound & upper bound\\
   \hline
   $1$ & $1^*$ & $\hspace{1.6mm}1^*$ & $7$ & $2$ & $7$\\
   $2$ & $2^*$ & $2$ & $8$ & $\hspace{1.6mm}3^*$ & $8$\\
   $3$ & $2^*$ & $3$ & $9$ & $3$ & $9$\\
   $4$ & $2^*$ & $4$ & $10$ & $3$ & $10$\\
   $5$ & $2^*$ & $5$ & $11$ & $4$ & $11$\\
   $6$ & $2^*$ & $6$ & $12$ & $4$ & $12$\\
   \hline
 \end{tabular}
 \end{center}
\end{table}

It is an interesting open problem to determine $\dimWL(\mathcal{T}_3)$.
A potential approach for improving the upper bound could be to argue that $2$-WL implicitly computes a tree decomposition with certain nice properties.
Observe that, by Theorem \ref{thm:reduction-to-3-connected}, it essentially suffices to solve the problem for $3$-connected graphs. 

Indeed, more generally, a natural use case of our results may be determining the WL dimension of certain graph classes that satisfy the requirements of Theorem \ref{thm:reduction-to-3-connected}.
For example, we conjecture that $2$-WL identifies every planar graph.
Using the results of this paper, it essentially suffices to show this for $3$-connected planar graphs.

\bibliography{literature}

\appendix

\section{Proof of Theorem \ref{thm:implicitly-computes-decomposition}}

In this section, we prove Theorem \ref{thm:implicitly-computes-decomposition}. The theorem states that for every graph $G$, there is a rooted tree decomposition of $G$ with the following three properties.
\begin{enumerate}
 \item $|\beta(s) \cap \beta(t)| \leq 2$ for all $(s,t) \in E(T)$.
 \item For every $t \in V(T)$, the torso $G[[\beta(t)]]$ is $3$-connected or a cycle.
 \item $2$-WL implicitly computes $(T,\beta)$.
\end{enumerate}

In fact, we show that the decomposition $(T,\beta)$ and the color sets to prove that $2$-WL implicitly computes $(T,\beta)$ can be chosen such that $\mathcal{C}_{\sf bags}$ and $\mathcal{C}_{\sf edges}$ only depend on the equivalence class of the graph $G$ with respect to $2$-WL.
More precisely, for a second graph~$G'$ which $2$-WL does not distinguish from $G$, we show that we can take the same sets $\mathcal{C}_{\sf bags}$ and $\mathcal{C}_{\sf edges}$ as for $G$. 

We fall back on the concepts and notation described in \cite[Section 3]{kieponschwei19}, which we recapitulate for completeness.

For a graph~$G$, define the set~$P(G)$ to consist of the pairs~$(S,K)$, where~$S$ is a separator of~$G$ of minimum cardinality and~$K$ is the vertex set of a connected component of~$G-S$. 
Let~$P_0(G)$ be the set of minimal elements of~$P(G)$ with respect to the following (well-defined) partial order $\leq$ on~$P(G)$:
\[(S,K) \leq (S',K') \iff K \subseteq K'.\]

The following lemma summarizes useful properties of $P(G)$ when certain bounds on the connectivity and the vertex degrees hold in $G$.

\begin{lemma}[{\cite[Lemma 3]{kieponschwei19}}]
 \label{lem:minimum:degree:implies:disjoin:min:seps}
 For $k \in \mathbb{N}$, let~$G$ be a graph that is not~$(k+1)$-connected and has minimum degree at least~$\frac{3k-1}{2}$. 
 \begin{enumerate}
  \item \label{item:one:min:comp:trivial:int:contained} If~$(S,K)\in P_0(G)$ and~$(S',K')\in P(G)$ are distinct, then~$K\subseteq K'$ or~$(K\cup S)\cap (K'\cup S') = S\cap S'$.
  \item \label{item:min:comp:trivial:int} If~$(S,K),(S',K') \in P_0(G)$ are distinct, then~$(K\cup S)\cap (K'\cup S') = S\cap S'$.
  \item \label{item:char:min:comp} A pair~$(S,K)\in P(G)$ is contained in~$P_0(G)$ if and only if there is no separator~$S'$ of~$G$ of minimum cardinality with~$S'\cap K \neq \emptyset$.
 \end{enumerate}
\end{lemma}

We intend to prove Theorem \ref{thm:implicitly-computes-decomposition} via an induction on the order of~$G$. To this end, we need a method to remove the vertices appearing in the second component of pairs in~$P_0(G)$ from the graph in an isomorphism-invariant fashion. 

For a set~$S \subseteq V(G)$, define~$G^S_{\top}$ as the graph on the vertex set
\[V'= S \cup \bigcup\limits_{(S,K)\in P_0(G)} K\]
with edge set~$E' = E(G[V']) \cup \{\{s,s'\}\mid s,s'\in S \text{ and }s\neq s'\}$.

These concepts also play a crucial role in the proof of Theorem \ref{thm:reduction-to-3-connected}.
In order to allow for an induction, we use the initial coloring $\lambda$ of $G$ to define also colorings of the graphs $G^S_\top$.

For an arc coloring~$\lambda$ of~$G$, we obtain an arc coloring~$\lambda^S_\top$ for~$G^S_{\top}$ as follows.
\[\lambda^S_\top(v_1,v_2)  \coloneqq
\begin{cases}
(0,0)& \text{if } \{v_1, v_2\}\subseteq  S \text{ and } \{v_1, v_2\} \notin E(G)\\ 
(\lambda(v_1,v_2),1)& \text{if } \{v_1, v_2\}\subseteq  S \text{ and } \{v_1, v_2\} \in E(G) \\ 
(\lambda(v_1,v_2),2) & \text{otherwise.} 
\end{cases}
\]

If~$G$ is a vertex-colored graph with vertex coloring~$\lambda'$, in order to obtain a coloring of~$G^S_{\top}$, we define an arc coloring~$\lambda$ as~$\lambda (v_1,v_2) \coloneqq \lambda'(v_1)$ and let~$\lambda^S_\top$ be as above.

For~$(S,K) \in P_0(G)$, we also let~$G^{(S,K)}_{\top}\coloneqq G^S_{\top}[S \cup K]$. We let~$\lambda^{(S,K)}_\top$ be the restriction of~$\lambda^S_\top$ to pairs~$(v_1,v_2)$ for which~$\{v_1,v_2\}\subseteq S\cup K$.

Finally, we define~$G_{\bot}$ to be the graph with vertex set~\[V_{\bot} \coloneqq V(G) \setminus \bigg(\bigcup_{(S,K)\in P_0(G)}K\bigg)\]
and edge set 
\[E_{\bot} \coloneqq E(G[V_{\bot}]) \, \cup \, \big\{ \{s_1,s_2\} \mid \exists (S,K)\in P_0(G) \text{ s.t.\ } s_1,s_2\in S, s_1\neq s_2\big\}.\]

Given an arc coloring~$\lambda$ of a graph~$G$ that is not $3$-connected, we define an arc coloring~$\lambda_\bot$ of~$G_{\bot}$ as follows. Assume that~$v_1, v_2 \in V(G_\bot)$ (possibly with~$v_1 = v_2$). Let~$S \coloneqq \{v_1,v_2\}$. If~$S$ is a~$2$-separator of~$G$, but~$S \notin E(G)$, we set
\begin{align*}
\lambda_\bot(v_1,v_2) &\coloneqq 
\left(0, \ISOTYPE\left(\left(G^{S}_{\top},\lambda^{S}_\top\right)_{(v_1,v_2)}\right)\right) .
\intertext{Furthermore, if~$v_1 = v_2$ or if~$\{v_1,v_2\} \in E(G)$, we set}
\lambda_\bot(v_1,v_2) &\coloneqq \left(\lambda(v_1,v_2), \ISOTYPE\left(\left(G^S_{\top},\lambda^S_\top\right)_{(v_1,v_2)}\right)\right), 
\end{align*}
where~$ \ISOTYPE((G^S_{\top},\lambda^S_\top)_{(v_1,v_2)})$ is the isomorphism class of the colored graph $(G^S_{\top},\lambda^S_\top)_{(v_1,v_2)}$ which is obtained from~$(G^S_\top,\lambda^S_\top)$ by individualizing~$v_1$ and~$v_2$. Therefore, the graphs $(G^{\{v_1,v_2\}}_{\top},\lambda^{\{v_1,v_2\}}_\top)_{(v_1,v_2)}$ and~$(G'^{\{v'_1,v'_2\}}_{\top},\lambda'^{\{v'_1,v'_2\}}_\top)_{(v'_1,v'_2)}$ have the same isomorphism type if and only if there is an isomorphism from the first to the second graph mapping~$v_1$ to~$v'_1$ and~$v_2$ to~$v'_2$. 

The following lemma guarantees that the designed coloring of the $\bot$-graph maintains all isomorphism-invariant information about the removed connected components.

\begin{lemma}[{\cite[Lemma 4]{kieponschwei19}}]
Let $k \in \{1,2\}$. If~$G$ and~$G'$ are~$k$-connected graphs that are not~$(k+1)$-connected and that are of minimum degree at least~$\frac{3k-1}{2}$ with arc colorings~$\lambda$ and~$\lambda'$, respectively, then
it holds that $(G, \lambda) \cong (G', \lambda')$ if and only if $(G_{\bot}, \lambda_\bot) \cong (G'_{\bot},\lambda'_\bot)$. 
\end{lemma} 

We define the rooted tree decomposition from Theorem \ref{thm:implicitly-computes-decomposition} by induction on the order of $G$.
If $G$ is $3$-connected or a cycle (which includes the case $|V(G)| = 1$), the graph $T$ consists of a single node $t$ and we let $\beta(t) \coloneqq V(G)$.
Clearly, $2$-WL implicitly computes $(T,\beta)$ and the sets $\mathcal{C}_{\sf bags}$ and $\mathcal{C}_{\sf edges}$ are canonical for every $2$-WL-equivalence class of graphs ($\mathcal{C}_{\sf bags}$ consists of all colors in the image of $\chi_G$ and $\mathcal{C}_{\sf edges}$ is empty).

If $G$ is not $3$-connected, $G$ contains at least one separator of size at most $2$. To construct~$(T,\beta)$ and show that $2$-WL implicitly computes it, we distinguish cases according to the vertex degrees and connectivity in $G$.

First suppose that $G$ is not $2$-connected, i.e.\ $G$ contains a cut vertex. Since $G_\bot$ is smaller than $G$, there is a rooted tree decomposition $(T_\bot, \beta_\bot)$ for it with the desired properties by the induction hypothesis. To construct $(T, \beta)$ from $(T_\bot, \beta_\bot)$, consider for an $(\{s\},K) \in P_0(G)$ the graph $G^{(\{s\},K)}_\top$. Again, by the induction hypothesis, there is a rooted tree decomposition $(T^{(\{s\},K)}_\top, \beta^{(\{s\},K)}_\top)$ for $G^{(\{s\},K)}_\top$ with the desired properties. 

In $T_\top^{(\{s\},K)}$, pick as $t_s \coloneqq t^{(\{s\},K)}$ the node $t$ with $s \in \beta^{(\{s\},K)}_\top(t)$ that is closest to the root and orient all edges away from $t_s$ so as to change the root of the tree. Call the new rooted tree $\widetilde{T}^{(\{s\},K)}_\top$. We let 
\[V(T) \coloneqq V(T_\bot) \cup \bigcup_{(\{s\},K) \in P_0(G)} V(\widetilde{T}^{(\{s\},K)}_\top)\]
and set 
\begin{align*}
  E(T) \coloneqq E(T_\bot) &\cup \bigcup_{(\{s\},K) \in P_0(G)} E(\widetilde{T}^{(\{s\},K)}_\top) \\&\cup \{(t,t^{(\{s\},K)}) \mid (\{s\},K) \in P_0(G) \text{ and $t$ is the unique node of $V(T_\bot)$}
  \\&\text{\phantom{$\cup \{(t,t^{(\{s\},K)}) \mid $ } that is closest to the root and satisfies $s \in \beta_\bot(t)$}\}.
\end{align*}
Finally, set
\[\beta(t) \coloneqq \begin{cases} 
\beta_\bot(t) &\text{ if $t \in V(T_\bot)$} \\ \beta_\top^{(\{s\},K)}(t) &\text{ if $t \in V(G^{(\{s\},K)}_\top)$}.
\end{cases}\]
Clearly, the rooted tree decomposition $(T,\beta)$ satisfies the first two conditions from Theorem~\ref{thm:implicitly-computes-decomposition}. We now show that $2$-WL implicitly computes it.

By \cite[Lemma 8]{kieponschwei19}, for all $u \in V(G_\bot)$ and $v \in V(G) \setminus V(G_\bot)$, the inequality~$\chi_G(u,u) \neq \chi_G(v,v)$ holds. We conclude that $\chi_G|_{(V(G_\bot))^2}$ is a stable coloring of $G_\bot$ (with respect to $2$-WL). Hence, by induction, since $G_\bot$ is smaller than $G$, there are sets $\mathcal{C}^\bot_{\sf bags}, \mathcal{C}^\bot_{\sf edges} \subseteq \{\chi_G(v,w) \mid v,w \in V(G)\}$ such that
\begin{align*}
\chi_G(v,w) \in \mathcal{C}^\bot_{\sf bags} \;\;&\Leftrightarrow\;\; w \in \beta(t_v)
\intertext{and}
\chi_G(v,w) \in \mathcal{C}^\bot_{\sf edges} \;\;&\Leftrightarrow\;\; (t_v,t_w) \in E(T)
 \end{align*}
 holds for all $v,w \in V(G_\bot) \subseteq V(G)$. 

Also, $2$-WL knows the set of cut vertices appearing in the first components of elements in~$P_0(G)$ (because they form the neighborhood of $V(G) \setminus V(G_\bot)$ in $G$).
Therefore, the color set computed by $2$-WL for vertices contained in elements of $P_0(G)$ is disjoint from the set of colors for other vertices.
Theorem 6 in \cite{kieponschwei19} says that pairs of vertices that share a $2$-connected component obtain different colors with respect to $2$-WL than other pairs of vertices.
Thus, we can use the theorem to deduce that for every $(\{s\},K) \in P_0(G)$, the coloring $\chi_G|_{(V(G^{(\{s\},K)}_\top))^2}$ is a stable coloring of $G^{(\{s\},K)}_\top$ (with respect to $2$-WL).
Hence, using the induction hypothesis, for every $(\{s\},K) \in P_0(G)$, there are sets $\mathcal{C}^{(\{s\},K)}_{\sf bags}, \mathcal{C}^{(\{s\},K)}_{\sf edges} \subseteq \{\chi_G(v,w) \mid v,w \in V(G)\}$ such that
\begin{align*}
\chi_G(v,w) \in \mathcal{C}^{(\{s\},K)}_{\sf bags} \;\;&\Leftrightarrow\;\; w \in \beta_\top^{(\{s\},K)}(t_v)
\intertext{and}
\chi_G(v,w) \in \mathcal{C}^{(\{s\},K)}_{\sf edges} \;\;&\Leftrightarrow\;\; (t_v,t_w) \in E(T_\top^{(\{s\},K)})
\end{align*}
hold for all $v,w \in V(G^{(\{s\},K)}_\top) \subseteq V(G)$.

To obtain $\widetilde{T}_\top^{(\{s\},K)}$, the orientation of an edge $(t_v,t_w)$ in $T_\top^{(\{s\},K)}$ must be inverted if there is a path from $t_w$ to $t_s$ in $T_\top^{(\{s\},K)}$, which is a property that $2$-WL knows in $G$. 
We conclude that, again by the induction hypothesis, there are $\widetilde{\mathcal{C}}^{(\{s\},K)}_{\sf bags}, \widetilde{\mathcal{C}}^{(\{s\},K)}_{\sf edges} \subseteq \{\chi_G(v,w) \mid v,w \in V(G)\}$ such that
\begin{align*}
\chi_G(v,w) \in \widetilde{\mathcal{C}}^{(\{s\},K)}_{\sf bags} \;\; &\Leftrightarrow\;\; w \in \beta(t_v)
\intertext{and}
\chi_G(v,w) \in \widetilde{\mathcal{C}}^{(\{s\},K)}_{\sf edges} \;\; &\Leftrightarrow\;\; (t_v,t_w) \in E(T)
\end{align*}
hold for all $v,w \in V(G^{(\{s\},K)}_\top) \subseteq V(G)$. 

Furthermore, we know also by the induction hypothesis that these color sets can be chosen consistently among the different $(\{s\},K)$.
That is, for $(\{s\},K) \neq (\{s'\},K')$ with $(\{s\},K),(\{s'\},K') \in P_0(G)$, either the corresponding sets $\mathcal{C}_{\sf bags}$ are equal and the same holds for the sets $\mathcal{C}_{\sf edges}$, or 
\[\big\{\chi_G(v,w) \ \big\vert \ v,w \in K \cup \{s\}, v \neq w\big\} \cap \big\{\chi_G(v',w') \ \big\vert \ v',w' \in K' \cup \{s'\}, v' \neq w'\big\} = \emptyset.\] 

Altogether, we can let 
\[\mathcal{C}_{\sf bags} \coloneqq \mathcal{C}^\bot_{\sf bags} \cup \bigcup_{(\{s\},K) \in P_0(G)} \widetilde{\mathcal{C}}^{(\{s\},K)}_{\sf bags}.\]

For the arc colors, we can also extend $\mathcal{C}^\bot_{\sf edges}$ by the sets $\widetilde{\mathcal{C}}^{(\{s\},K)}_{\sf edges}$.
Then it remains to consider the edges in $T$ of the form $(t,t^{(\{s\},K)})$. For $v \in V(G_\bot)$ and $w \in V(G) \setminus V(G_\bot)$, it never holds that $(t_w,t_v) \in E(T)$. Moreover,
\begin{align*}
 (t_v,t_w) \in E(T) \;\;\Leftrightarrow\;\; \exists u \big(t_u = t_v \wedge u \in \beta(t_w) \big).
\end{align*}
So we can express the existence of an edge between $t_v$ and $t_w$ by a formula with $3$ variables where the satisfaction of each atomic formula can be checked by a membership query for a set of $2$-WL color classes. 
Hence, using the well-known equivalence between $2$-WL and the $3$-variable fragment of first-order logic with counting quantifiers (see, e.g., \cite{CaiFI92,Grohe17,IL90}),
we conclude the existence of a set of colors $\mathcal{C}_{\sf edges}$ with the desired properties. Note that $\mathcal{C}_{\sf bags}$ and $\mathcal{C}_{\sf edges}$ are solely determined by the equivalence class of the graph $G$ with respect to~$2$-WL. 
This completes the analysis for the case that $G$ is not $2$-connected.

\medskip

Now assume that $G$ is $2$-connected, but not $3$-connected, that is, every minimal separator of $G$ has size $2$.
Similar to the analysis in \cite{kieponschwei19}, we need to distinguish two further cases depending on the minimum degree in $G$.
More precisely, in order to be able to apply Lemma~\ref{lem:minimum:degree:implies:disjoin:min:seps}, we require that $G$ has minimum degree $3$.
Hence, we first cover the case that $G$ has minimum degree $2$. (Note that $G$ does not contain a vertex of degree $1$ since $G$ is $2$-connected.)

Let $D \coloneqq \{v \in V(G) \mid \deg(v) = 2\}$ and, overwriting $G_\bot$ for this case, define~$G_\bot \coloneqq G[[V(G) - D]]$ to be the torso on the remaining vertices.
By the induction hypothesis, there is a tree decomposition $(T_\bot,\beta_\bot)$ for the graph $G_\bot$ with the desired properties.
We construct~$(T,\beta)$ from $(T_\bot,\beta_\bot)$ as follows.
For each connected component $C$ of $G[D]$, we introduce a new node~$t_C$ with $\beta(t_C) \coloneqq C \cup N_G(C)$ and insert the edge $(t'_C,t_C)$, where~$t_C' \in V(T_\bot)$ is the unique node that is closest to the root and satisfies $N_G(C) \subseteq \beta_\bot(t_C')$. Observe that $|N_G(C)| = 2$ and the corresponding pair of vertices in $N_G(C)$ forms an edge in $G_\bot$, which guarantees the existence of the node $t_C'$.

Clearly, the decomposition $(T,\beta)$ satisfies the first two conditions of Theorem \ref{thm:implicitly-computes-decomposition}.
Hence, it remains to argue that $2$-WL implicitly computes $(T,\beta)$.
First observe that $\chi_G(v,v) \neq \chi_G(u,u)$ holds for all $v \in D$ and $u \in V(G_\bot)$.
Since $2$-WL knows which edges are inserted to obtain the torso $G[[V(G) - D]]$, the coloring $\chi_G|_{(V(G_\bot))^2}$ is a stable coloring of $G_\bot$ with respect to~$2$-WL. Hence, by the induction hypothesis, there are $\mathcal{C}^\bot_{\sf bags}, \mathcal{C}^\bot_{\sf edges} \subseteq \{\chi_G(v,w) \mid v,w \in V(G)\}$ such that
\begin{align*}
\chi_G(v,w) \in \mathcal{C}^\bot_{\sf bags} \;\;&\Leftrightarrow\;\; w \in \beta(t_v)
\intertext{and}
\chi_G(v,w) \in \mathcal{C}^\bot_{\sf edges} \;\;&\Leftrightarrow\;\; (t_v,t_w) \in E(T)
\intertext{hold for all $v,w \in V(G_\bot)$.
Moreover, there is a set $\mathcal{C}^\top_{\sf bags} \subseteq \{\chi_G(v,w) \mid v,w \in V(G)\}$ such that}
\chi_G(v,w) \in \mathcal{C}^\top_{\sf bags} \;\;&\Leftrightarrow\;\; v \in D \wedge w \in C_v \cup N_G(C_v)
\end{align*}
for all $v,w \in V(G)$, where $C_v$ denotes the connected component of $G[D]$ that contains $v$ (assuming $v \in D$).
Hence, setting $\mathcal{C}_{\sf bags} \coloneqq \mathcal{C}^\bot_{\sf bags} \cup \mathcal{C}^\top_{\sf bags}$ gives the desired set of colors for the bags of $(T,\beta)$.

So it remains to express the edges $(t'_C,t_C)$ for connected components $C$ of $G[D]$ via membership in a set of color classes with respect to $2$-WL.
Let $v \in V(G_\bot) = V(G) \setminus D$ and~$w \in D$.
Then
\begin{align*}
 (t_v,t_w) \in E(T) \;\;\Leftrightarrow\;\; \exists u \Big(t_u = t_v \wedge \exists v \big(\{u,v\} = N_G(C_w) \wedge v \in \beta(t_u)\big)\Big).
\end{align*}
As in the first case, we conclude the existence of a set of colors $\mathcal{C}_{\sf edges}$ with the desired properties. Note that the sets $\mathcal{C}_{\sf bags}$ and $\mathcal{C}_{\sf edges}$ are solely determined by the equivalence class of the graph $G$ with respect to $2$-WL. 
This completes the case that $G$ is $2$-connected and has minimum degree $2$.

\medskip

Now suppose that $G$ is $2$-connected, but not $3$-connected, and has minimum degree at least $3$. We first define the rooted tree decomposition $(T,\beta)$ described in Theorem \ref{thm:implicitly-computes-decomposition}. Since~$G_\bot$ is smaller than $G$, there is a rooted tree decomposition $(T_\bot, \beta_\bot)$ for it with the desired properties by the induction hypothesis. To construct $(T, \beta)$ from $(T_\bot, \beta_\bot)$, we introduce for each $(S,K) \in P_0(G)$ a node $t_{(S,K)}$. Let $M$ be the collection of all these $t_{(S,K)}$.
Now we define $V(T) \coloneqq V(T_\bot) \uplus M$ and
\begin{align*}
  E(T) \coloneqq E(T_\bot) &\cup \big\{(t,t_{(S,K)}) \ \big\vert \ t_{(S,K)} \in M \text{ and $t$ is the unique node of $V(T_\bot)$}
  \\&\text{\phantom{$\cup \{(t,t^{(\{s\},K)}) \mid$}that is closest to the root and satisfies $S \subseteq \beta_\bot(t)$}\big\}
\end{align*}
and set
\[\beta(t) \coloneqq \begin{cases} 
\beta_\bot(t) &\text{ if $t \in V(T_\bot)$} \\ S \cup K &\text{ if $t = t_{(S,K)}$ for some $t_{(S,K)} \in M$}.
\end{cases}\]

It is straightforward to see that this decomposition satisfies the first two conditions from Theorem \ref{thm:implicitly-computes-decomposition}.
To show that $2$-WL implicitly computes it, we again need to find sets $\mathcal{C}_{\sf bags}, \mathcal{C}_{\sf edges} \subseteq \{\chi_G(v,w) \mid v,w \in V(G)\}$ such that
\begin{align*}
\chi_G(v,w) \in \mathcal{C}_{\sf bags} \;\;&\Leftrightarrow\;\; w \in \beta(t_v)
\intertext{and}
\chi_G(v,w) \in \mathcal{C}_{\sf edges} \;\;&\Leftrightarrow\;\; (t_v,t_w) \in E(T)
\end{align*}
hold for all $v,w \in V(G)$.

We first show statements analogous to Theorem 6 and Corollary 7 from \cite{kieponschwei19} for $2$-connected graphs.
For $k \geq 2$, $\ell \leq k$, and $v_1,\dots,v_\ell \in V(G)$, we use the shorthand $\chi_{G,k}(v_1,\dots,v_\ell)$ for the color $\chi_{G,k}(v_1,\dots,v_\ell,v_\ell,\dots,v_\ell)$.

\begin{lemma}
 \label{lem:edges_conncomp}
 Assume~$k \geq 2$ and let~$G$ and $G'$ be~$2$-connected graphs of minimum degree at least~$3$ that are not~$3$-connected.
 Suppose $(S,K) \in P_0(G)$ and $(S',K') \in P_0(G')$ and that~$u,v \in K$ and $u' \in K', v' \notin K'$.
 Then~$\chi_{G,k}(u,v) \neq \chi_{G',k}(u',v')$.
\end{lemma}

\begin{proof}
 By Corollary \ref{cor:separators}, we can assume that vertices contained in a $2$-separator have different $\chi_k$-colors than vertices which are not contained in any $2$-separator. Thus, vertex pairs connected via a path which does not contain any vertices of $2$-separators obtain different colors than vertex pairs for which every connecting path contains a vertex of a $2$-separator.
 Since $K$ is the vertex set of a connected component of $G - S$, there is a path in $G[K]$ from $u$ to $v$. Moreover, by Part \ref{item:char:min:comp} of Lemma \ref{lem:minimum:degree:implies:disjoin:min:seps}, no vertex on the path is contained in a $2$-separator of $G$. On the other hand, in $G'$, all paths from $u'$ to $v'$ must include a vertex of~$S'$, since~$v' \notin K'$. We conclude that~$\chi_{G,k}(u,v) \neq \chi_{G',k}(u',v')$.
\end{proof}

\begin{lemma}
 \label{lem:bot:vertices:are:different:3con}
 Assume~$k \geq 2$ and let~$G$ and $G'$ be~$2$-connected graphs of minimum degree at least~$3$ that are not~$3$-connected. 
 Then for all~$v \in V(G_\bot)$ and~$v' \in V(G') \setminus V(G'_\bot)$, it holds that~$\chi_{G,k}(v,v) \neq \chi_{G',k}(v',v')$.
\end{lemma}

\begin{proof}
 Let $G, G', v, v'$ be as described in the prerequisites of the lemma.
 By Part~\ref{item:char:min:comp} of Lemma~\ref{lem:minimum:degree:implies:disjoin:min:seps}, the vertex $v'$ is not contained in any $2$-separator of $G'$.
 If $v$ is contained in a $2$-separator of $G$, by Corollary \ref{cor:separators}, the algorithm $k$-WL distinguishes $v$ from $v'$.
 Hence, we can assume otherwise.
 
 We show that there are exactly two vertices each of which is contained in a $2$-separator and can be reached from $v'$ via a path that does not use any other vertices contained in $2$-separators, while this is not the case for $v$.
 
 Since $v' \in V(G') \setminus V(G'_\bot)$, there must be a $2$-separator $S'$ and a connected component of $G' - S'$ with vertex set $K'$ with $(S',K') \in P_0(G')$ and $v' \in K'$.
 Furthermore, since $G'$ is connected and $S'$ is a separator of minimal size, there must be a path in $G'[S' \cup K']$ from~$v'$ to each of the two vertices contained in $S'$.
 Each path can be chosen to avoid the other vertex in $S'$: otherwise, there would be a cut vertex in $G'$, contradicting its $2$-connectedness.
 This means that for the two chosen paths, all but the last vertex are contained in $K'$.
 In particular, by Part \ref{item:char:min:comp} of Lemma \ref{lem:minimum:degree:implies:disjoin:min:seps}, none of these inner path vertices is contained in a $2$-separator.
 Note that all vertices reachable from $v'$ in $G' - S'$ are in $K'$ and thus, there are exactly two vertices that each are contained in a $2$-separator and can be reached from $v'$ via a path that does not use any other vertices contained in $2$-separators. 
 
 We now show that there are at least three vertices that are contained in a $2$-separator and can be reached from $v$ via a path that does not use any other vertices contained in $2$-separators.
 Choose $(S,K) \in P(G)$ such that $v \in K$.
 Let $\tilde{s}_1$ be one of the vertices in $S$.
 Then there is a path from $v$ to $\tilde{s}_1$.
 Let $s_1$ be the first vertex on the path that is contained in a $2$-separator.
 Its existence is guaranteed because $\tilde{s}_1$ is a candidate.
 
 Choose $\tilde{s}_2$ such that $\{s_1,\tilde{s}_2\}$ forms a $2$-separator of $G$.
 Since $s_1$ is not a cut vertex, there must be a path from $v$ to $\tilde{s}_2$ that avoids $s_1$.
 Again, let $s_2$ be the first vertex on the path that is contained in a $2$-separator.
 Now, two cases can occur. In the first case, $\{s_1,s_2\}$ is not a $2$-separator. 
 Then there is a vertex $\tilde{s}_3$ such that $\{s_1, \tilde{s}_3\}$ is a $2$-separator and such that there is a path from $v$ to $\tilde{s}_3$ in $G - \{s_1,s_2\}$.
 Choose $s_3$ as the first vertex on the path that is contained in a $2$-separator. 
 
 In the second case, $\{s_1,s_2\}$ is a $2$-separator.
 Let $\tilde{K}$ be the vertex set of the connected component of $G - \{s_1,s_2\}$ that contains $v$.
 Then $(\{s_1,s_2\},\tilde{K}) \notin P_0(G)$, since $v \in V(G_\bot)$.
 Due to Part \ref{item:char:min:comp} of Lemma \ref{lem:minimum:degree:implies:disjoin:min:seps}, there is a $2$-separator $\tilde{S}$ such that $\tilde{S} \cap \tilde{K} \neq \emptyset$.
 Choose $\tilde{s}_3 \in \tilde{S} \cap \tilde{K}$.
 Since $G[\tilde{K}]$ is connected, there is a path in $G[\tilde{K}]$ from $v$ to $\tilde{s}_3$.
 Let $s_3$ be the first vertex on the path contained in a $2$-separator.

 In both cases, we have found three vertices $s_1,s_2,s_3$ that each are contained in a $2$-separator and such that every $s_i$ is reachable from $v$ via a path that avoids all vertices contained in $2$-separators except for $s_i$.
 
 This property distinguishes $v$ and $v'$ and is detectable by $k$-WL.
 As a result, we obtain that~$\chi_{G,k}(v,v) \neq \chi_{G',k}(v',v')$.
\end{proof}

By Lemma \ref{lem:bot:vertices:are:different:3con}, the color set computed by $2$-WL for vertices in $V(G_\bot)$ is disjoint from the set of colors for vertices in $V(G) \setminus V(G_\bot)$. Since $2$-WL also knows which edges are added to obtain the torso $G_\bot = G[[V(G_\bot)]]$, we conclude that $\chi_G|_{(V(G_\bot))^2}$ is a stable coloring of~$G_\bot$ (with respect to $2$-WL). Thus, by the induction hypothesis, since $G_\bot$ is smaller than~$G$, there are sets $\mathcal{C}^\bot_{\sf bags}, \mathcal{C}^\bot_{\sf edges} \subseteq \{\chi_G(v,w) \mid v,w \in V(G)\}$ such that
\begin{align*}
\chi_G(v,w) \in \mathcal{C}^\bot_{\sf bags} \;\;&\Leftrightarrow\;\; w \in \beta(t_v)
\intertext{and}
\chi_G(v,w) \in \mathcal{C}^\bot_{\sf edges} \;\;&\Leftrightarrow\;\; (t_v,t_w) \in E(T)
\intertext{hold for all $v,w \in V(G_\bot) \subseteq V(G)$. Moreover, by Lemmas \ref{lem:edges_conncomp} and \ref{lem:bot:vertices:are:different:3con}, there is a set $\mathcal{C}^\top_{\sf bags} \subseteq \{\chi_G(v,w) \mid v,w \in V(G)\}$ such that}
\chi_G(v,w) \in \mathcal{C}^\top_{\sf bags} \;\;&\Leftrightarrow\;\; w \in\big(V(G) \setminus V(G_\bot)\big) \wedge v \in \big( K_w \cup N_G(K_w)\big)
\end{align*}
holds for all vertices $v,w \in V(G)$, where $K_w$ is chosen such that there are vertices $u',v'$ for which~$(\{u',v'\}, K_w) \in P_0(G)$ and $w \in K_w$. 
Hence, $\mathcal{C}_{\sf bags} \coloneqq \mathcal{C}^\bot_{\sf bags} \cup \mathcal{C}^\top_{\sf bags}$ gives the desired set of colors for the bags of $(T,\beta)$.

It remains for us to express the edge relation in $T$ via colors computed by $2$-WL. By the existence of $\mathcal{C}^\bot_{\sf edges}$ and the definition of the decomposition, we only need to consider the case that $v \in V(G_\bot)$ and $w \in V(G) \setminus V(G_\bot)$.
Then
\begin{align*}
 \hspace{0.25cm}(t_v,t_w) \in E(T) \;\;&\Leftrightarrow\;\; \exists u \Big(t_u = t_v \wedge \exists v \big(\{u,v\} = N_G(K_w) \wedge v \in \beta(t_u)\big)\Big),
\end{align*}
Just as before, we conclude the existence of a set of colors $\mathcal{C}_{\sf edges}$ with the desired properties. Note, again, that the sets $\mathcal{C}_{\sf bags}$ and $\mathcal{C}_{\sf edges}$ are solely determined by the equivalence class of the graph $G$ with respect to $2$-WL. 
This completes the case that $G$ is $2$-connected, but not $3$-connected, and has minimum degree at least $3$. Thus, the proof of Theorem \ref{thm:implicitly-computes-decomposition} is complete.

\section{Proof of Theorem \ref{thm:reduction-to-3-connected}}

We present the proof of Theorem \ref{thm:reduction-to-3-connected}, which says that, for $k \geq 2$, if $k$-WL determines orbits on the subclass of all arc-colored $3$-connected graphs in a given minor-closed graph class $\mathcal{G}$, then $k$-WL distinguishes all non-isomorphic graphs in $\mathcal{G}$.
To this end, we first show that the coloring computed by $2$-WL is at least as fine as the $\bot$-coloring.

Recall that, for $k \geq 2$, $\ell \leq k$, and vertices $v_1,\dots,v_\ell \in V(G)$, we set $\chi_{G,k}(v_1,\dots,v_\ell) \coloneqq \chi_{G,k}(v_1,\dots,v_\ell,v_\ell,\dots,v_\ell)$.

\begin{lemma}\label{lem:one:component:hanging:sep:pairs}
Let~$\mathcal{G}$ be a minor-closed graph class. For~$k \geq 2$, suppose that $k$-WL determines orbits on the class of all arc-colored~$3$-connected graphs in~$\mathcal{G}$.
Suppose that~$(G, \lambda)$ and~$(G', \lambda')$  with $G, G' \in \mathcal{G}$ are arc-colored~$2$-connected graphs of minimum degree at least~$3$ that are not $3$-connected. Assume that~$(\{s_1,s_2\},K) \in P_0(G)$ and~$(\{s'_1, s'_2\},K') \in P_0(G')$.

Suppose that no isomorphism from the graph~$\big(G_{\top}^{(\{s_1, s_2\},K)}, \lambda_\top^{(\{s_1, s_2\},K)}\big)$ to the graph $\big(G_{\top}'^{(\{s'_1, s'_2\},K')},\lambda_\top'^{(\{s'_1, s'_2\},K')}\big)$ maps~$s_1$ to~$s'_1$. Then
\[
\big\{ \chi_{G,k}(s_1,v) \ \big\vert \ v \in K\big\} \cap \big\{ \chi_{G',k}(s'_1,v) \ \big\vert \ v \in K'\big\} = \emptyset.
\]
\end{lemma}

\begin{proof}
 This proof is an adaption of the proof of \cite[Lemma 7.28]{kiefer:phd}.
 For readability and since the dimension of the algorithm is fixed in this proof, we drop the subscript $k$, i.e., we write~$\chi_{G}$ instead of $\chi_{G,k}$.
 
 If~$\chi_{G}(s_1) \neq \chi_{G'}(s'_1)$, then the conclusion of the lemma is obvious.
 Thus, we can assume otherwise. We have already seen with Corollary~\ref{cor:separators} that~$2$-separators obtain other colors than other pairs of vertices.
 Thus, with Part~\ref{item:char:min:comp} of Lemma~\ref{lem:minimum:degree:implies:disjoin:min:seps}, we can assume that~$G$ and~$G'$ are already colored in a way such that the pairs $(s_1, s_2)$,~$(s_2, s_1)$,~$(s'_1, s'_2)$,~$(s'_2, s'_1)$ have colors different from the colors of pairs of vertices~$(t_1,t_2)$ with~$\{t_1, t_2\} \cap (K \cup K') \neq \emptyset$.
 Also, we may presuppose without loss of generality that~$s_1, s_2, s'_1, s'_2$ have colors different from colors of vertices that are not contained in any~$2$-separator of any of the graphs. 
 
 Moreover, by Lemma~\ref{lem:edges_conncomp}, we may assume that pairs of vertices that are both contained in $K$ have other colors than pairs of vertices for which just one vertex is contained in $K$ (and similarly for $K'$ in $G'$). 

 For better readability, in the rest of this proof, we drop the superscripts~$(\{s_1, s_2\},K)$ and~$(\{s'_1, s'_2\},K')$.
 That is, whenever we write $G_\top$, we mean $G_{\top}^{(\{s_1, s_2\},K)}$, and whenever we write $G'_\top$, we actually mean $G_{\top}'^{(\{s'_1, s'_2\},K')}$.
 Similarly, $\lambda_\top$ and $\lambda'_\top$ stand for their versions with the superscripts~$(\{s_1, s_2\},K)$ and~$(\{s'_1, s'_2\},K')$.   

 \begin{claim}
  For all~$i \geq 0$, all~$u,v \in K \cup \{s_1, s_2\}$ and all~$u',v' \in K' \cup \{s'_1, s'_2\}$ with $\{u,v\} \not\subseteq \{s_1,s_2\}$ and~$\{u',v'\} \not\subseteq \{s'_1,s'_2\}$, the following implication holds: 
  \begin{align}\label{imp:induction2} 
   \chi^i_{G_\top}(u,v) \neq \chi^i_{G'_\top}(u',v') \ \Rightarrow \ \chi^i_G(u,v) \neq \chi^i_{G'}(u',v').
  \end{align}
 \end{claim}
 \begin{claimproof}
  We proceed via induction on the number $i$ of iterations. For the base step $i = 0$, suppose that $\chi^0_{G_\top}(u,v) \neq \chi^0_{G'_\top}(u',v')$.
  Then either there is no isomorphism from~$G_\top[\{u,v\}]$ to $G'_\top[\{u',v'\}]$ mapping~$u$ to~$u'$ and~$v$ to~$v'$, or~$\lambda_\top(u,v) \neq \lambda'_\top(u',v')$. 
 
  In the first case, by definition of the graphs~$G_\top$ and~$G'_\top$ and their colorings, we immediately obtain~$\chi^0_G(u,v) \neq \chi^0_{G'}(u',v')$, since an isomorphism from~$G[\{u,v\}]$ to~$G'[\{u',v'\}]$ that maps~$u$ to~$u'$ and~$v$ to~$v'$ would induce an isomorphism from the graph $G_\top[\{u,v\}]$ to~$G'_\top[\{u',v'\}]$ with the same mappings.
  Thus, consider the second case, i.e., that~$\lambda_\top(u,v) \neq \lambda'_\top(u',v')$.
  Since we have that both~$\{u,v\} \not\subseteq \{s_1,s_2\}$ and $\{u',v'\} \not\subseteq \{s'_1,s'_2\}$ hold, we obtain that $(\lambda(u,v),2) = \lambda_{\top}(u,v) \neq \lambda'_{\top}(u',v') = (\lambda'(u',v'),2)$.
  Hence, $\lambda(u,v) \neq \lambda'(u',v')$, which implies that $\chi_G^0(u,v) \neq \chi_{G'}^0(u',v')$.
 
  For the induction step from $i$ to $i+1$, assume there exist vertices~$u,v \in K \cup \{s_1,s_2\}$ and~$u',v' \in K' \cup \{s'_1, s'_2\}$ that satisfy~$\{u,v\} \not\subseteq \{s_1, s_2\}$ and~$\{u',v'\} \not\subseteq \{s'_1, s'_2\}$, and furthermore~$\chi^i_{G_\top}(u,v) = \chi^i_{G'_\top}(u',v')$ and~$\chi^{i+1}_{G_\top}(u,v) \neq \chi^{i+1}_{G'_\top}(u',v')$. 
 
  Just like in the proof of \cite[Lemma 7.28]{kiefer:phd}, we can assume $\chi^i_{G_\top}(u,u) = \chi^i_{G'_\top}(u',u')$ and $\chi^i_{G_\top}(v,v) = \chi^i_{G'_\top}(v',v')$. Thus, there must be a color tuple~$(c_1,c_2)$ such that the sets
\begin{align*}
M \coloneqq{} & \big\{x \ \big\vert \ x \in V(G_\top) \setminus \{u,v\},
\\
& \phantom{\big\{x \mid{}} \left(\chi^i_{G_\top}(x,v), \chi^i_{G_\top}(u,x)\right) = (c_1,c_2)\big\}
\intertext{and} 
M' \coloneqq{} & \big\{x' \ \big\vert \ x' \in V(G'_\top) \setminus \{u',v'\}, \\
& \phantom{\{x' \mid{}} \big(\chi^i_{G'_\top}(x',v'), \chi^i_{G'_\top}(u',x')\big) = (c_1,c_2)\big\}
\intertext{do not have the same cardinality. Let}
D \coloneqq{} & \Big\{\big(\chi^i_{G}(x,v), \chi^i_{G}(u,x)\big) \ \Big\vert \ x \in M\Big\} \, \cup{} 
\\
& \Big\{\big(\chi^i_{G'}(x',v'), \chi^i_{G'}(u',x')\big) \ \big\vert \ x' \in M'\big\}.
\end{align*}
  We show that
  \begin{align*}
   \Big\{x \ \big\vert \ x \in V(G) \setminus \{u,v\}, \big(\chi^i_{G}(x,v), \chi^i_{G}(u,x)\big) \in D\Big\} &= M.
  \end{align*}
  The inclusion ``$\supseteq$'' is clear by the definitions of $M$ and $D$.
  For the inclusion ``$\subseteq$'', let $x \in V(G) \setminus \{u,v\}$ be a vertex with $\big(\chi^i_{G}(x,v), \chi^i_{G}(u,x)\big) \in D$.
  Then, by the definition of~$D$, there must be a vertex $x' \in M$ with $\big(\chi^i_{G}(x',v), \chi^i_{G}(u,x')\big) = \big(\chi^i_{G}(x,v), \chi^i_{G}(u,x)\big)$.
  Thus, $\left(\chi^i_{G_\top}(x',v), \chi^i_{G_\top}(u,x')\right) = (c_1,c_2)$ by the definition of $M$.
  Now, using the induction assumption for $i$ stated in \eqref{imp:induction2}, we obtain that $\chi^i_{G_\top}(x,v) = \chi^i_{G_\top}(x',v) = c_1$ and $\chi^i_{G_\top}(u,x) = \chi^i_{G_\top}(u,x') = c_2$.
  This implies $x \in M$.
 
  Similarly, we have
  \begin{align*}
   \Big\{x' \ \big\vert \ x' \in V(G') \setminus \{u',v'\}, \big(\chi^i_{G'}(x',v'), \chi^i_{G'}(u',x')\big) \in D\Big\} &= M'.
  \end{align*}
  Hence, these sets do not have equal cardinalities. Thus,~$\chi^{i+1}_G(u,v) \neq \chi^{i+1}_{G'}(u',v')$. 
 \end{claimproof}

 Having shown the claim, it suffices to show that 
 \[\big\{ \chi_{G_\top}(s_1,v) \ \big\vert \ v \in K\big\} \cap \big\{ \chi_{G'_\top}(s'_1,v') \ \big\vert \ v' \in K'\big\} = \emptyset.\]
 For this, it suffices to prove that~$\chi_{G_\top}(s_1) \neq \chi_{G'_\top}(s'_1)$ holds. 

 The graphs~$(G_{\top}^{(\{s_1,s_2\},K)}, \lambda_\top^{(\{s_1,s_2\},K)})$ and~$(G_{\top}'^{(\{s'_1,s'_2\},K')}, \lambda_\top'^{(\{s'_1,s'_2\},K')})$ are~$3$-con\-nected. Thus, since we have assumed that~$k$-WL determines orbits on the class of all arc-colored~$3$-connected graphs in~$\mathcal{G}$, if~$\chi_{G_\top}(s_1) = \chi_{G'_\top}(s'_1)$ held, there would have to be an isomorphism from~$(G_{\top}^{(\{s_1, s_2\},K)}, \lambda_\top^{(\{s_1, s_2\},K)})$ to~$(G_{\top}'^{(\{s'_1, s'_2\}',K')},\lambda_\top'^{(\{s'_1, s'_2\},K')})$ that maps~$s_1$ to~$s'_1$, a contradiction to our assumption.
\end{proof}

Using the lemma, we can show the following.

\begin{lemma}\label{lem:distinct_separatingpairs}
Let~$\mathcal{G}$ be a minor-closed graph class. Assume~$k \geq 2$ and suppose that $k$-WL determines orbits on the class of all arc-colored~$3$-connected graphs in~$\mathcal{G}$. Suppose~$(G,\lambda)$ and $(G',\lambda')$ with $G, G' \in \mathcal{G}$ are arc-colored~$2$-connected graphs of minimum degree at least~$3$ that are not $3$-connected and let~$\{s_1,s_2\} \subseteq V(G)$ and~$\{s'_1, s'_2 \} \subseteq V(G')$ be~$2$-separators of~$G$ and~$G'$, respectively. 

Suppose no isomorphism from $\big(G_{\top}^{\{s_1, s_2\}}, \lambda_\top^{\{s_1, s_2\}}\big)$ to $\big(G_{\top}'^{\{s'_1, s'_2\}},\lambda_\top'^{\{s'_1, s'_2\}}\big)$ maps $s_1$ to~$s'_1$. Then~$\chi_{G,k}(s_1,s_2) \neq \chi_{G',k}(s'_1,s'_2)$. 
\end{lemma}

\begin{proof} 
 This proof is similar to the proof of \cite[Lemma 10]{kieponschwei19}.

 If there is no $K$ with $(\{s_1,s_2\},K) \in P_0(G)$ or there is no $K'$ with $(\{s'_1,s'_2\},K') \in P_0(G')$, the statement of the lemma is trivial.

Otherwise, suppose that
\begin{align*}
\{K_1,\dots,K_t\} & = \big\{K \ \big\vert \ (\{s_1,s_2\},K) \in P_0(G)\big\}
\intertext{and that}
\{K'_1,\dots,K'_{t'}\} & = \big\{K' \ \big\vert \ (\{s'_1,s'_2\},K') \in P_0(G')\big\}.
\end{align*}
Since there is no isomorphism from~$\big(G_{\top}^{\{s_1, s_2\}}, \lambda_\top^{\{s_1, s_2\}}\big)$ to~$\big(G_{\top}'^{\{s'_1, s'_2\}}, \lambda_\top'^{\{s'_1, s'_2\}}\big)$ that maps~$s_1$ to~$s'_1$, there is an arc-colored graph~$(H,\lambda_H)$ such that the sets
\begin{align*}
I & \coloneqq \left\{i \ \bigg\vert \ \left(G_{\top}^{(\{s_1,s_2\},K_i)}, \lambda_\top^{(\{s_1,s_2\},K_i)}\right)_{(s_1)} \cong (H,\lambda_H)\right\}
\intertext{and}
I' & \coloneqq \left\{j \ \bigg\vert \ \left({G'}_{\top}^{(\{s'_1,s'_2\},K'_j)}, \lambda_\top'^{(\{s'_1,s'_2\},K'_j)}\right)_{(s'_1)} \cong (H,\lambda_H)\right\}
\end{align*}
have different cardinalities. Note that all~$K_i$ with~$i\in I$ and all~$K'_j$ with~$j\in I'$ have the same cardinality. By Lemma~\ref{lem:one:component:hanging:sep:pairs}, we know that for~$v\in K_i$ with~$i\in I$ and~$v'\in K'_{j}$ with~$j\notin I'$, we have~$\chi_{G,k}(s_1,v)\neq \chi_{G',k}(s'_1,v')$. Furthermore, using Lemma 
\ref{lem:bot:vertices:are:different:3con}, it holds for $v \in V\big(G^{\{s_1,s_2\}}_\top\big) \setminus \{s_1,s_2\}$ and $v' \notin V\big(G'^{\{s'_1,s'_2\}}_\top\big) \setminus \{s'_1,s'_2\}$ that $\big(\chi_{G,k}(s_1,v),\chi_{G,k}(v,s_2)\big) \neq \big(\chi_{G',k}(s'_1,v'),\chi_{G',k}(v',s'_2)\big)$. Indeed, $v$ is a vertex in $V(G) \setminus V(G_\bot)$ from which there is, for every~$i \in \{1,2\}$, a path to $s_i$ that does not contain any vertex in a $2$-separator apart from $s_i$, while the analogous statement does not hold for $v'$. 

Letting
\[C \coloneqq \big\{\big(\chi_{G,k}(s_1,v),\chi_{G,k}(v,s_2)\big) \ \big\vert \ \exists i \in I \text{ s.t.\ } v\in K_i \big\},\]
the sets
\begin{align*}
\big\{v \in V(G) \ &\vert \ \big(\chi_{G,k}(s_1,v),\chi_{G,k}(v,s_2)\big)\in C\big\}
\intertext{and}
\big\{v' \in V(G') \ &\vert \ \big(\chi_{G',k}(s'_1,v'),\chi_{G',k}(v',s'_2)\big)\in C\big\}
\end{align*}
do not have the same cardinality. We conclude that~$\chi_{G,k}(s_1,s_2) \neq \chi_{G',k}(s'_1,s'_2)$.
\end{proof}

We can now deduce that the coloring computed by $2$-WL is at least as fine as the~$\bot$-coloring.

\begin{corollary}\label{cor:induced:on:bot:graphs:is:finer:3con}
 Let~$\mathcal{G}$ be a minor-closed graph class. Assume~$k \geq 2$ and suppose that $k$-WL determines orbits on the class of all arc-colored~$3$-connected graphs in~$\mathcal{G}$. Suppose~$(G,\lambda)$ and $(G',\lambda')$ with $G, G' \in \mathcal{G}$ are arc-colored~$2$-connected graphs of minimum degree at least~$3$ that are not $3$-connected. If, for vertices $v_1,v_2 \in V(G_\bot)$ and~$v'_1,v'_2\in V(G'_\bot)$, it holds that~$\chi_{G_\bot,k}(v_1,v_2) \neq \chi_{G'_\bot,k}(v'_1,v'_2)$, then~$\chi_{G,k}(v_1,v_2) \neq \chi_{G',k}(v'_1,v'_2)$. 
\end{corollary}

\begin{proof}
Without loss of generality, we may assume that $\lambda(v_1,v_1) = \lambda'(v'_1,v'_1)$ and $\lambda(v_2,v_2) = \lambda'(v'_2,v'_2)$ hold. Furthermore, by Lemma~\ref{lem:bot:vertices:are:different:3con}, with respect to the colorings~$\chi_{G,k}$ and~$\chi_{G',k}$, the vertices in~$V(G_{\bot})$ and~$V(G'_{\bot})$ have other colors than the vertices in~$V(G) \setminus V(G_{\bot})$ and~$V(G') \setminus V(G'_{\bot})$. Thus, it suffices to show that the colorings~$\chi_{G,k}$ and~$\chi_{G',k}$ refine the colorings~$\lambda_\bot$ and~$\lambda'_\bot$, respectively, on the domains of those. Thus, by the definition of~$\lambda_\bot$ and~$\lambda'_\bot$ and since these maps are only defined on tuples $(u,v)$ for which either $u=v$ or~$\{u,v\}$ forms a $2$-separator or an edge, it suffices to show the following two statements.

 \begin{enumerate}
  \item If~$\{v_1,v_2\},\,\{v'_1,v'_2\}$ are $2$-separators and
   \[\ISOTYPE\big((G^{\{v_1,v_2\} }_{\top},\lambda^{\{v_1,v_2\}}_\top)_{(v_1)}\big) \neq \ISOTYPE\big((G'^{\{v'_1,v'_2\} }_{\top},\lambda'^{\{v'_1,v'_2\}}_\top)_{(v'_1)}\big),\]
   then~$\chi_{G,k}(v_1,v_2) \neq \chi_{G',k}(v'_1,v'_2)$. 
  \item If~$v_1=v_2$ and~$v'_1=v'_2$, or~$\{v_1,v_2\}\in E(G)$ and~$\{v'_1,v'_2\}\in E(G')$, but~$\{v_1,v_2\}$ and~$\{v'_1,v'_2\}$ are not $2$-separators, then, if~$\lambda_\bot(v_1,v_2) \neq \lambda'_\bot(v'_1,v'_2)$, we have $\chi_{G,k}(v_1,v_2) \neq \chi_{G',k}(v'_1,v'_2)$.
 \end{enumerate}

 The first part is exactly Lemma~\ref{lem:distinct_separatingpairs}.
 For the second part, from the definition of~$\lambda_\bot$ and~$\lambda'_\bot$, we obtain~$\lambda(v_1,v_2) \neq \lambda'(v'_1,v'_2)$, which implies~$\chi_{G,k}(v_1,v_2) \neq \chi_{G',k}(v'_1,v'_2)$. 
\end{proof}

We finally reach the proof of Theorem \ref{thm:reduction-to-3-connected}. 

\begin{proof}[Proof of Theorem \ref{thm:reduction-to-3-connected}]
 Let $\mathcal{G}$ be a minor-closed graph class and let $k \geq 2$. Let $G, G' \in \mathcal{G}$ be graphs with arc colorings $\lambda$ and $\lambda'$, respectively, such that $(G, \lambda) \not\cong (G', \lambda')$ and suppose that $k$-WL determines orbits on the class of all arc-colored $3$-connected graphs in $\mathcal{G}$. To show that $G$ and $G'$ are distinguished, we proceed just as outlined in \cite[Section 5]{kieponschwei19}, improving the lower bound on the dimension $k$ of the WL algorithm stated there from $3$ to $2$ by using our new results. 
 
 Namely, we prove the theorem by induction on~$|V(G)|+|V(G')|$.
 The base case~$|V(G)|+|V(G')| = 2$ is trivial.
 For the induction step, if both graphs are~$3$-connected, then the statement follows directly from the assumptions since ``determines orbits'' is a stronger assumption than ``distinguishes'' (see \cite[Observation 3.16]{kiefer:phd}).
 If exactly one of the graphs is~$3$-connected, then exactly one of them has a~$2$-separator and the statement follows from Corollary~\ref{cor:separators}.
 
 Thus, suppose that both graphs are not~$3$-connected and assume all pairs of arc-colored graphs~$(H,\lambda_H)$ and~$(H',\lambda_{H'})$ with~$|V(H)| + |V(H')| < |V(G)| + |V(G')|$ are distinguished.
 By \cite[Theorem 5]{kieponschwei19}, with the graph class~$\mathcal{G}$ from the theorem being the class of graphs containing every graph isomorphic to~$G$,~$G'$, or a minor of one of them, it suffices to show the statement for the case that~$G$ and~$G'$ are~$2$-connected.

 If~$G$ and~$G'$ do not have the same minimum degree, they are distinguished by their degree sequences.
 Now first suppose both~$G$ and~$G'$ have minimum degree at least $3$.

 By Corollary \ref{cor:induced:on:bot:graphs:is:finer:3con}, the partition of the vertices and arcs induced by the coloring~$\chi_{G,k}$ restricted to~$V(G_{\bot})$ is finer than the partition induced by~$\lambda_\bot$.
 Also, by Lemma \ref{lem:bot:vertices:are:different:3con}, vertices in $V(G_{\bot})$ obtain different colors than vertices in $V(G) \setminus V(G_{\bot})$.
 Similar statements hold in~$G'$. 

 By \cite[Lemma 4]{kieponschwei19}, the arc-colored graphs $(G_{\bot}, \lambda_\bot)$ and $(G'_{\bot},\lambda'_\bot)$ are not isomorphic and thus distinguished by~$k$-WL by the induction assumption.
 Hence, since it computes finer colorings on $(G, \lambda)$ and $(G', \lambda')$ than on $(G_\bot, \lambda_\bot)$ and on $(G'_\bot, \lambda'_\bot)$, respectively, the algorithm~$k$-WL also distinguishes~$(G, \lambda)$ from~$(G', \lambda')$. 
 
 By \cite[Lemma 15]{kieponschwei19}, the case that~$G$ and~$G'$ do not have minimum degree at least~$3$ reduces to the case of minimum degree at least~$3$, letting~$\mathcal{G}$ from the lemma be the class of graphs containing every graph isomorphic to~$G$,~$G'$, or a minor of one of them.
\end{proof}

\end{document}